\documentclass[12pt]{article}


\usepackage{amsmath}
\usepackage{amsthm}
\usepackage{amsfonts}
\usepackage{mathtools}
\usepackage{hyperref}
\usepackage[capitalise]{cleveref}
\usepackage{algorithm}%
\usepackage[noend]{algpseudocode}%
\usepackage{xfrac} 
\usepackage{xspace} 
\usepackage{booktabs}
\usepackage{multirow} 
\usepackage{url}
\usepackage{caption}
\usepackage{subcaption}
\usepackage{MnSymbol}
\usepackage{graphicx}
\usepackage{siunitx}
\usepackage{scicite}
\usepackage{times}
\usepackage{tikz} 
\usetikzlibrary{shapes.geometric}
\usepackage{xcolor}
\usepackage{xfrac} 
\usepackage{titletoc} 

\graphicspath{{figures}}

\newtheorem{theorem}{Theorem}
\newtheorem{definition}{Definition}
\newtheorem{lemma}{Lemma}

\newtheorem{corollary}{Corollary}

\newcommand{\spara}[1]{\smallskip\noindent{\bf #1}}
\newcommand{\mpara}[1]{\medskip\noindent{\bf #1}}

\newcommand{\indeg}[2]{\ensuremath{\mathsf{ideg}_{#1}(#2)}\xspace} 
\newcommand{\outdeg}[2]{\ensuremath{\mathsf{odeg}_{#1}(#2)}\xspace} 
\newcommand{\indegG}[1]{\ensuremath{\mathsf{ideg}(#1)}\xspace} 
\newcommand{\outdegG}[1]{\ensuremath{\mathsf{odeg}(#1)}\xspace} 
\newcommand{\sqindeg}[1]{\ensuremath{\mathsf{ideg}^2(#1)}\xspace} 
\newcommand{\sqoutdeg}[1]{\ensuremath{\mathsf{odeg}^2(#1)}\xspace} 
\newcommand{\neighsplus}[2]{\ensuremath{\overrightarrow{\Gamma_{#1}}(#2)}\xspace} 
\newcommand{\neighsminus}[2]{\ensuremath{\overleftarrow{\Gamma_{#1}}(#2)}\xspace} 
\newcommand{\outneighs}[1]{\ensuremath{\overrightarrow{\Gamma}(#1)}\xspace} 
\newcommand{\inneighs}[1]{\ensuremath{\overleftarrow{\Gamma}(#1)}\xspace} 
\newcommand{\degG}{\ensuremath{\mathsf{d}(G)}\xspace}
\newcommand{\degGR}{\ensuremath{\mathsf{d}^\mathrm{R}(G)}\xspace}

\newcommand{\nameA}{\textsc{DHCM}\xspace} 
\newcommand{\nameAlong}{Directed Hypergraph Configuration Model\xspace} 
\newcommand{\nameC}{\textsc{DHJM}\xspace} 
\newcommand{\nameClong}{Directed Hypergraph JOINT Model\xspace} 
\newcommand{\nullmodel}{\ensuremath{\Pi}\xspace} 
\newcommand{\nullmodelA}{\ensuremath{\Pi^{\nameA}}\xspace} 
\newcommand{\nullmodelC}{\ensuremath{\Pi^{\nameC}}\xspace} 
\newcommand{\nullprob}{\ensuremath{\pi}\xspace} 
\newcommand{\nullset}{\ensuremath{\mathcal{Z}}\xspace} 
\newcommand{\neighdistr}[1]{\ensuremath{\xi_{#1}}\xspace} 
\newcommand{\states}{\ensuremath{\mathcal{G}}\xspace} 
\newcommand{\observed}{\ensuremath{\mathring{H}}\xspace} 
\newcommand{\observedbip}{\ensuremath{\mathring{G}}\xspace} 
\newcommand{\biot}[1]{\ensuremath{\mathsf{T}_{#1}}\xspace} 
\newcommand{\biotG}{\ensuremath{\mathsf{T}}\xspace} 
\newcommand{\maxi}{\ensuremath{\mathsf{IN}}\xspace} 
\newcommand{\maxo}{\ensuremath{\mathsf{OUT}}\xspace} 
\newcommand{\suchthat}{\ensuremath{\mathrel{:}}\xspace} 
\newcommand{\typeB}[1]{\ensuremath{\mathsf{z}^B(#1)}\xspace} 
\newcommand{\butterfly}[1]{\ensuremath{\mathsf{b}(#1)}\xspace} 
\newcommand{\pso}{\ensuremath{\xrightarrow{\mathrm{PSO}}}\xspace} 
\newcommand{\rpso}{\ensuremath{\xrightarrow{\mathrm{RPSO}}}\xspace} 
\newcommand{\fromL}{\ensuremath{\overrightarrow{D}}\xspace} 
\newcommand{\fromR}{\ensuremath{\overleftarrow{D}}\xspace} 
\newcommand{\posL}{\ensuremath{\overrightarrow{L}}\xspace} 
\newcommand{\posR}{\ensuremath{\overleftarrow{R}}\xspace} 
\newcommand{\setdiff}[2]{\ensuremath{\mathsf{X}^+_{#1, #2}}\xspace}
\newcommand{\setidiff}[2]{\ensuremath{\mathsf{X}^-_{#1, #2}}\xspace}
\newcommand{\sizeof}[1]{\ensuremath{\left\lvert{#1}\right\rvert}\xspace}
\newcommand{\slfrac}[2]{\raisebox{5pt}{\ensuremath{\left.\raisebox{5pt}{\ensuremath{\displaystyle#1}}\middle/\raisebox{-5pt}{\ensuremath{\displaystyle#2}}\right.}}}
\newcommand{\dataset}{\ensuremath{\datasetsym}\xspace} 
\newcommand{\datasetsym}{\ensuremath{\mathcal{D}}\xspace} 

\newcommand{\supp}[2]{\ensuremath{\sigma_{#1}(#2)}\xspace} 
\newcommand{\algoA}{\textsc{NuDHy-Degs}\xspace}
\newcommand{\algoB}{\textsc{NuDHy-Degs-MH}\xspace}
\newcommand{\algoC}{\textsc{NuDHy-JOINT}\xspace}
\newcommand{\algoD}{\textsc{NuDHy-JOINT-MH}\xspace}
\newcommand{\algo}{\textsc{NuDHy}\xspace}
\newcommand{\redi}{\textsc{ReDi}\xspace}
\newcommand{\base}{\textsc{Base}\xspace}
\newcommand{\based}{\textsc{BaseD}\xspace}
\newcommand{\nullm}{\textsc{Null}\xspace}

\newenvironment{squishlist}
{\begin{list}{$\bullet$}
 {\setlength{\itemsep}{0pt}
     \setlength{\parsep}{3pt}
     \setlength{\topsep}{3pt}
     \setlength{\partopsep}{0pt}
     \setlength{\leftmargin}{1.5em}
     \setlength{\labelwidth}{1em}
     \setlength{\labelsep}{0.5em} } }
{\end{list}}

\hypersetup{
    colorlinks=true,
    linkcolor={red!50!black},
    filecolor={green!50!black},
    citecolor={green!50!black}, 
    urlcolor={blue!80!black},
}

\algdef{SE}[REPEATN]{RepeatN}{End}[1]{\algorithmicrepeat\ #1 \textbf{times}}{\algorithmicend}

\makeatletter
\ifthenelse{\equal{\ALG@noend}{t}}%
  {\algtext*{End}}
  {}%
\makeatother

\definecolor{hedgegc}{HTML}{d4fb79}
\definecolor{hedgepc}{HTML}{ff8ad8}
\definecolor{hedgeoc}{HTML}{ff9300}
\definecolor{black}{HTML}{2a2b2a}

\newcommand{\hedgeg}{%
  \tikz[baseline=-0.5ex]\node[regular polygon, regular polygon sides=8, draw=black, fill=hedgegc, inner sep=2pt, minimum size=4mm]{};}

\newcommand{\hedgeo}{%
  \tikz[baseline=-0.5ex]\node[regular polygon, regular polygon sides=8, draw=black, fill=hedgeoc, inner sep=2pt, minimum size=4mm]{};}

\topmargin 0.0cm
\oddsidemargin 0.2cm
\textwidth 16cm 
\textheight 21cm
\footskip 1.0cm


\newenvironment{sciabstract}{%
\begin{quote} \bf}
{\end{quote}}

\title{Uncovering the Higher-Order Structure of Social Systems}
\title{A Higher-Order Lens for Social Systems}

\author{
Giulia Preti,$^{1\ast}$ 
Adriano Fazzone,$^{1}$\\ 
Giovanni Petri,$^{2,1}$
Gianmarco De Francisci Morales$^{1}$\\
\\
\normalsize{$^{1}$CENTAI Institute}\\
\normalsize{Corso Inghilterra 3, 10138, Turin, Italy}\\
\normalsize{$^{2}$Network Science Institute, Northeastern University London}\\
\normalsize{Devon House, 58 St Katharine's Way, London, E1W 1LP, UK}
\\
\normalsize{$^\ast$Corresponding author; E-mail: giulia.preti@centai.eu}
}


\date{}

\begin{document}


\baselineskip24pt


\maketitle


\begin{sciabstract}
Despite the widespread adoption of higher-order mathematical structures su\-ch as hypergraphs, methodological tools for their analysis lag behind those for traditional graphs.
This work addresses a critical gap in this context by proposing two micro-canonical random null models for directed hypergraphs: the \nameAlong (\nameA) and the \nameClong (\nameC).
These models preserve essential structural properties of directed hypergraphs such as node in- and out-degree sequences and hyperedge head and tail size sequences, or their joint tensor.
We also describe two efficient MCMC algorithms, \algoA and \algoC, to sample random hypergraphs from these ensembles.

To showcase the interdisciplinary applicability of the proposed null models, we present three distinct use cases in sociology, epidemiology, and economics.
First, we reveal the oscillatory behavior of increased homophily in opposition parties in the US Congress over a 40-year span, emphasizing the role of higher-order structures in quantifying political group homophily.
Second, we investigate non-linear contagion in contact hyper-networks, demonstrating that disparities between simulations and theoretical predictions can be explained by considering higher-order joint degree distributions.
Last, we examine the economic complexity of countries in the global trade network, showing that local network properties preserved by \algo explain the main structural economic complexity indexes.

This work pioneers the development of null models for directed hypergraphs, addressing the intricate challenges posed by their complex entity relations, and providing a versatile suite of tools for researchers across various domains.
\end{sciabstract}

\section{Introduction}\label{sec:intro}

Higher-order mathematical structures such as hypergraphs and simplicial complexes have emer\-ged as powerful modeling tools that overcome the limitations of traditional graph models, which by construction are restricted to binary relations between entities \cite{battiston2020networks,battiston2021physics,bick2023higher}.
Indeed, their adoption is motivated by the observation that real-world scenarios often entail interactions among multiple entities simultaneously.
Examples span systems across multiple spatial and temporal scales, including cellular processes~\cite{ritz2015pathway}, protein interaction networks~\cite{feng2021hypergraph}, neural processing \cite{schneidman2003network,schneidman2006weak}, whole-brain activity \cite{giusti2015clique,petri2014homological}, co-authorship networks~\cite{patania2017shape,luo2022toward}, and contact networks~\cite{billings2019simplex2vec}.
Hypergraphs, in particular, are a natural and flexible generalization of graphs that model arbitrary $q$-ary relations among entities.
Directed hypergraphs further extend this concept by representing a link from a set of nodes (the \emph{head} of the hyperedge) to another set of nodes (its \emph{tail}).
Consider, for instance, the case of citations among scientific publications. 
In this case, each citation in a publication can be modeled as a directed hyperedge from the set of authors of the publication to the set of authors of the cited work.
The application of hypergraphs already spans diverse domains, from forecasting urban traffic~\cite{luo2022directed} and modeling Bitcoin transactions~\cite{ranshous2017exchange} to representing web structures for accurate page reputation scoring~\cite{berlt2010modeling}.
However, the current methodological tools for hypergraphs lag behind their counterparts in the graph world.

Understanding complex networks often involves comparing observed structures against mo\-dels that mimic random scenarios. 
Originating from Fisher's groundwork in hypothesis testing~\cite{fisher1936design}, this methodology has expanded into graph theory with the study of random graph null models~\cite{manly1995note}. 
These models define graph ensembles that retain only selected features of the observed graph while being random in any other respect~\cite{squartini2015unbiased}.
They are key tools in graph theory because they allow us to assess the significance of the observed properties of real-world networks, by comparing them to those obtained from randomly generated graphs \cite{schlauch2015influence}.
This comparative analysis unveils the influence of local node features versus additional factors on network properties, and aids in identifying structural irregularities within the networks\cite{fischer2015sampling}.
Furthermore, it enables assessing the role of specific properties in the presence of specific empirically-observed topological and structural features.

Akin to any hypothesis test, the selection of topological features to preserve in these ensembles significantly influences the conclusions drawn from the analyses.
Common approaches preserve the degree sequence~\cite{verhelst2008efficient,strona2014fast} and the joint degree sequence~\cite{stanton2012constructing,gjoka2015construction}.
Random graph ensembles can be categorized into two fundamental families: micro-canonical and canonical~\cite{cimini2019statistical}.
Micro-canonical ensembles preserve the properties in a `hard' fashion, i.e., each of the graphs in the ensemble satisfies the imposed constraints.
Conversely, canonical ensembles preserve the properties in a `soft' fashion: they maintain the constraints in expectation across the graphs in the ensemble.
The choice between these approaches should be based on principled criteria, considering factors such as the characteristics of the observed data.
Canonical ensembles, for instance, are better suited for scenarios where data may contain measurement errors or noise since they maintain constraints on an average basis.

Despite a vast literature on canonical and micro-canonical graph ensembles~\cite{kannan1999simple,tabourier2011generating,strona2014fast,saracco2015randomizing,squartini2015unbiased,boroojeni2017generating,aksoy2017measuring,del2010efficient}, little attention has been devoted to defining null models for directed hypergraphs and developing efficient sampling algorithms for their corresponding ensembles.
Existing work in the realm of hypergraphs predominantly focuses on configuration models for undirected hypergraphs~\cite{saracco2022entropy,do2020structural,barthelemy2022class,guo2016non,wang2010evolving,chodrow2020configuration,zeng2023hyper}, introduces max entropy models~\cite{sun2021higher}, or generalizes the concept of dK-series to undirected hypergraphs~\cite{nakajima2021randomizing,miyashita2023randomizing}.

Transitioning to developing null models for directed hypergraphs brings unique challenges due to their intricate entity relations, characterized by a broader set of properties---and thus constraints. 
Parameters such as the number of nodes, number of hyperedges, head and tail size sequences, and the frequency of nodes within hyperedge heads or tails should be taken into consideration when defining these models.
Recently, Kim et al.~\cite{kim2022reciprocity} proposed two samplers for generating directed hypergraphs in the canonical ensemble with prescribed head and tail size sequences. 
However, due to certain design choices aimed at improving efficiency, the generated hypergraphs often exhibit structural dissimilarities from the real-world ones.

This work proposes two micro-canonical null models for directed hypergraphs.
The first model, \nameAlong (\nameA), preserves the in- and out-degree sequences of the nodes, as well as the head-size and tail-size sequences of the hyperedges.
The second model, called \nameClong (\nameC), preserves the \emph{joint out-in degree tensor}, which encodes information about the in- and out-degree of the nodes involved in hyperedges of specific head and tail sizes.
We also describe two samplers, \algoA and \algoC, to efficiently draw random hypergraphs from the corresponding ensembles.
Both samplers are Markov Chain Monte Carlo algorithms based on Metropolis-Hastings and employ targeted shuffling operations for traversal within the Markov graph.

We demonstrate the wide interdisciplinary applicability of the proposed suite of null models by showcasing three distinct use cases in sociology, epidemiology, and economics, respectively.
The first one shows the role of higher-order structures in quantifying genuine political group homophily by uncovering an oscillatory behavior of increased homophily in opposition parties in the US Congress across a 40-years span. 
The second one focuses on nonlinear contagion in contact hyper-networks, demonstrating that the disparities observed between simulations in the hyper-networks and theoretical predictions can be explained when considering higher-order joint degree distribution, thus shedding some light on the underlying mechanisms governing these phenomena.
The third and final one studies the economic complexity of countries in the global trade network, and shows that the main structural economic complexity indexes~\cite{hidalgo2007product,tacchella2012metrics,sciarra2020reconciling} can be almost entirely explained by local properties of the network preserved by \algo.
A more comprehensive evaluation of \algo with respect to other existing null models and related samplers is provided in \Cref{ax:additional}. 

\section{Null Models for Weighted Directed Hypergraphs}\label{sec:nulls}

We consider weighted directed hypergraphs of the form $H \doteq (V,E)$, where $V = \{v_1,\ldots, v_n\}$ is a set of nodes and $E = \{e_1,\ldots,e_m\}$ is a multi-set of \emph{directed hyperedges} where the multiplicity of each hyperedge represents its weight.
Each hyperedge $e \doteq (h, t) \in E$ consists of a \emph{head} $h$ and a \emph{tail} $t$ such that $h,t \subseteq V$. 
The \emph{size} of $e$ is the sum of the sizes of its head and tail, $\sizeof{e} = \sizeof{h} + \sizeof{t}$.
The \emph{in-degree} of a node $v$ in $H$, denoted as $\indeg{H}{v}$, is the number of tails that contain $v$;
the \emph{out-degree} of $v$ in $H$, denoted as $\outdeg{H}{v}$, is the number of heads that contain $v$.

\begin{figure}[th!]
	\centering
	\includegraphics[width=\columnwidth]{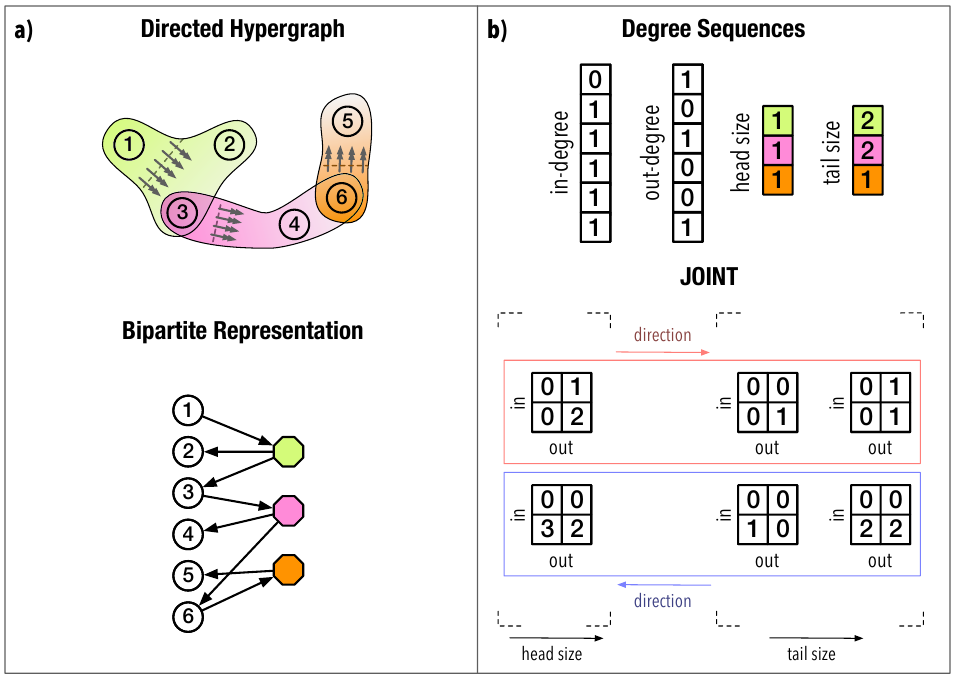}
	\caption{
	\textbf{Construction of directed hypergraph configuration models.}
	\textbf{a)} A directed hypergraph (top) and its representation as a bipartite graph (bottom). 
	The left vertices (circles) correspond to hypergraph nodes, while the right vertices (hexagons) correspond to hyperedges. 
	Dotted lines in the directed hypergraph separate the head and tail of each hyperedge, with arrows pointing towards the tail.
	\textbf{b)} The characteristics of the observed hypergraph preserved by \nameA and \nameC: left and right in- and out-degree sequences (top), and JOINT (bottom). The right in-degree sequence corresponds to the head-size sequence, while the right out-degree sequence corresponds to the tail-size sequence.}
	\label{fig:toy}
\end{figure}

A weighted directed hypergraph $H$ can be represented, without loss of information, as a directed bipartite graph $G \doteq (L, R, D)$, where $L = V$ (\emph{left} vertices), $R = E$ (\emph{right} vertices),\footnote{For clarity, we refer to nodes when talking about the elements of the hypergraph and to vertices when talking about the elements of the bipartite graph.}
and $D$ is a set of triplets defined as follows:
\begin{flalign*}
	&\forall e \doteq (h,t) \in E, \forall v \in h \implies (v,e,+1) \in D;& \\
	&\forall e \doteq (h,t) \in E, \forall v \in t \implies (v,e,-1) \in D.&
\end{flalign*}
Each triplet $(v,e,d)$ is a \emph{directed edge} involving a node $v$ and a hyperedge $e$, where $d$ denotes the \emph{direction} of the edge: $+1$ indicates that the edge goes from a left vertex to a right vertex whereas $-1$ indicates the opposite direction.
We denote with $\fromL$ the set of pairs of vertices connected by an edge with direction $d = +1$, i.e., $(v, \alpha) \in \fromL \iff (v, \alpha, +1) \in D$.
Similarly, we denote with $\fromR$ the set of pairs of vertices connected by an edge with opposite direction $d = -1$. 
For any vertex $v \in L$, we denote with $\neighsplus{G}{v}$ the set of vertices $\alpha \in R$ such that $(v,\alpha) \in \fromL$, and with $\neighsminus{G}{v}$ the set of vertices $\alpha \in R$ such that $(v,\alpha) \in \fromR$. 
The size of $\neighsplus{G}{v}$ is called the \emph{out-degree} of $v$, while the size of $\neighsminus{G}{v}$ is the \emph{in-degree} of $v$.
Similarly, we can define the in-degree (resp. out-degree) of a vertex $\alpha \in R$ as the size of the set of vertices $v \in L$ such that $(v,\alpha) \in \fromR$ (resp. $(v,\alpha) \in \fromL$).
\Cref{fig:toy}\textbf{a} shows an example of a directed hypergraph and the corresponding bipartite graph.

To encode the information of both the in- and out-degree of the vertices connected by the edges in $G$, we define the bipartite \emph{Joint Out-In degree Tensor (JOINT) $\biot{G}$}.
\begin{definition}[JOINT]\label{def:biot}
Let $G \doteq (L, R, D)$ be a directed bipartite graph, and $\maxi_L = \max\limits_{v \in L}{\sizeof{\neighsminus{G}{v}}}$ and $\maxo_L = \max\limits_{v \in L}{\sizeof{\neighsplus{G}{v}}}$ be the largest in-/out-degree of a vertex in $L$, respectively.
$\maxi_R$ and $\maxo_R$ are similarly defined for $R$.
The bipartite \emph{Joint Out-In degree Tensor (JOINT) $\biot{G}$ of $G$} is a 5-dimensional tensor with size $\maxi_L+1 \times \maxo_L+1 \times \maxi_R+1 \times \maxo_R+1 \times 2$, and whose $(i,j,k,l,d)$-th entry $\biot{G}[i,j,k,l,d]$ for $i \in \left[0,\maxi_L\right]$, $j \in \left[0,\maxo_L\right]$, $k \in \left[0,\maxi_R\right]$, $l \in \left[0,\maxo_R\right]$, and $d \in \{+1,-1\}$, is the number of edges with direction $d$ connecting a left vertex with in-degree $i$ and out-degree $j$ and a right vertex with in-degree $k$ and out-degree $l$, i.e., 
\begin{align*}
	& \biot{G}[i,j,k,l,d] \doteq \sizeof{ \left\{(v,\alpha,d) \in D \suchthat \sizeof{\neighsminus{G}{v}} = i \wedge \sizeof{\neighsplus{G}{v}} = j \wedge \sizeof{\neighsplus{G}{\alpha}} = k \wedge \sizeof{\neighsminus{G}{\alpha}} = l \right\} }.
\end{align*}
\end{definition}

\spara{Null model.}
Let $\mathcal{P}$ be a set of properties of an observed hypergraph \observed.
A \emph{null model} $\nullmodel \doteq (\nullset, \nullprob)$ is a tuple where
\nullset is the set of all the hypergraphs where each $P$ in $\mathcal{P}$ holds (i.e., the ensemble of hypergraphs that preserve these properties), and \nullprob is a probability distribution over \nullset. 

The first null model proposed, called \nameAlong (\nameA) and denoted as $\nullmodelA \doteq (\nullset^{\nameA}, \nullprob)$, preserves the following four properties:
\begin{squishlist}
	\item[\textbf{P1:}] head-size sequence $\left[|h_1|, \dotsc, |h_m|\right]$;
	\item[\textbf{P2:}] tail-size sequence $\left[|t_1|, \dotsc, |t_m|\right]$;
	\item[\textbf{P3:}] in-degree sequence $\left[\indeg{\observed}{v_1}, \dotsc, \indeg{\observed}{v_n}\right]$;
	\item[\textbf{P4:}] out-degree sequence $\left[\outdeg{\observed}{v_1}, \dotsc, \outdeg{\observed}{v_n}\right]$.
\end{squishlist}

Each $H \in \nullset^{\nameA}$ has the same head-size, tail-size, in-degree, and out-degree sequences of $\observed$.
Preserving \textbf{P1} and \textbf{P2} is equivalent to preserving the sequences of the out- and in-degrees of the vertices in $R$ in the bipartite graph representation \observedbip of \observed, and automatically preserves the sequence of the sizes of the hyperedges in \observed.
Preserving \textbf{P3} and \textbf{P4} corresponds to preserving the sequences of the in- and out-degrees of the vertices in $L$ in $\observedbip$, and automatically preserves the number of times each node is contained in a tail and a head of a hyperedge in \observed.
The in-degree, out-degree, head-size, and tail-size sequences of the directed hypergraph in \Cref{fig:toy}$\textbf{a}$ are illustrated in \Cref{fig:toy}$\textbf{b}$.

The {\nameA} can be regarded as a specific instance of the annotated hypergraph configuration model~\cite{chodrow2020annotated}, wherein the input is a degenerate hypergraph. In these hypergraphs, each node can assume multiple roles, which in our context, manifests as a node occupying both head and tail positions within a hyperedge.

The second null model proposed, called \nameClong (\nameC) and denoted as $\nullmodelC \doteq (\nullset^{\nameC}, \nullprob)$, preserves the following property:
\begin{squishlist}
	\item[\textbf{P5:}] JOINT $\biot{\observedbip}$.
\end{squishlist}

Preserving \textbf{P5} also preserves \textbf{P1}-\textbf{P4}.

In fact, for every $\bar{a} \in \left[0,\maxi_R\right]$, $\bar{b} \in \left[0,\maxo_R\right]$, $\bar{c} \in \left[0,\maxi_L\right]$, $\bar{d} \in \left[0,\maxo_L\right]$, 
it holds
\begin{flalign*}
	&\text{\textbf{P5} preserves \textbf{P1}:} \quad
	\sizeof{ \left\{ \alpha \in R \suchthat \indeg{\observed}{\alpha} = \bar{a} \right\} } = \sfrac{1}{\bar{a}} \sum_{i=0}^{\phantom{.}\maxi_L\phantom{.}} \sum_{j=0}^{\maxo_L} \sum_{l=0}^{\maxo_R} \biot{\observedbip}[i,j,\bar{a},l,+1] \\
	&\text{\textbf{P5} preserves \textbf{P2}:} \quad
	\sizeof{ \left\{ \alpha \in R \suchthat \outdeg{\observed}{\alpha} = \bar{b} \right\} } = \sfrac{1}{\bar{b}} \sum_{i=0}^{\phantom{.}\maxi_L\phantom{.}} \sum_{j=0}^{\maxo_L} \sum_{k=0}^{\phantom{.}\maxi_R\phantom{.}} \biot{\observedbip}[i,j,k,\bar{b},-1] \\
	&\text{\textbf{P5} preserves \textbf{P3}:} \quad
	\sizeof{ \left\{ v \in L \suchthat \indeg{\observed}{v} = \bar{c} \right\} } = \sfrac{1}{\bar{c}} \sum_{j=0}^{\maxo_L} \sum_{k=0}^{\phantom{.}\maxi_R\phantom{.}} \sum_{l=0}^{\maxo_R}\biot{\observedbip}[\bar{c},j,k,l,-1] \\
	&\text{\textbf{P5} preserves \textbf{P4}:} \quad
	\sizeof{ \left\{ v \in L \suchthat \outdeg{\observed}{v} = \bar{d} \right\} } = \sfrac{1}{\bar{d}} \sum_{i=0}^{\phantom{.}\maxi_L\phantom{.}} \sum_{k=0}^{\phantom{.}\maxi_R\phantom{.}} \sum_{l=0}^{\maxo_R} \biot{\observedbip}[i,\bar{d},k,l,+1]
\end{flalign*}

To simplify the visualization of the JOINT of the directed hypergraph in \Cref{fig:toy}$\textbf{a}$, \Cref{fig:toy}$\textbf{b}$ illustrates
\emph{(i)} for each edge direction $d$ (differentiated using arrows colored differently and pointing to the two directions) and head size $k$ (there is only the head size $1$), the 2-dimensional array of size $\maxi_L+1 \times \maxo_L+1$, whose $(i,j)$-th entry indicates the number of edges with direction $d$ connecting left vertices with in-degree $i$ and out-degree $j$ to right vertices with in-degree $k$;
and \emph{(ii)} for each edge direction $d$ and tail size $l$ (there are two tail sizes, i.e., $1$ and $2$), the 2-dimensional array of size $\maxi_L+1 \times \maxo_L+1$, whose $(i,j)$-th entry indicates the number of edges with direction $d$ connecting left vertices with in-degree $i$ and out-degree $j$ to right vertices with out-degree $l$.
The $(0,0)$ entry of each array is in the upper-left corner. 
In the example of \Cref{fig:toy}, there are $3$ left vertices with in-degree $1$ and out-degree $0$, i.e., $2$, $4$, and $5$, each of which has $1$ in-going edge from a right vertex with in-degree (and thus head size) $1$. Therefore, the bottom-left cell of the leftmost $2$-dimensional array for direction $\leftarrow$ contains the number $3$.
Similarly, there are $2$ left vertices with in-degree $1$ and out-degree $1$, i.e., $3$ and $6$, each of which with $1$ in-going edge from a right vertex with out-degree (and thus tail size) $2$. Therefore, the bottom-right cell of the rightmost $2$-dimensional array for direction $\leftarrow$ contains the number $2$.
For each direction $d$, the sum of the 2-dimensional arrays across head sizes equals the sum of 2-dimensional arrays across tail sizes. 
These summed arrays represent the number of edges of direction $d$ connected to left vertices with a specific in- and out-degree.
\section{Results}

In this section, we present three distinct case studies that employ \algoA and \algoC, showcasing their versatility in analyzing various types of data models. 
While originally designed for generating random directed hypergraphs, these samplers extend their applicability to producing random undirected hypergraphs and (directed) bipartite graphs.
By conceptualizing an undirected hypergraph as a directed hypergraph where heads and tails coincide, \algoA produces undirected hypergraphs with prescribed node degree and hyperedge size distributions, while \algoC produces undirected hypergraphs with prescribed joint node degree and hyperedge size distributions. 
Moreover, by recognizing the lossless mapping between (directed) hypergraphs and (directed) bipartite graphs, \algoA and \algoC can produce random (directed) bipartite graphs with specified left and right degree sequences, and joint degree matrices.
The three case studies explore different domains, each utilizing a distinct data model. 
The first study delves into understanding group affinity within political parties through the analysis of sponsorship and co-sponsorship relations in the US Congress.
We reveal nuanced patterns that evade detection when solely examining unnormalized affinity values. 
The second study focuses on validating a recently proposed non-linear social contagion model for undirected hypergraphs, demonstrating how the JOINT can explain deviations from the theoretical framework in the observed data. 
Lastly, the third study investigates the impact of certain node properties preserved by our null models, namely degree and joint degree distribution, on the economic competitiveness of countries measures via metrics defined for bipartite country-to-product trade networks.
Here, we demonstrate that the JOINT adequately preserves rankings according to each measure of competitiveness considered. 
These case studies not only highlight the value of \algo as a lens but also yield valuable insights within each domain, thus enriching our understanding of these complex social systems.

\subsection{Group Affinity in Collaborative Hyper-Networks}\label{sec:app_congress}

The concept of \emph{homophily} describes an individuals' tendency to connect with those who share similar traits.
Previous studies have consistently found this inclination across various individual features, such as gender, age, ethnicity~\cite{moody2001race}, political views, and religious beliefs~\cite{loomis1946political}. 
From its origins in sociology~\cite{lazarsfeld1954friendship}, it later became a fundamental notion in network science, because of its natural relation to the connectedness of a system. 
Indeed, the focus of homophily research is to grasp how these similarities among individuals shape their network of interactions~\cite{mcpherson2001birds}.

Homophily can be extended to higher-order relations.
Called \emph{group affinity}~\cite{veldt2023combinatorial}, it measures the extent to which individuals in a certain class participate in groups with a certain number of individuals from the same class.
It offers insights into whether participation of an individual in a group is driven by a herding behavior conditional on trait similarity.

Here, we delve into the group affinity within the \emph{Republican} and \emph{Democratic} parties, known as \emph{partisanship}, using directed hypergraphs to represent sponsor-cosponsor relationships in Senate bills (\textsc{S-bills}) and House of Representatives bills (\textsc{H-bills}) from the $93^{\mathrm{rd}}$ to the $108^{\mathrm{th}}$ Congresses~\cite{fowler2006legislative}. 
We focus on bills and joint resolutions, given their potential to become law upon passage.
Each bill is represented as a directed hyperedge, with the bill's sponsor (the legislator who introduced the bill) forming the head, and the set of legislators supporting the bill as co-sponsors forming the tail.

In contrast to roll-call voting, where legislators must cast a vote, bill co-sponsorship data offers a nuanced view of collaboration behavior as they reflect voluntary expressions of interest in supporting specific bills, and reveal explicit cooperation that might not be fully captured in voting records.
Thus, co-sponsorship hyper-networks provide a rich account of legislative dynamics.
\Cref{tbl:data_senate_house} in \Cref{ax:data} reports some statistics of the hyper-networks corresponding to each session of the Congress, for both the House and the Senate.

Formally, we study group affinity in a hypergraph $H \doteq (V, E)$ whose nodes are partitioned in a set of classes $X_1, \dotsc, X_c$.
Let us consider hyperedges of the same size, i.e., we examine each $k$-uniform sub-hypergraph $H^k \doteq (V, E^k \doteq \{(h,t) \in E \, \wedge \, \sizeof{h} + \sizeof{t} = k\})$ for each size $k$, separately.
Taking inspiration from Veldt et al. \cite{veldt2023combinatorial}, we define a notion of the group affinity for directed hypergraphs.
For class $X_i$, the $(\alpha, \beta, k)$-affinity represents the extent to which a node of class $X_i$ belongs to the tail of a hyperedge of size $k$ where $\alpha$ of the $\beta$ nodes in the head are from class $X_i$:
\begin{equation}\label{eq:aff_h}
\textsf{A}_{\alpha, \beta, k}(X_i) = \frac{\sum_{v \in X_i}\indeg{H^k}{v, \alpha, \beta, X_i}}{\sum_{v \in X_i}\indeg{H^k}{v, \beta}}
\end{equation}

where 
$\indeg{H^k}{v, \alpha, \beta, X_i} = \sizeof{ \{ (h,t) \in E^k \suchthat \sizeof{ \left\{ u \in h \, \wedge \, u \in X_i \right\} } = \alpha \, \wedge \, \sizeof{ h } = \beta \, \wedge \, v \in t \} }$, and 
$\indeg{H^k}{v, \beta} = \sizeof{ \{ (h,t) \in E^k \suchthat \sizeof{ h } = \beta \, \wedge \, v \in t \} }$.


To determine whether the affinity score for a certain class $X_i$ is significantly high or low, we compare it against \textbf{(i)} the average score $\bar{\textsf{A}}_{\alpha, \beta, k}(X_i)$ measured in a collection of random hypergraphs sampled by \algoA and \algoC, and \textbf{(ii)} a baseline score adapted from Veldt et al.~\cite{veldt2023combinatorial}, which represents a null probability of $k$-interactions with head size $\beta$.
This baseline $(\alpha, \beta, k)$-affinity score for class $X_i$ is
\begin{align}\label{eq:base_h}
\textsf{B}_{\alpha, \beta, k}(X_i) = \frac{
	\overset{\textcolor{red}{\text{(1)}}}{\displaystyle\binom{|X_i|}{\alpha}}
	\overset{\textcolor{red}{\text{(2)}}}{\displaystyle\binom{n - |X_i|}{\beta - \alpha}}
	\overset{\textcolor{red}{\text{(3)}}}{\displaystyle\binom{n - 1}{k - \beta - 1}}}
{
\underset{\textcolor{red}{\text{(4)}}}{\displaystyle\binom{n}{\beta}}
\underset{\textcolor{red}{\text{(5)}}}{\displaystyle\binom{n-1}{k - \beta -1}}
}
= \frac{\displaystyle\binom{|X_i|}{\alpha}\displaystyle\binom{n - |X_i|}{\beta - \alpha}}{\displaystyle\binom{n}{\beta}}\,,
\end{align}
where $n$ is the total number of nodes in $H$,
\textcolor{red}{(1)} and \textcolor{red}{(2)} represent the number of ways to choose a head of size $\beta$ with $\alpha$ elements of class $X_i$, and the remaining elements from class different from $X_i$;
\textcolor{red}{(3)} represents the number of ways to choose a tail of size $k - \beta$, under the assumption that the same node can appear both in the head and in the tail of the hyperedge, having already selected one node of the tail; and
\textcolor{red}{(4)} and \textcolor{red}{(5)} represent the number of ways to form a $k$-size hyperedge with head size $\beta$, under the assumption that the same node can appear both in the head and in the tail of the hyperedge, having already selected one node of the tail.


The specific case where the head of each hyperedge has size $1$ is of particular practical interest for studying the co-sponsoring of congress bills, which are sponsored by a single member of Congress and supported by any number of co-sponsors. 
Then, \Cref{eq:aff_h} reduces to
\begin{align}\label{eq:affHone}
\textsf{A}_k(X_i) = \frac{\sum_{v \in X_i}\indeg{H^k}{v, X_i}}{\sum_{v \in X_i}\indeg{H^k}{v}}\,,
\end{align}
where 
$\indeg{H^k}{v, X_i} = \left|\left\{ ([u],t) \in E^k \suchthat u \in X_i \, \wedge \, v \in t \right\}\right|$.
%
%
\Cref{eq:affHone} can be seen as the probability that a node of class $X_i$ joins the tail of a hyperedge of size $k$, knowing that the head is of class $X_i$. 
%
As $\alpha = \beta = 1$, the baseline expressed by \Cref{eq:base_h} reduces to
\begin{equation}\label{eq:baseHone}
\textsf{B}_k(X_i) = \frac{|X_i|}{n}\,.
\end{equation}


In the case of directed hypergraphs with head-sequence $[1, \dotsc, 1]$, we also measure the \emph{homophily} $\mathsf{HO}$ of class $X_i$ as ~\cite{park2007distribution}:
\begin{equation}\label{eq:homo}
\mathsf{HO}(X_i) = \slfrac{ \mathsf{m}(X_i) }{ \bar{\mathsf{m}}(X_i) }\,
\end{equation}

where 
$\mathsf{m}(X_i) = \sfrac{1}{\sizeof{t}} \sum\limits_{\substack{e \doteq ([u], t) \in E\\u \in X_i}} \sizeof{ \{v \in t \suchthat v \in X_i\} }$
is measured in the observed hypergraph, 
and $\bar{\mathsf{m}}$ is the average across the samples generated by \algo.  



\begin{figure*}[t!]
    \begin{subfigure}{.48\linewidth}
      \centering
      \includegraphics[width=\linewidth]{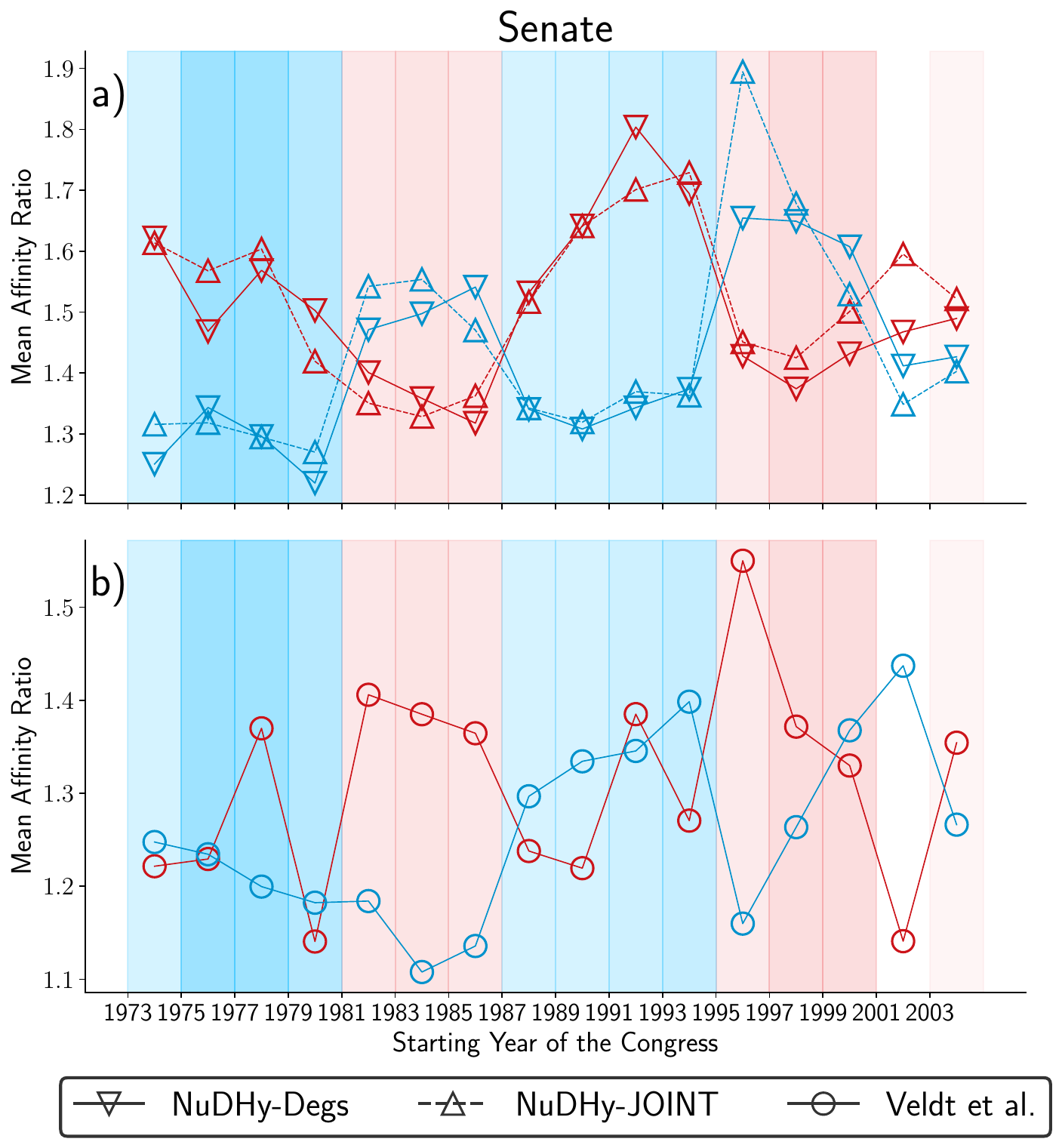}
    \end{subfigure}
    \begin{subfigure}{.50\linewidth}
      \centering
      \includegraphics[width=\linewidth]{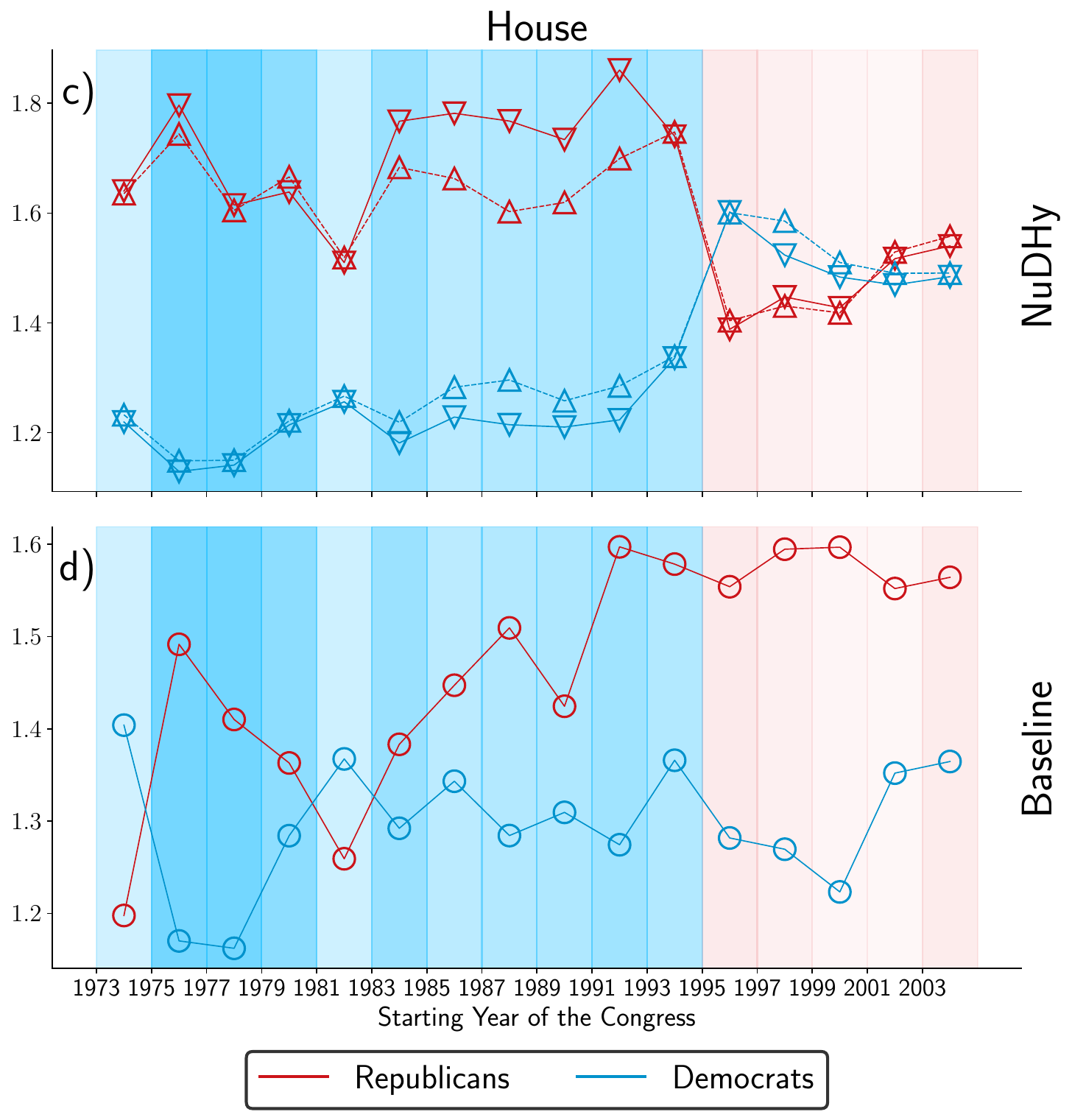}
    \end{subfigure}
    \caption{\textbf{Mean affinity ratios in the US Congress co-sponsored bills.}
    We show results for \Cref{eq:affHone} divided by the mean values in $33$ samples for \algoA and \algoC for the US Senate (\textsc{S-bills)}, panel (a)) and House (\textsc{H-bills)}, panel (c)). 
    For comparison, we show the values of \Cref{eq:affHone} divided by \Cref{eq:baseHone} for Veldt et al. again for the US Senate (panel (b)) and House (panel (d)). 
    The colors indicate \emph{Democrats} (blue) and \emph{Republicans} (red). 
    We report the average ratios over $k = 2, \cdots, 14$.}
    \label{fig:affinity_h_congress_sh}
\end{figure*}

\Cref{fig:affinity_h_congress_sh} illustrates the mean affinity ratios for \emph{Democrats} and \emph{Republicans} in each Congress, for \textsc{S-bills} and \textsc{H-bills}.
The mean affinity ratios for \algoA and \algoC are computed by averaging the terms $\textsf{A}_k(X_i)/\bar{\textsf{A}}_k(X_i)$ over all hyperedge sizes $k = 2, \ldots, 14$.
The mean affinity ratios for Veldt et al. are obtained by averaging the terms $\textsf{A}_k(X_i)/\textsf{B}_k(X_i)$ over $k = 2, \ldots, 14$.
For each Congress, the background color indicates which party held the majority (shades of red for Republicans and shades of blue for Democrats). 
The intensity of color corresponds to the size of the majority, with darker shades indicating a larger margin.

These plots show that we can draw similar conclusions when comparing the affinity values against the null models obtained by \algoA and \algoC, whereas the baseline scores offer divergent insights.
The panels corresponding to \algoA/\algoC reveal a clear trend (\Cref{fig:affinity_h_congress_sh}\textbf{a}-\textbf{c}): when one party holds the majority of the seats (indicated by the corresponding color in the background), the opposing party exhibits higher group affinity. 
This pattern indicates a more unified front, likely in pursuit of collecting the required minimum support to pass their bills.

In instances where Republicans held the majority, Democrats consistently maintained a group affinity $40\%$ to $60\%$ higher than expected, with the exception of the $104^{\mathrm{th}}$ Senate Con\-gress, coinciding with the first occurrence of a Republican majority in both chambers since 1953 and a government shutdown in the US.
Conversely, during Democratic majority periods, Republicans exhibited notably higher group affinity, particularly leading up to the $104^{\mathrm{th}}$ session, and especially in the House.
Data shows a lower number of bills sponsored by Republicans and a tendency to co-sponsor fewer bills. However, when they decide to co-sponsor a bill, it is more likely to be a bill presented by a Republican.
This pattern is consistent with past observations that ``Republicans have consistently valued doctrinal purity over pragmatic deal-making''~\cite{grossmann2015ideological}.

In contrast, the baseline yields generally lower affinity values and tends to attribute higher group affinity to the Republicans party, irrespective of the time period.
An exception is evident in the $107^{\mathrm{th}}$ Senate Congress starting in 2001, where the mean affinity ratio for Republicans only slightly surpasses $1$, whereas the ratio for Democrats is roughly $1.45$. 
During this session of the Congress, there is a discernible disparity in co-sponsorship tendencies between Republicans and Democrats. On average, a Republican member tended to co-sponsor fewer bills, averaging around $119$, while their Democratic counterparts engaged in a higher rate of co-sponsorship, averaging around $195$ bills. The baseline score, which fails to consider each party's relative prevalence and each legislator's individual co-sponsoring opportunities, inadequately acknowledges the significance of Republican co-sponsoring behaviors for bills sponsored by Republicans versus those sponsored by Democrats.
Our null models, instead, maintain these characteristics of the data intact, while randomizing the rest.

In addition, we also found a clear shift in co-sponsoring behavior within the House around the $104^{\mathrm{th}}$/$105^{\mathrm{th}}$ Congress (1995/1997). 
During this period, Democrats began to consistently co-sponsor a greater number of bills sponsored by Democrats compared to Republicans (see \Cref{tbl:data_senate_house} in \Cref{ax:data}), possibly hoping to increase the likelihood of the bills being passed.
\algo effectively models this shift, as reflected in the corresponding mean affinity ratios.

Party homophily has been studied also by Neal et al.~\cite{neal2022homophily}.
They represent bill co-spons\-orship data as a unipartite weighted graph, where legislators serve as nodes and edge weights indicate the number of bills co-sponsored by pairs of legislators.
To ascertain statistically significant connections, they employ a stochastic degree sequence model (SDSM). 
Despite using a data model that overlooks higher-order relations between legislators and using a simplified analytical framework (a thresholded weighted graph)~\cite{peel2022statistical}, they find results akin to our analysis. 
Specifically, both studies find evidence for differential homophily: the strength of Republicans' preference for collaborating with other Republicans differs from the strength of Democrats' preference for collaborating with other Democrats.

Differently from Neal et al., our work remains faithful to the original data.
Moreover, the use of randomized networks drawn from ensembles that retain some of the properties of the observed network is more suitable for identifying statistically significant connections~\cite{rottjers2021null}.

Finally, the results concerning \Cref{eq:homo} are presented in \Cref{ax:homophily}.
We observe that both parties exhibit an inclination toward associating with similar party members in co-sponsorship relations, and that the inverse relationship between the curves of Republicans and Democrats remains discernible, a pattern that remains unnoticed when solely examining the values of $\mathsf{m}$ measured in the observed hypergraphs.

\subsection{Contagion Processes in Contact Hyper-Networks}\label{sec:app_contagion}

The spread of information or diseases often transcends pairwise interactions and necessitates models that consider the collective influence of groups of individuals. 
For example, in the context of social and behavioral contagion, multiple studies have shown that exposure to multiple sources can be required~\cite{monsted2017evidence,karsai2014complex}.
Models of such complex contagion processes aim to capture group influences in social phenomena, such as norm adoption, rumor spread, and disease transmission.
These models embrace nonlinear connections between infection rates and sources of infection, which allows for mechanisms such as social reinforcement where multiple (or group) exposures have a larger collective impact than their sum.

More recently, multiple studies have proposed higher-order contagion models describing not only repeated interactions but rather genuine group interactions among agents. 
In these models, the substrates over which the process evolves are simplicial complexes~\cite{iacopini2019simplicial}, undirected hypergraphs~\cite{de2020social,st2022influential}, and directed hypergraphs featuring single-node tails~\cite{cui2023general}. 
In particular, undirected hypergraphs and simplicial complexes have proven more effective in modeling higher-order interactions between individuals. 
Conversely, single-tailed directed hypergraphs better model group influences on individuals.
The dynamic evolution of such contagion models is typically studied numerically on real-world hyper-networks, and compared to results obtained (both numerically and analytically) for the same dynamics on random hyper-networks \cite{st2021universal,iacopini2019simplicial,de2020social}.
To date, however, it is not clear what are the minimal constraints on such random hyper-networks required to reproduce the dynamical outcomes observed on the real-world hypergraphs. 
Here, using \algo, we highlight the role of structural correlations in shaping the dynamical outcomes of contagion processes. 
In particular, we show that stronger constraints (as implemented by \algoC) are required to faithfully reproduce results of super-linear contagion dynamics, while looser constraints on degrees and tail/head sizes (\algoA) are sufficient when the dynamics is pairwise (linear). 

We consider a hypergraph SIS contagion model~\cite{st2022influential} wherein the infection rate is a super-linear function of the number of infected nodes in the hyperedges.
Let $e$ be a hyperedge and $i_e$ be the number of infected nodes in $e$.
Then, each of the susceptible nodes in $e$ gets infected at rate $\beta(i_e) = \lambda i_e^{\nu}$, where $\nu$ is a parameter to regulate the non-linearity of the contagion process.
The model assumes that infections from different hyperedges are independent processes, and thus defines the total transition rate to the infected state of a node $v$ as the sum of the infection rates of all the hyperedges $E(v)$ containing $v$, i.e., $\sum\limits_{e \in E(v)}\beta(i_e)$.
Let $\mu$ denote the recovery rate (we set $\mu = 1$ in all the experiments).
Nodes undergo multiple transitions between susceptible and infected states. 
The contagion process is simulated using a Gillespie algorithm~\cite{gillespie2007stochastic}.
Starting with an initial density $\rho_0$ of infected nodes, the process unfolds via the two types of events (infection and recovery) occurring with probabilities proportional to their respective rates.
Once a hyperedge is selected for an infection event, a susceptible node in the hyperedge is chosen uniformly at random to transition to the infected state.
To obtain the density of infected nodes in the stationary state, we let the system evolve over a burn-in period $\tau_b  = 10$k.
Then, we sample $s = 10$k states separated by a decorrelation period $\tau_d = 1$.
Finally, we measure the mean and standard deviation of the density of infected nodes in these samples.
 
We compare the results of the simulations in the observed hypergraphs and the samples generated by \algo with the output of group-based approximate master equations (AMEs) that consider the ensemble of hypergraphs with the same distribution of hyperedge sizes and node degrees~\cite{st2022influential}.
We investigate two scenarios.
The first scenario involves undirected hypergraphs depicting face-to-face interactions among children in a primary school in Lyon, France~\cite{gemmetto2014mitigation} (\textsc{lyon}) and among students in a high school in Lyc\'{e}e Thiers, France~\cite{mastrandrea2015contact} (\textsc{high}).
These hypergraphs are characterized by nearly homogeneous hyperedge size distributions (between 60-70\% of the hyperedges have size $2$ and between 15-20\% of the hyperedges have size $3$) and bell-shaped node degree distributions centered around $11.79$ and $55.63$, respectively.
The second scenario involves email exchanges between members of a European research institution (\textsc{email-EU}) and between Enron employees (\textsc{email-Enron})~\cite{benson2018simplicial}.
These hypergraphs are characterized by heterogeneous hyperedge size distributions with mean hyperedge size $3$ and $3.42$, respectively, and max hyperedge size $18$ and $25$, respectively. The node degree distributions follow a heavy-tailed distribution.
The main characteristics of the four hypergraphs are reported in \Cref{tbl:contact_networks} in \Cref{ax:data}.

\begin{figure}[t!]
  \centering
  \includegraphics[width=\linewidth]{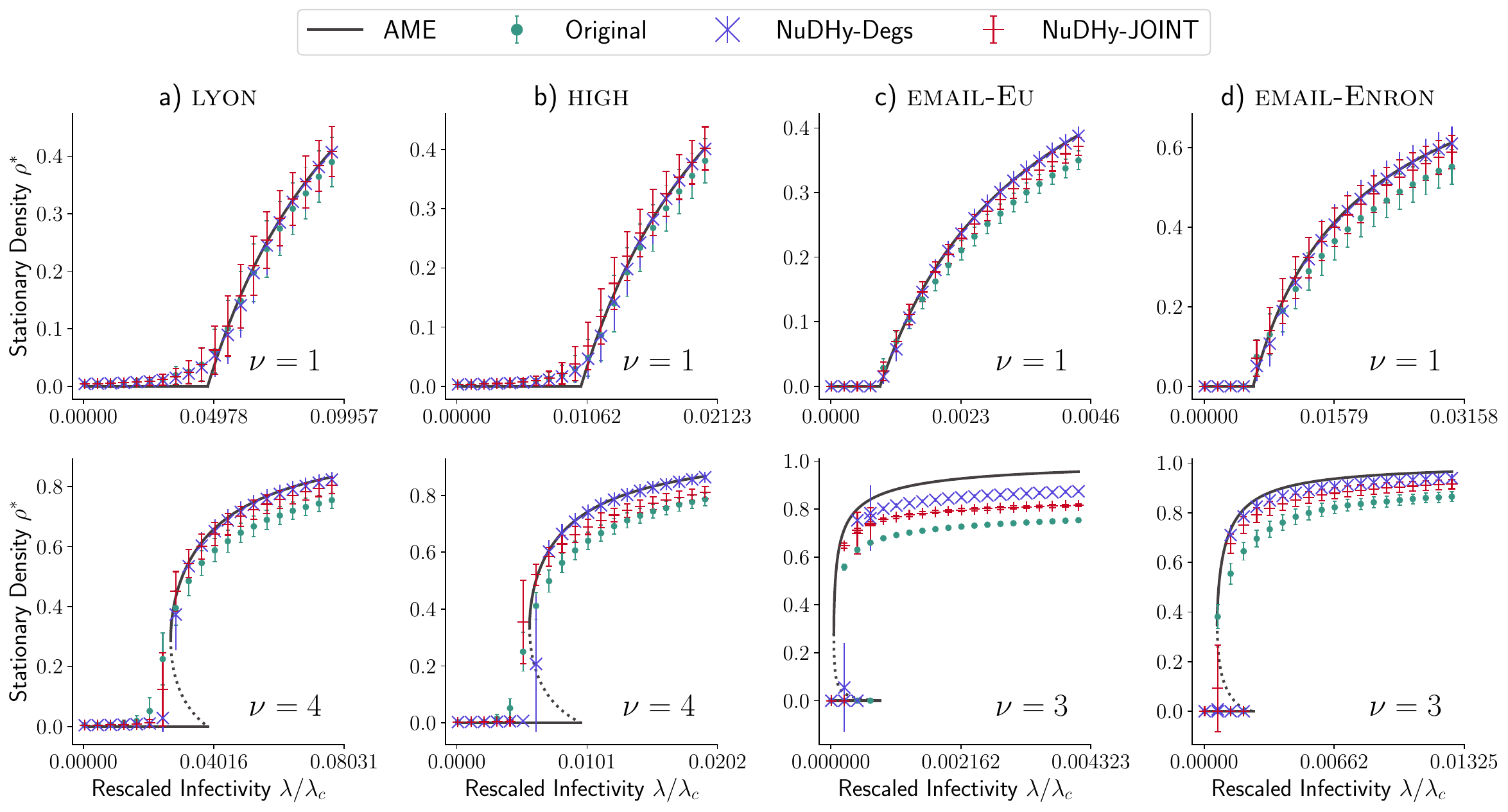}
  \caption{\textbf{Density of infected nodes in contact networks.} We show the values of $\rho^*$ in the stationary state of contagion dynamics on the observed hypergraph, and on $33$ samples generated by \algoA and \algoC, varying infection rate $\lambda$ and non-linearity parameter $\nu$, for \textsc{lyon}, \textsc{high}, \textsc{email-Eu}, and \textsc{email-Enron}. We report also the output of the AMEs as defined in~\cite{st2022influential}. The infection rate is rescaled with the invasion threshold $\lambda_c$. 
  Errors bars correspond to one standard deviation.
}
  \label{fig:rhos}
\end{figure}

\Cref{fig:rhos} display the average fraction $\rho^*$ of infected nodes in the stationary state of contagion dynamics on the observed hypergraphs and on $33$ samples generated by \algoA and \algoC, using $\rho_0 = 0.01$, and varying infection rate $\lambda$ and parameter $\nu$. The phase diagram reports also the output of the AMEs.
The infection rate is rescaled with the \emph{invasion threshold} $\lambda_c$, which is the minimum $\lambda$ above which the healthy state ($\rho^* = 0$) is unstable.
We consider both linear ($\nu = 1$) and super-linear ($\nu > 1$) contagions.
In the case of linear contagions, we observe two solutions for the stationary fraction of infected nodes: $\rho^*_1 = 0$ (absorbing state) and $\rho^*_2 > 0$ (endemic state). 
For the case of super-linear contagions, we chose a value of $\nu$ greater than the \emph{bistability threshold} $\nu_c$ reported in \Cref{tbl:contact_networks}.
The bistability threshold is the smallest non-linear exponent allowing for a discontinuous phase transition.
In this case we observe three solutions: $\rho^*_1= 0$ and $\rho^*_3 > 0$, which are locally stable, and $0 < \rho^*_2 < \rho^*_3$, which is unstable (dashed lines).
To obtain the lower branches in panels \textbf{c} and \textbf{d} in \Cref{fig:rhos} we run the ordinary simulation method described above. 
On the other hand, the upper branches in panels \textbf{c} and \textbf{d} and the branches in panels \textbf{a} and \textbf{b} are obtained with a quasi-stationary-state method~\cite{PhysRevE.71.016129}: we keep a history of $50$ past states from which a random state is used to replace the current state each time the absorbing state is reached.

Especially in the smaller datasets and for linear contagion processes, results for \algoA align well with the output of the analytical framework.
This is expected given that \algoA samples uniformly from the ensemble of hypergraphs with the same head and tail size sequences and the same in- and out-degree sequences, which are equivalent to the hyperedge size sequence and the node degree sequence when the input is an undirected hypergraph.
The disparities observed in the super-linear processes may potentially be attributed to a too-small value assigned to $\tau_b$. 

In contrast, the structural correlations present in the observed data lead to reductions in the stationary prevalence compared to the output of the AMEs.
These deviations display greater magnitude in the outputs of the super-linear contagions and in the presence of unstable regions, thus suggesting a higher influence of structural correlations within this type of contagion process.
Previous works~\cite{st2021universal} has shown that the correlations are especially important in the presence of nodes with large degrees.
In line with these works, we observe smaller discrepancies in \textsc{lyon}, where node degrees are more homogeneous.  
Conversely, the discrepancies in the lower branches in panels \textbf{a} and \textbf{b} in \Cref{fig:rhos} are due to the simulations being affected by finite-size effects---while AMEs assume an infinite-size system---and they become higher for the super-linear processes.

By looking at the curves for \algoC in \textsc{email-Eu} and \textsc{high}, we observe that part of the deviation in the super-linear simulations can be explained by the joint degree distribution.

In conclusion, our analysis shows that the fidelity to the original dynamics increases with the amount of preserved structural correlations, with \algoC offering the closest approximation, and the AME being the least accurate. 
While both \algoA and \algoC closely match the real dynamics for linear processes in smaller datasets, discrepancies emerge between their predictions in super-linear processes, with \algoC better approximating the real dynamics. 
This result highlights the role of higher-order structural correlations in non-linear contagion models, and thus highlights the importance of preserving the joint degree tensor when strongly non-linear processes or strong degree correlations are present (e.g. \textsc{email-Eu} and \textsc{high}).



\subsection{Economic Competitiveness in Trade Hyper-Networks}\label{sec:app_trade}

Economic complexity metrics are indicators that aim to capture the diversity and sophistication of a country's economy through its exported product basket.
The diversity and composition of a country's exported product basket, along with the complexity of the products therein, are the key properties exploited by these metrics to asses the competitiveness of countries.
In this analysis, we gauge the relative economic competitiveness 
of countries via three of these metrics: the Economic Complexity Index (ECI)~\cite{hidalgo2007product}, the Fitness~\cite{tacchella2012metrics,tacchella2013economic,cristelli2013measuring}, and the GENeralised Economic comPlexitY index (GENEPY)~\cite{sciarra2020reconciling}. 
We apply \algo alongside three additional null models purposefully designed for directed hypergraphs, with the aim to investigate which characteristics of the observed data are sufficient to replicate the ranking of countries based on these metrics.

Each of the three metrics is defined on an unweighted bipartite graph that represents the export relationships between countries and products: the \textit{bipartite country-product network}.
Nodes of one set represent countries, and nodes of the other set represent products.
Unweighted and undirected edges connect countries to their exported products.
Following previous literature in this field, we consider a country to be an exporter of a product if its \emph{Revealed Comparative Advantage} (RCA)~\cite{balassa1965trade} is greater than or equal to a minimum threshold $R^*$.
RCA measures the relative monetary importance of a product for a country among the export basket of the country compared to the global average.
We follow the standard economics literature and set $R^* = 1.0$~\cite{cristelli2013measuring}.
An RCA value greater than $R^*$ implies that the given country is advanced enough to compete in the global market for that product.
In addition, following the Atlas of Economic Complexity~\cite{harvard}, we only consider countries with a population above $1$ million and an average trade above $1$ billion USD.

Let $\mathsf{M}$ be the biadjacency matrix of the bipartite country-product network defined according to these criteria, and let $\mathsf{W}$ be a transformation matrix defined as 
$\mathsf{W}[c, p] = \mathsf{M}[c, p] / k_c h_p$, where $k_c$ is the degree of the left vertex $c$ (representing a country), and $h_p = \sum_c \mathsf{M}[c,p] / k_c$.
The \emph{country-to-country proximity matrix} between countries is then defined as follows:
\begin{align*}
\mathsf{X}[c,c^*] = 
\begin{cases}
	\sum_p \mathsf{W}[c,p]\mathsf{W}[c^*,p] & \text{ if } c \neq c^* ,\\
	0 & \text{ if } c = c^*.
\end{cases}
\end{align*}
The symmetric matrix $\mathsf{X}$ quantifies the similarities in the export baskets of countries.
Let us now recall the three metrics under study.

The \emph{Economic Complexity Index} (ECI) measures a country's complexity as the average complexity of the products it exports, and the complexity of a product as the average complexity of the countries that export it.
Thus, countries with a high ECI boast diversified export portfolios, featuring unique and sophisticated products, while those with a lower ECI export a more limited selection of common goods.
In terms of the biadjacency matrix $\mathsf{M}$, the ECI of a country and the \emph{Product Complexity Index} (PCI) of a product are defined by the following coupled equations:
\begin{equation}
\mathsf{ECI}(c) = \frac{1}{\sum_p \mathsf{M}[c,p]}\sum_p\mathsf{M}[c,p]\mathsf{PCI}(p),
\end{equation}
\begin{equation}
\mathsf{PCI}(p) = \frac{1}{\sum_c \mathsf{M}[c,p]}\sum_c\mathsf{M}[c,p]\mathsf{ECI}(p).
\end{equation}
The ECI index also possesses an alternative equivalent definition in terms of the eigenvector corresponding to the second largest eigenvalue of the country-to-country proximity matrix $\mathsf{X}$~\cite{inoua2023simple}.

The \emph{Fitness} $\mathsf{F}(c)$ of a country $c$ and the \emph{Quality} $\mathsf{Q}(p)$ of a product $p$ are defined according to the following coupled equations~\cite{tacchella2012metrics}:
\begin{equation}\label{eq:fitness}
\begin{dcases}
    \widetilde{\mathsf{F}}(c)^{(n)} & = \sum_p \mathsf{M}[c,p]\mathsf{Q}(p)^{(n-1)}\\
    \widetilde{\mathsf{Q}}(p)^{(n)} & = \frac{1}{\sum_c \mathsf{M}[c,p]\frac{1}{\mathsf{F}(c)^{(n-1)}}}
\end{dcases}
\rightarrow
\begin{dcases}
\mathsf{F}(c)^{(n)} & = \frac{\widetilde{\mathsf{F}}(c)^{(n)}}{\langle\widetilde{\mathsf{F}}(c)^{(n)}\rangle_{c}}\\
\mathsf{Q}(p)^{(n)} & = \frac{\widetilde{\mathsf{Q}}(p)^{(n)}}{\langle\widetilde{\mathsf{Q}}(p)^{(n)}\rangle_{p}}
\end{dcases}
\end{equation}
where $\langle\cdot\rangle_{x}$ denotes the arithmetic mean over the distribution of values for $x$.
The main difference introduced by the Fitness/Quality scores lies in a non-linear weighting of the fitness of the countries when computing the quality of a product, rather than using a simple average.
Fitness and Quality can be computed by solving \Cref{eq:fitness} with an iterative algorithm, initializing $\mathsf{F}(c)^{(0)} = 1$ for each country $c$, and $\mathsf{Q}(p)^{(0)} = 1$ for each product $p$.
The iterative algorithm converges to a single fixed point independently from the initial conditions~\cite{tacchella2012metrics,cristelli2013measuring,pugliese2016convergence}.
%

Finally, the GENEPY index is a combination of the eigenvectors of the country-to-country proximity matrix $\mathsf{X}$.
More precisely, the GENEPY index of a country $c$ is defined as  
\begin{equation}\label{eq:genepy}
\mathsf{G}(c) = \left(\sum_{i=1}^2 \lambda_{c,i} \mathsf{e}_{c,i}^2 \right)^2 + 2\sum_{i=1}^2\lambda_{c,i}^2 \mathsf{e}_{c,i}^2\,,
\end{equation}
where $\lambda_{c,i}$ is the $i$-th largest eigenvalue of the proximity matrix $\mathsf{X}$, and $\mathsf{e}_{c,i}$ is the corresponding eigenvector.

As a preliminary observation, note that the bipartite country-product network inherently represents a high-order structure~\cite{saracco2015randomizing}, as any hypergraph can be represented as a bipartite graph without loss of information.
Therefore, computing metrics on the bipartite country-product network corresponds to conducting higher-order analyses.

We perform a comparative analysis of country rankings based on ECI, Fitness, and GE\-NEPY computed from the observed data and from $33$ samples generated by \algo.
We consider international trade data for four years: 1995 (first year available), 2009 (global financial crisis), 2019 (COVID-19 outbreak), and 2020 (economic recession)~\cite{hausmann2014atlas}.
We consider a directed higher-order data representation where nodes represent countries and hyperedges represent products traded by them. 
Coherently with the construction of the bipartite country-product network, the head of each hyperedge includes countries that export the product with a \emph{Revealed Comparative Advantage}~\cite{balassa1965trade} greater than $1$; the tail of each hyperedge includes countries that import the product with an RCA greater than $1$; and we only consider countries with a population above $1$ million and an average trade above $1$ billion USD.
\Cref{tbl:datasets_eco} in \Cref{ax:eco_other} reports the characteristics of the resulting hypergraphs.
This directed hypergraph encoding perfectly represents the trade data and offers opportunities for studying the system more thoroughly.
For instance, while the country-product network only looks at the export side of the trades, the directed hypergraph also represents imports.
Higher-order representations thus offer a richer and more detailed description of the system on which more powerful metrics can be defined (although this specific task is outside the scope of the current work).

\begin{table}[t]
  \caption{Average Spearman's correlation and Kendall's Tau (KT) of the rankings of the countries obtained according to ECI/Fitness/GENEPY in $33$ samples with respect to the observed rankings, for \textsc{hs2019}. Standard deviations are reported in parentheses.}
  \label{tbl:ranks_genepy_eci}
  \resizebox{\columnwidth}{!}{
  \begin{tabular}{llccccc}
  \textbf{Score} & \textbf{Metric} & \base & \based & \nullm & \algoA & \algoC \\
    \midrule
    \multirow{2}{*}{ECI} & 
    Spearman & -0.144 (0.143) & -0.055 (0.120) & 0.007 (0.093) & 0.020 (0.121) & 0.964 (0.001)\\
    & KT & -0.092 (0.092) & -0.037 (0.079) & 0.005 (0.062) & 0.015 (0.082) & 0.848 (0.004)\\
  \midrule
    \multirow{2}{*}{Fitness} & 
    Spearman & 0.051 (0.075) & 0.237 (0.055) & -0.013 (0.075) & 0.981 (0.001) & 0.998 (0.000)\\
    & KT & 0.034 (0.051) & 0.160 (0.038) & -0.010 (0.052) & 0.886 (0.003) & 0.963 (0.001)\\
  \midrule
  \multirow{2}{*}{GENEPY} & 
    Spearman & 0.015 (0.078) & 0.230 (0.054) & -0.002 (0.097) & 0.941 (0.004) & 0.993 (0.000)\\
  & KT & 0.010 (0.054) & 0.156 (0.040) & -0.001 (0.040) & 0.801 (0.007) & 0.937 (0.002)\\
  \bottomrule
  \end{tabular}
  }
\end{table}

\begin{figure*}[!th]
    \centering
    \includegraphics[width=\linewidth]{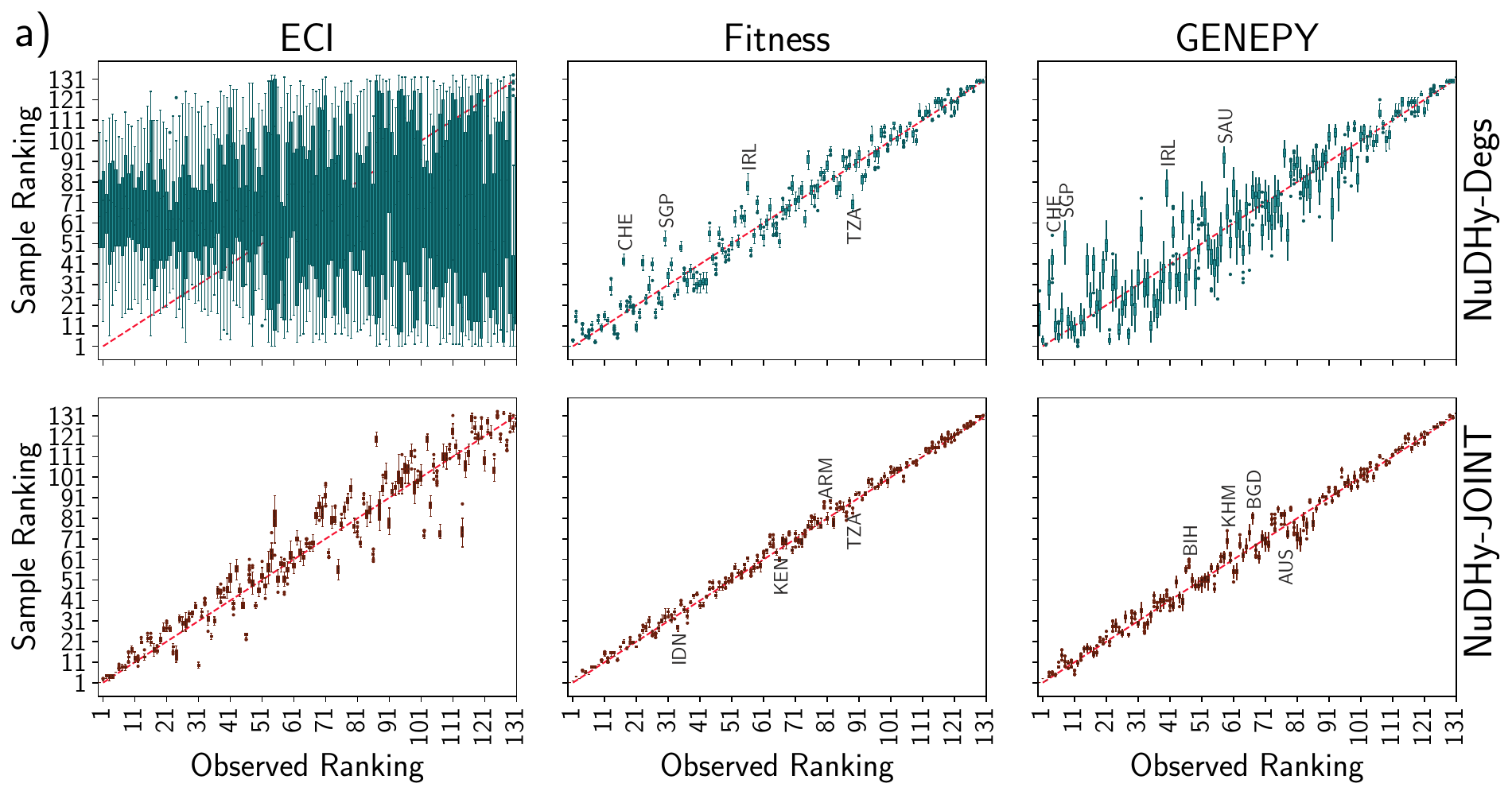}
    \includegraphics[width=\linewidth]{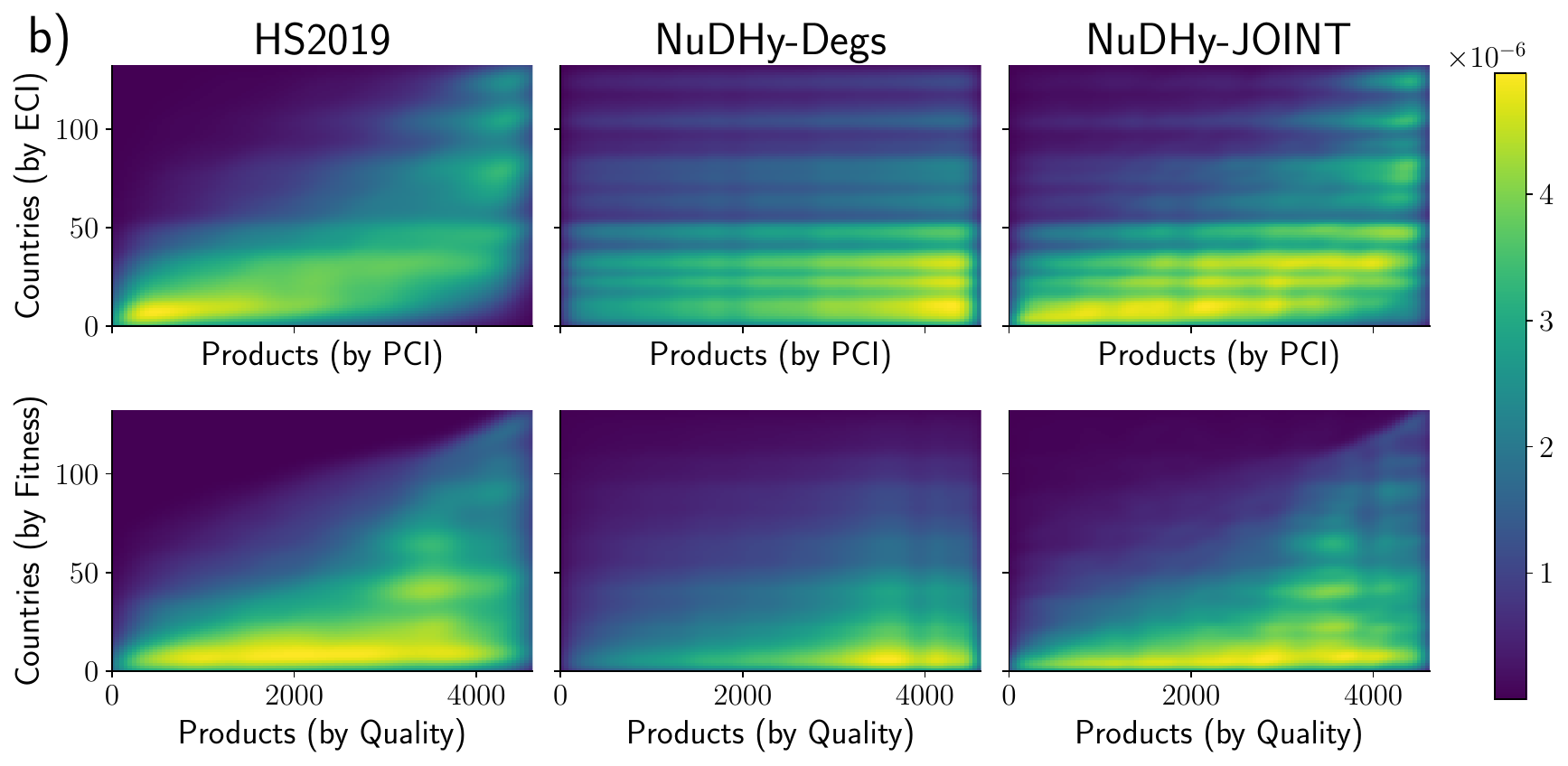}
    \caption{\textbf{Relative competitiveness in \textsc{hs2019}.} Panel a): rankings distributions based on ECI, Fitness, and GENEPY across 33 samples for \algoA (top) and \algoC (bottom) compared to the observed rankings, with annotated top-$4$ diverging ranks.
    Panel b): density plots of the KDE of the observed biadjacency matrix 
    $\mathsf{M}$ and of the aggregated matrices across $33$ samples of \algoA and \algoC. Countries are sorted by ECI/Fitness and products by PCI/Quality (descending). The lighter the color, the higher the density of edges.
    }
    \label{fig:hs2019}
\end{figure*}

An analysis similar to ours was presented in a previous study~\cite{straka2018ecology} by employing the Fitness score and the BiCM null model~\cite{saracco2015randomizing} for the bipartite country-product network. 
This null model maintains both the left and right degree sequences, but only in expectation (canonical ensemble).
The study revealed that, in general, for each country, the distribution of its ranks obtained from the samples has a mean value close to the observed rank, and a wide standard deviation.
We find a similar, albeit much stronger, result for \algo.
In the following, we discuss the main findings in \textsc{hs2019}.
Results for the other trade networks are qualitatively similar and are reported in \Cref{ax:eco_other}.

\mpara{ECI.}
For \algoA, both the Spearman and Kendall's Tau average correlation values of the rankings of countries are remarkably close to zero, and the standard deviation values for both coefficients are small (\Cref{tbl:ranks_genepy_eci}).
This result indicates independence between the observed ranking and the rankings provided by the samples.
\Cref{fig:hs2019}\textbf{a} (upper left) shows this pattern: the distributions of the ranks of each country across the samples tend to cluster around mid-ranking positions and exhibit a wide spread, with greater variance at the lower end of the observed ranking.
In other words, preserving the degrees via \algoA does not preserve the ECI.


In contrast, for \algoC both average correlation coefficients (Spearman and Ken\-dall's Tau) of the rankings of countries are significantly high ($\ge0.84$), and the standard deviation values for both coefficients are negligible (\Cref{tbl:ranks_genepy_eci}).
This observation suggests a dependence between the observed ranking and the rankings provided by the samples, as illustrated by the bottom left plot of \Cref{fig:hs2019}\textbf{a}.
The distributions of the ranks of each country across the samples are aligned with the observed rank of a country and present a narrow spread.

According to these results, preserving the JOINT is sufficient to preserve the ranking of countries based on their ECI score, while the degree sequence is insufficient.

\mpara{Fitness and GENEPY.}
These two measures behave quite similarly in our analysis.
For \algoA, the average correlation coefficients (Spearman and Kendall's Tau) of the rankings of countries are significantly high for both Fitness ($\ge0.88$) and GENEPY ($\ge0.8$), and the standard deviation values for both coefficients are negligible (\Cref{tbl:ranks_genepy_eci}).
This result indicates a dependence between the observed rankings and the rankings provided by the samples.
The middle and right plots of \Cref{fig:hs2019}\textbf{a} show that the distributions of the ranks of each country across the samples tend to be close to the observed rank of a country with a narrow spread.

For \algoC, the average correlation coefficients (Spearman and Kendall's Tau) of the rankings of countries are even higher ($\ge0.96$ for Fitness and $\ge0.93$ for GENEPY), and the standard deviation values for both coefficients are extremely small (\Cref{tbl:ranks_genepy_eci}).
There is a strong dependence between the observed ranking and the rankings provided by the samples, as shown by the bottom middle and right plots of \Cref{fig:hs2019}\textbf{a}.
The distributions of the country rank across the samples are aligned with the observed rank of a country, with a very limited spread.

According to these results, both the degree and joint degree sequence are sufficient to retain the ranking of countries based on their Fitness and GENEPY scores.

Figure~\ref{fig:hs2019}\textbf{b} displays density plots representing Kernel Density Estimations (KDEs) of the biadjacency matrices for the country-product network of $2019$. These matrices are derived from the observed data (first column) and from the aggregation of $33$ samples generated by \algoA and \algoC (second and third column).
Countries and products are arranged in descending order of ECI/Fitness and PCI/Quality, respectively.
The color intensity within each plot indicates the density of edges, with lighter colors indicating higher density.

As expected, countries with a high Fitness/ECI predominantly export products with high Quality/PCI, while those with lower Fitness/ECI focus solely on products with lower Quality/PCI.
Comparison with the corresponding plots for \algoA (middle columns of \Cref{fig:hs2019}\textbf{b}) indicates that the specialization process of countries cannot be fully explained by node degrees alone, as evidenced by the inability of \algoA to accurately capture the pattern observed in the real data. 
Conversely, plots derived from samples of \algoC reveals remarkably similar edge density distributions to those observed, regardless of the metrics used to sort rows and columns (first and third columns of \Cref{fig:hs2019}\textbf{b}). 

Overall, we find that preserving local properties of the hypergraph, either the degree sequences and hyperedge sizes in the case of \algoA or their joint tensor for \algoC, is sufficient to explain the rankings induced by most economic complexity measures.
As a consequence, it is likely that these measures primarily capture local network structure and do not fully leverage meso- and global-scale information.
Our suite of null models \algo can help explore the power of these structural metrics, and possibly develop more comprehensive ones that can leverage the natural higher-order representation of the underlying trade data.

\spara{Other null models.}
In this experiment, we also compare our null models with three null models for directed hypergraphs named \base, \based, and \nullm.
\base and \based are two different versions of \redi~\cite{kim2022reciprocity}: the first realistic generative model specifically designed for directed hypergraphs.
\redi extends the preferential attachment model~\cite{do2020structural} to directed hypergraphs, allowing the generation of random hypergraphs exhibiting reciprocal patterns akin to those observed in real directed hypergraphs.
The random hypergraphs generated by this model preserve, on average, the distribution of head and tail sizes.
The version of \redi dubbed \base preserves, on average, the distribution of the number of hyperedges in which each group of nodes appears, while the version dubbed \based preserves node degrees on average.
\nullm is a naive sampler that preserves the head and tail size distributions, but populates the hyperedges of the random hypergraph by drawing nodes uniformly at random from the set of nodes of the observed hypergraph.
Additional details on these models are provided in \Cref{ax:base}.

According to the results reported in \Cref{tbl:ranks_genepy_eci}, none among \base, \based, and \nullm, can explain the ranking of countries based on these three indexes.

\section{Discussion}\label{sec:discussion}

In this study, we introduced a suite of null models for directed hypergraphs, encompassing hypergraphs with the same in-degree, out-degree, head-size, and tail-size distributions, as well as the same JOINT of an observed hypergraph.
We demonstrated a lossless mapping from directed hypergraphs to directed bipartite graphs and proposed two MCMC samplers that efficiently sample from the corresponding micro-canonical graph ensembles.

Our approach fills a critical gap in the existing literature, which primarily focuses on canonical and micro-canonical bipartite graph ensembles~\cite{kannan1999simple,tabourier2011generating,strona2014fast,saracco2015randomizing,squartini2015unbiased,boroojeni2017generating,aksoy2017measuring,del2010efficient} and undirected hypergraph ensembles~\cite{saracco2022entropy,do2020structural,barthelemy2022class,guo2016non,wang2010evolving,chodrow2020configuration,zeng2023hyper,nakajima2021randomizing,miyashita2023randomizing}.

We conducted rigorous experiments and evaluations, highlighting the limitations of recent generative models, such as the one proposed by Kim et al.~\cite{kim2022reciprocity}, specifically designed for directed hypergraphs. 
The random hypergraphs generated by this (canonical) model preserve, on average, the distribution of head and tail sizes.
However, our findings revealed structural dissimilarities between generated hypergraphs and observed ones due to design choices aimed at improving sampler efficiency.

We then showed the importance of preserving stronger structural correlations (and hence the significance of the proposed null models) in three appropriate case studies, spanning various domains. 
First, we explored group affinity within political parties in the US Congress, revealing an inverse relationship between the affinity curves of Republicans and Democrats: when one party holds the majority of the seats, the opposing party exhibits higher group affinity.
This pattern becomes apparent only when the JOINT structural correlations are preserved.  

Second, we simulated linear and non-linear contagion processes in real and randomized hyper-networks, demonstrating the explanatory power of the JOINT in elucidating observed discrepancies between analytical contagion frameworks and simulations in real data. 
These results also suggest that our models could be used for more realistic data augmentation. 

Third, we compared the rankings of countries based on three economic complexity indices (ECI, Fitness, and GENEPY) computed in trade hyper-networks and their randomized counterparts, highlighting the nuanced information encoded in the degree sequences and the JOINT.
Our analysis revealed that both our null models accurately replicate the relative economic competitiveness of countries as measured by Fitness and GENEPY. 
However, for ECI, only the more restrictive null model \algoC succeeded in preserving the rankings.
These results demonstrate that retaining the local topological properties independently is insufficient to preserve the ranking of the countries based on their ECI score.
However, in all three cases, the local properties preserved by \algoC are sufficient to reproduce and explain the rankings, thus indicating that the metrics ignore mesoscale and global properties of the network.

Our findings emphasize the versatility and effectiveness of our proposed null models and samplers in uncovering intricate patterns across diverse disciplines.
These tools represent a powerful lens through which to examine higher-order complex systems.
They fill a significant gap in the analysis of higher-order networks, thus providing researchers in fields such as neuroscience, ecology, sociology, and economics with effective means for analysis and interpretation. 
Moreover, thanks to the efficiency of our samplers, our work empowers researchers to glean deeper insights also from more complex and larger datasets.
Finally, from a theoretical perspective, our results provide direct motivation for extending analytical descriptions of hyper-networks---and of processes taking place on them---to include more nuanced structural correlation patterns.

\section{Sampling Algorithms}

This section describes two efficient sampling algorithms, \algoA and \algoC, designed for sampling from $\nullset^{\nameA}$ and $\nullset^{\nameC}$, respectively.
Both algorithms leverage the Metropolis-Hastings algorithm as part of the Markov Chain Monte Carlo approach, and employ targeted edge swap operations to traverse the Markov graph.
\algoA uses Parity Swap Operations (\Cref{th:pso}), while \algoC uses Restricted Parity Swap Operations (\Cref{th:rpso}). 
The sampling procedures for both algorithms are illustrated through pseudocode and detailed in \Cref{ax:algoa} and \Cref{ax:algoc}, respectively.
Finally, we experimentally study the mixing time of the samplers in \Cref{ax:convergence}.
The code is publicly available on GitHub.\footnote{\url{https://github.com/lady-bluecopper/NuDHy}}

\subsection{\algoA: An Efficient Sampler for \nameA}\label{sec:deg_pres}

\begin{figure}[th!]
	\centering
	\includegraphics[width=\columnwidth]{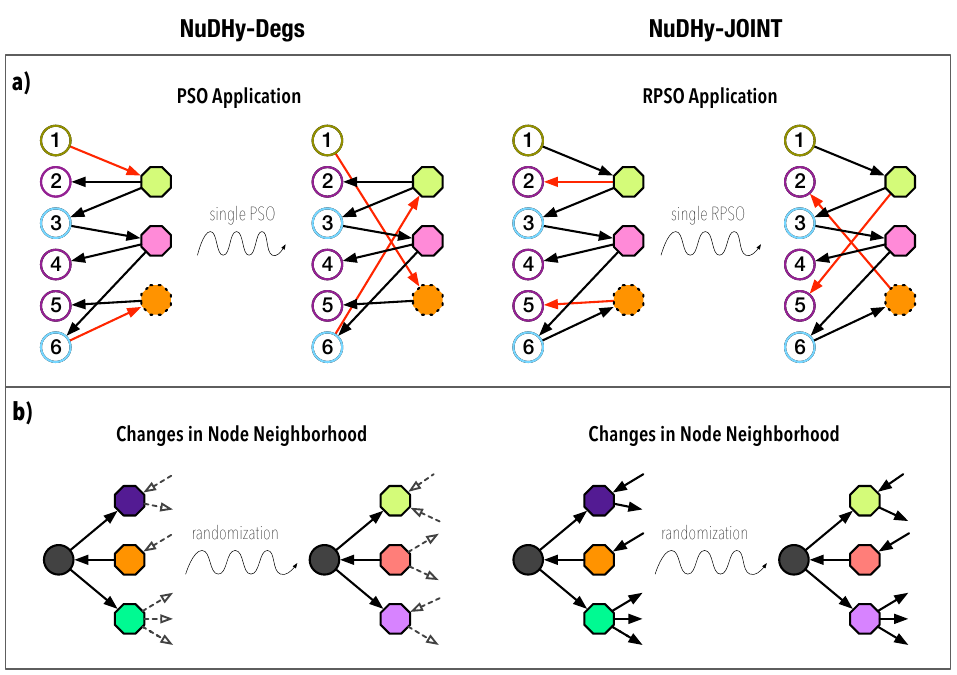}
	\caption{
	\textbf{a)} Bipartite graphs obtained from \Cref{fig:toy}\textbf{a} after the application of the PSO $(1, \protect\hedgeg, +1), (6, \protect\hedgeo, +1) \pso (1, \protect\hedgeo, +1), (6, \protect\hedgeg, +1)$ 
  and of the RPSO $(2, \protect\hedgeg, -1), (5, \protect\hedgeo, -1) \rpso (2, \protect\hedgeo, -1), (5, \protect\hedgeg, -1)$. 
  The edges involved in the swap operations are highlighted in red. Left nodes with the same in- and out-degree are outlined with the same color. Right nodes with the same in- and out-degree are outlined with the same pattern.
	\textbf{b)} Changes in the neighborhood of a left node after the application of a sequence of PSOs and of RPSOs. PSOs preserve the number of in-going and out-going edges of each node. RPSOs preserve also the in- and out-degree of the nodes connected to each node.
	}
	\label{fig:toy_2}
\end{figure}

We present a Markov Chain Monte Carlo algorithm, dubbed \algoA, that uses Metropolis-Hastings (MH) to sample from $\nullset^{\nameA}$ according to $\nullprob$.
We first define an edge swap operation that transforms a bipartite graph into another bipartite graph while preserving the degree sequences, and then describe the state space that this operation induces and over which the Markov chain is constructed.

\begin{lemma}[Parity Swap Operation, PSO]\label{th:pso}
Let $G \doteq (L, R, D)$ be a directed bipartite graph and $u \neq v \in L$, $\alpha \neq \beta \in R$ such that $\exists d \in \{+1,-1\}$ for which $e_1 \doteq (u,\alpha,d), e_2 \doteq (v,\beta,d) \in D$ and $e_3 \doteq (u,\beta,d), e_4 \doteq (v,\alpha,d) \notin D$.
Swapping $e_1$, $e_2$ with $e_3$, $e_4$ generates a directed bipartite graph $G'=(L, R, (D \setminus \{e_1, e_2\}) \cup \{e_3, e_4\})$ with the same left and right, in- and out-degree sequences as $G$. 
This swap operation, denoted as $e_1, e_2 \pso e_3, e_4$, is called \emph{parity swap operation} (PSO).
\end{lemma}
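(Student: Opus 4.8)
The plan is to reduce the claim to a purely local edge-counting argument. The first step I would take is to record that each of the four degree quantities in play---the in- and out-degree of a left vertex and the in- and out-degree of a right vertex---is, by the definitions preceding the statement, exactly the number of edges of one fixed direction incident to that vertex. Concretely, for a left vertex $w$ the out-degree $\sizeof{\neighsplus{G}{w}}$ counts the edges $(w,\cdot,+1)\in\fromL$ and the in-degree $\sizeof{\neighsminus{G}{w}}$ counts the edges $(w,\cdot,-1)\in\fromR$; symmetrically, the out- and in-degree of a right vertex $\gamma$ count the edges $(\cdot,\gamma,+1)\in\fromL$ and $(\cdot,\gamma,-1)\in\fromR$, respectively. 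Hence it suffices to show that for every vertex and every direction $d'\in\{+1,-1\}$ the number of direction-$d'$ edges incident to that vertex is identical in $G$ and $G'$; this in fact proves the sequences are preserved vertex-by-vertex, which is stronger than preserving them as multisets.

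Before the counting, I would verify that the swap is well defined. The hypotheses $e_1,e_2\in D$ and $e_3,e_4\notin D$, together with $u\neq v$ and $\alpha\neq\beta$, guarantee that $e_1,e_2,e_3,e_4$ are four pairwise-distinct triplets with $\{e_1,e_2\}\subseteq D$ and $\{e_3,e_4\}\cap D=\emptyset$. Therefore $D'=(D\setminus\{e_1,e_2\})\cup\{e_3,e_4\}$ genuinely removes two edges and adds two previously-absent ones, and in particular $\sizeof{D'}=\sizeof{D}$.

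The heart of the argument is then the bookkeeping. Since all four of $e_1,e_2,e_3,e_4$ carry the same direction $d$, the set of direction-$d'$ edges is literally unchanged for $d'\neq d$, so every in-/out-degree that counts edges of the opposite direction is trivially preserved. For direction $d$ I would tabulate the incidences the swap touches: the left vertex $u$ loses $e_1$ and gains $e_3$; the left vertex $v$ loses $e_2$ and gains $e_4$; the right vertex $\alpha$ loses $e_1$ and gains $e_4$; the right vertex $\beta$ loses $e_2$ and gains $e_3$; and no other vertex is incident to any of the four edges. Each of $u,v,\alpha,\beta$ thus trades exactly one direction-$d$ incidence for exactly one, leaving its direction-$d$ count fixed, while all other vertices are untouched. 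Combined with the previous observation, every in- and out-degree on both sides is preserved, which is the claim.

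I do not expect a genuine obstacle here, as the statement is a direct verification. The only point demanding care---and the one I would state explicitly---is the translation in the first step: one must confirm that each of the four notions of degree is the count of edges of a \emph{single} fixed direction at a single vertex. It is precisely because the swap stays within one direction $d$ and rebalances the endpoints (each of $u,v,\alpha,\beta$ trading one incidence for one) that all counts close; were the two swapped edges allowed to carry different directions, the balance would break. Emphasizing the shared direction $d$ is therefore the crux of the argument.
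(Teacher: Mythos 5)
Your proof is correct and is exactly the direct verification the paper intends: the paper states \Cref{th:pso} without an explicit proof (treating it as immediate, noting only its analogy to the checkerboard swap for directed unipartite graphs), and your local bookkeeping---edges of direction $d' \neq d$ are untouched, while each of $u,v,\alpha,\beta$ trades exactly one direction-$d$ incidence for one---is precisely the argument being relied upon. The emphasis on the shared direction $d$ and on well-definedness of the swap is appropriate and matches the paper's hypotheses.
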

For directed unipartite graphs, this operation is known as \emph{checkerboard swap}~\cite{artzy2005generating}.
An example of PSO is shown in \Cref{fig:toy_2}\textbf{a} (left).

\begin{algorithm}[t!]
  \small
    \caption{\algoA}\label{alg:nudhy-a}
    \begin{algorithmic}[1]
    \Require Graph $G \doteq (L, R, D) \in \states^{\nameA}$, Number of Steps $s$
    \Ensure Graph sampled uniformly from $\states^{\nameA}$ 
    \RepeatN{$s$}
	  \State $\mathsf{out} \gets$ flip a biased coin with heads probability $|\fromL|/|D|$
	  \If{$\mathsf{out}$ is \emph{heads}}
	    \State $u$, $v$ $\gets$ distinct vertices drawn u.a.r. from $\posL$ \Comment{\begin{footnotesize} vertices in $L$ with out-degree $>0$ \end{footnotesize}}
	    \If{$\setdiff{u}{v} \neq \varnothing$} \Comment{\begin{footnotesize} $\{(\alpha,\beta) \suchthat \alpha \in \outneighs{u} \setminus \outneighs{v} \, \wedge \beta \in \outneighs{v} \setminus \outneighs{u}\}$ \end{footnotesize}}
	      \State $\alpha$, $\beta$ $\gets$ pair drawn u.a.r. from $\setdiff{u}{v}$
	      \State perform $(u,\alpha,+1), (v,\beta, +1) \pso (u, \beta, +1), (v, \alpha, +1)$ on $G$
	    \EndIf
	  \Else
	    \State $\alpha$, $\beta$ $\gets$ distinct vertices drawn u.a.r. from $\posR$ \Comment{\begin{footnotesize} vertices in $R$ with out-degree $>0$\end{footnotesize}}
	    \If{$\setdiff{\alpha}{\beta} \neq \varnothing$} \Comment{\begin{footnotesize} $\{(u,v) \suchthat u \in \outneighs{\alpha} \setminus \outneighs{\beta} \, \wedge v \in \outneighs{\beta} \setminus \outneighs{\alpha}\}$\end{footnotesize}}
	      \State $u$, $v$ $\gets$ pair drawn u.a.r. from $\setdiff{\alpha}{\beta}$
	      \State perform $(u,\alpha,-1), (v,\beta, -1) \pso (u, \beta, -1), (v, \alpha, -1)$ on $G$
	    \EndIf
	  \EndIf
    \End
    \State \Return $G$
    \end{algorithmic}
\end{algorithm}

The state space $\states^{\nameA}$ is a directed weighted graph.
Each vertex represents a bipartite directed graph with the same left and right, in- and out-degree sequences as $\observedbip$.
Each edge connects two graphs that can be transformed into each other via a PSO.
For any pair of graphs, there is at most one PSO that connects them, hence there are no parallel edges between vertices.
Moreover, we add self-loops from each vertex to itself.
All the graphs $G'$ that can be obtained by applying a PSO to $G$ are called the \emph{neighbors} of $G$ in $\states^{\nameA}$. 
We associate a weight $\neighdistr{G}(G')$ to each edge $(G,G')$ that represents the probability of transitioning to $G'$ starting from $G$.

A fundamental theorem of Markov chains states that an irreducible, aperiodic, finite Markov chain has a unique stationary distribution~\cite{rao2020principles}.
Therefore, the Markov chain converges to $\nullprob$ independently of the starting state.
Furthermore, we know that if the transition probability matrix is doubly stochastic, the Markov chain converges to the uniform distribution over its state space.\footnote{$\pi_G = \frac{1}{n}$ is stationary for all $G$ because $\left[\pi \cdot \neighdistr{G}\right]_G = \sum\limits_{G'}\pi_{G'}\neighdistr{G}(G') = \sum\limits_{G'}\frac{1}{n}\neighdistr{G}(G')=\frac{1}{n}\sum\limits_{G'}\neighdistr{G}(G')=\frac{1}{n}=\pi_G$.}
As a result, samples from the chain can be considered as uniform samples from the state space.
In our case, aperiodicity is guaranteed by the presence of self-loops over the vertices, while the double stochasticity of the transition matrix can be inferred from observing that \emph{(i)} each PSO is reversible, i.e., if $e_1, e_2 \pso e_3, e_4$ transforms $G$ into $G'$, then $e_3, e_4 \pso e_1, e_2$ transforms $G'$ into $G$; and that \emph{(ii)} the probability of going from $G$ to $G'$ is equal to the probability of going from $G'$ to $G$, i.e., $\neighdistr{G}(G') = \neighdistr{G'}(G)$. 
The definition of $\neighdistr{G}$ and the proof that the transition matrix $\{\neighdistr{G}(G')\}$ is doubly stochastic can be found in \Cref{ax:algoa}.
Finally, \Cref{ax:proofs} proves irreducibility by showing that $\states^{\nameA}$ is strongly connected.
From these results, we obtain that the stationary distribution is the uniform distribution.

\Cref{alg:nudhy-a} illustrates the sampling procedure of \algoA.
The algorithm performs a number of steps $s$ (input parameter) in the state space large enough that its output can be considered as a uniform sample from $\states^{\nameA}$.
Previous works has shown that $s = O\left(|E|\log\left(|E|\right)\right)$ is, in general, sufficient~\cite{viger2016efficient}.

\subsection{\algoC: An Efficient Sampler for $\nameC$}\label{sec:bjidt_pres}

We introduce an edge swap operation that transforms a bipartite graph into another bipartite graph with the same JOINT.

\begin{lemma}[Restricted Parity Swap Operation, RPSO]\label{th:rpso}
Let $G \doteq (L, R, D)$ be a directed bipartite graph and $u \neq v \in L$, $\alpha \neq \beta \in R$ such that $\exists d \in \{+1,-1\}$ for which $e_1 \doteq (u,\alpha,d), e_2 \doteq (v,\beta,d) \in D$ and $e_3 \doteq (u,\beta,d), e_4 \doteq (v,\alpha,d) \notin D$.

If 
$\indegG{u} = \indegG{v}\, \wedge \, \outdegG{u} = \outdegG{v} \;\vee\; \indegG{\alpha} = \indegG{\beta}\, \wedge \, \outdegG{\alpha} = \outdegG{\beta}$,
then swapping $e_1$, $e_2$ with $e_3$, $e_4$ generates a directed bipartite graph $G'=(L, R, (D \setminus \{e_1, e_2\}) \cup \{e_3, e_4\})$ with the same JOINT as $G$. 
This swap operation, denoted as $e_1, e_2 \rpso e_3, e_4$, is called \emph{restricted parity swap operation} (RPSO).
\end{lemma}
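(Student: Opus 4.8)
The plan is to treat the RPSO as an ordinary PSO carrying one extra hypothesis, and to track exactly how the tensor \biotG changes entry by entry. First I would observe that the structural precondition of the RPSO ($e_1,e_2\in D$, $e_3,e_4\notin D$, all with the same direction $d$) is identical to that of the PSO, so the swap is a legitimate PSO and \Cref{th:pso} applies: passing from $G$ to $G'$ leaves the left and right in- and out-degree of every vertex unchanged. This is the crucial leverage. Because no vertex changes either of its degrees, every edge lying in $D\cap D'$ is indexed by the very same quintuple $(i,j,k,l,d)$ in $G$ and in $G'$; in particular no surviving edge is ever re-indexed in the tensor. Consequently the only edges that can alter \biotG are the two deleted edges $e_1,e_2$ and the two inserted edges $e_3,e_4$, and all remaining contributions cancel termwise.

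Second, I would compute $\biot{G'}-\biot{G}$ explicitly. Write $p(x)\doteq(\indegG{x},\outdegG{x})$ for the degree profile of a vertex $x$; by the first step this profile is the same in $G$ and in $G'$, and together with the direction $d$ it pins down the tensor slot that any incident edge of direction $d$ falls into. Deleting $e_1=(u,\alpha,d)$ and $e_2=(v,\beta,d)$ then decrements the direction-$d$ entries indexed by the profile pairs $(p(u),p(\alpha))$ and $(p(v),p(\beta))$, while inserting $e_3=(u,\beta,d)$ and $e_4=(v,\alpha,d)$ increments the entries indexed by $(p(u),p(\beta))$ and $(p(v),p(\alpha))$. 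Hence \biotG is preserved if and only if the multiset of decremented pairs equals the multiset of incremented ones:
\[
\{(p(u),p(\alpha)),\,(p(v),p(\beta))\} \;=\; \{(p(u),p(\beta)),\,(p(v),p(\alpha))\}.
\]

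Third comes the case analysis, which is exactly where the hypothesis of the lemma is used. The disjunction in the statement yields two branches. If $\indegG{u}=\indegG{v}$ and $\outdegG{u}=\outdegG{v}$, then $p(u)=p(v)$ and both sides of the displayed identity collapse to $\{(p(u),p(\alpha)),(p(u),p(\beta))\}$; if instead $\indegG{\alpha}=\indegG{\beta}$ and $\outdegG{\alpha}=\outdegG{\beta}$, then $p(\alpha)=p(\beta)$ and both sides collapse to $\{(p(u),p(\alpha)),(p(v),p(\alpha))\}$. In either branch the two multisets coincide, the four $\pm1$ updates cancel, and therefore $\biot{G'}=\biot{G}$.

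I expect the only genuinely delicate point to be the justification in the first step that no edge outside $\{e_1,e_2,e_3,e_4\}$ contributes differently, since the whole argument rests on it. I would therefore state cleanly that, by \Cref{th:pso}, every vertex — and in particular every endpoint of every surviving edge — retains its exact in- and out-degree, so that each surviving edge keeps its tensor index and contributes identically in $G$ and $G'$. Everything after that is the bookkeeping of four signed updates and the two-case multiset cancellation above, which is routine.
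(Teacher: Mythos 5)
Your proof is correct. Note that the paper itself states \Cref{th:rpso} without any explicit proof (it is asserted as immediate, illustrated only by the example in \Cref{fig:toy_2}), so there is no written argument to diverge from; your reconstruction---pointwise preservation of every vertex's in- and out-degree inherited from the PSO structure, hence no re-indexing of surviving edges in \biot{G}, followed by the two-case multiset cancellation of the four signed updates under the degree-equality hypothesis---is exactly the reasoning the paper implicitly relies on, and it is sound and complete as written.
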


An example of RPSO is shown in \Cref{fig:toy_2}\textbf{a} (right).

\begin{algorithm}[!t]
  \small
    \caption{\algoC}\label{alg:nudhy-c}
    \begin{algorithmic}[1]
    \Require Graph $G \doteq (L, R, D) \in \states^{\nameC}$, Number of Steps $s$
    \Ensure Graph sampled uniformly from $\states^{\nameC}$ 
    \RepeatN{$s$}
      \State $\mathsf{out} \gets$ flip a biased coin with heads prob $|\fromL|/|D|$
      \State $\mathsf{out2} \gets$ flip a fair coin
      \If{$\mathsf{out2}$ is \emph{heads}}
      	\State $u$, $v$ $\gets$ different random left vertices with the same in- and out-degree
	    \If{$\mathsf{out}$ is \emph{heads} \textbf{and} $\setdiff{u}{v} \neq \varnothing$} \Comment{\begin{scriptsize} $\setdiff{u}{v} \doteq \{(\alpha,\beta) \suchthat \alpha \in \outneighs{u} \setminus \outneighs{v} \, \wedge \beta \in \outneighs{v} \setminus \outneighs{u}\}$\end{scriptsize}}
          	\State $\alpha$, $\beta$ $\gets$ pair drawn u.a.r. from $\setdiff{u}{v}$
          	\State perform $(u,\alpha, +1), (v,\beta, +1) \rpso (u, \beta, +1), (v, \alpha, +1)$ on $G$
	    \ElsIf{$\mathsf{out}$ is \emph{tails} \textbf{and} $\setidiff{u}{v} \neq \varnothing$} \Comment{\begin{scriptsize} $\setidiff{u}{v} \doteq \{(\alpha, \beta) \suchthat \alpha \in \inneighs{u} \setminus \inneighs{v} \, \wedge \, \beta \in \inneighs{v} \setminus \inneighs{u}\}$\end{scriptsize}}
          	\State $\alpha$, $\beta$ $\gets$ pair drawn u.a.r. from $\setidiff{u}{v}$
          	\State perform $(u,\alpha,-1), (v,\beta, -1) \rpso (u, \beta, -1), (v, \alpha, -1)$ on $G$
	    \EndIf
      \Else
      	\State $\alpha$, $\beta$ $\gets$ different random right vertices with the same in- and out-degree
      	\If{$\mathsf{out}$ is \emph{heads} \textbf{and} $\setidiff{\alpha}{\beta} \neq \varnothing$}
      		\State $u$, $v$ $\gets$ pair drawn u.a.r. from $\setidiff{\alpha}{\beta}$
      		\State perform $(u,\alpha, +1), (v,\beta, +1) \rpso (u, \beta, +1), (v, \alpha, +1)$ on $G$
      	\ElsIf{$\mathsf{out}$ is \emph{tails} \textbf{and} $\setdiff{\alpha}{\beta} \neq \varnothing$}
      		\State $u$, $v$ $\gets$ pair drawn u.a.r. from $\setdiff{\alpha}{\beta}$
      		\State perform $(u,\alpha,-1), (v,\beta, -1) \rpso (u, \beta, -1), (v, \alpha, -1)$ on $G$
      	\EndIf
      \EndIf
    \End
    \State \Return $G$
    \end{algorithmic}
\end{algorithm}

The state space $\states^{\nameC}$ is a directed weighted graph where each vertex is a bipartite directed graph with the same JOINT of $\observedbip$, and edges connect graphs that can be transformed into each other via an RPSO.
For each pair of vertices, there is at most one RPSO that can transform the first one into the second one, and self-loops are added to guarantee that the Markov chain is aperiodic.
\Cref{ax:algoc} defines a transition probability distribution $\neighdistr{G}$ over the set of neighbors of any $G \in \states^{\nameC}$ and proves that $\sum_{G' \in \states^{\nameC}}\neighdistr{G}(G') = 1$.
By observing that each RPSO is reversible and that the number of common in- and out-neighbors between any pair of nodes does not change after the application of an RPSO, we have that $\neighdistr{G}(G') = \neighdistr{G'}(G)$ and that the transition matrix $\{\neighdistr{G}(G')\}$ is doubly stochastic.
Finally, in \Cref{ax:proofs} we prove irreducibility by showing that $\states^{\nameC}$ is strongly connected.
From these results, we obtain that the stationary distribution is the uniform distribution.
\Cref{alg:nudhy-c} illustrates the sampling procedure of \algoC.

\bibliographystyle{Science}
\bibliography{refs}

\clearpage

\startcontents[appendix]
\printcontents[appendix]{}{1}{\setcounter{tocdepth}{2}}

\appendix
\addcontentsline{toc}{section}{Supplementary Material}

\section{Proofs}\label{ax:proofs}

\begin{lemma}\label{thm:nudhya_connected}
$\states^{\nameA}$ is strongly connected via PSOs.
\end{lemma}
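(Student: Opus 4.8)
The plan is to deduce strong connectivity from ordinary connectivity by exploiting reversibility, and then to reduce connectivity to a classical fact about bipartite degree sequences. Since \Cref{th:pso} makes every PSO $e_1, e_2 \pso e_3, e_4$ reversible (by $e_3, e_4 \pso e_1, e_2$), every directed edge of $\states^{\nameA}$ comes with its reverse, so any walk from $G$ to $G'$ runs backwards to a walk from $G'$ to $G$. It therefore suffices to show that any two graphs $G, G' \in \states^{\nameA}$ --- i.e. any two directed bipartite graphs sharing the left/right in- and out-degree sequences of $\observedbip$ --- are joined by some finite sequence of PSOs.

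The first key step is to split the problem by edge direction. A PSO with $d=+1$ deletes and inserts only edges of $\fromL$ and leaves $\fromR$ untouched, and symmetrically for $d=-1$; moreover the $+1$-PSO preserves exactly the out-degrees of the left vertices and the head-sizes (\textbf{P4} and \textbf{P1}), while the $-1$-PSO preserves the in-degrees of the left vertices and the tail-sizes (\textbf{P3} and \textbf{P2}). Hence the two layers $G^{+} \doteq (L,R,\fromL)$ and $G^{-} \doteq (L,R,\fromR)$ --- each a \emph{simple} bipartite graph, since a fixed direction admits at most one edge per pair --- evolve completely independently. I would thus transform $G$ into $G'$ in two phases: first apply only $+1$-PSOs to turn $G^{+}$ into $G'^{+}$, then only $-1$-PSOs to turn $G^{-}$ into $G'^{-}$. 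Equivalently, $\states^{\nameA}$ is the Cartesian product of the two single-layer switch graphs, so it is connected as soon as each factor is.

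It then remains to prove connectivity of a single layer: any two simple bipartite graphs on $(L,R)$ with identical left- and right-degree sequences are connected by switches (the PSO restricted to one direction is precisely the classical interchange). Writing $A$ for the current layer's edge set and $B$ for the target's, I would induct on $\sizeof{\Delta}$ where $\Delta = A \mathbin{\triangle} B$. When $\Delta \neq \varnothing$, the red edges $A \setminus B$ and blue edges $B \setminus A$ have equal degree at every vertex (because $A$ and $B$ agree in total degree and share $A \cap B$), so $\Delta$ decomposes into alternating red--blue cycles. Picking a shortest such cycle yields four vertices $u \neq v \in L$, $\alpha \neq \beta \in R$ on which a valid PSO removes two red edges and inserts edges that are either blue (hence in the target) or, in the worst case, a single fresh edge, strictly decreasing $\sizeof{\Delta}$.

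The hard part will be verifying that the chosen switch is a \emph{legal} PSO --- that the two inserted edges are genuinely absent, so no parallel edge is created --- while still forcing $\sizeof{\Delta}$ to drop. The delicate situation is when the naive switch along the alternating cycle is ``blocked'' because one would-be new edge already lies in $A$; this is exactly where minimality of the cycle is used, since a blocking edge in $A \setminus B$ would give a strictly shorter alternating cycle and contradict the choice, while a blocking edge in $A \cap B$ requires a short rerouting argument. Resolving this case cleanly (or, alternatively, simply invoking Ryser's interchange theorem for $0$--$1$ matrices with fixed margins) is the crux; everything else is bookkeeping on degrees and on which edges enter or leave $\Delta$.
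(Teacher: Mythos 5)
Your proposal is correct and follows essentially the same route as the paper's proof: both observe that PSOs act separately on the $d=+1$ and $d=-1$ layers, so connectivity of $\states^{\nameA}$ reduces to two independent instances of the classical fact that undirected bipartite graphs with equal degree sequences are connected under switches, and both then concatenate the two switch sequences. The only difference is in how that classical fact is discharged---the paper invokes the canonical-path construction of Kannan et al., whereas you sketch the standard symmetric-difference/alternating-cycle proof of the same result (or, as you note, one may simply cite Ryser's interchange theorem), which is an equally valid way to finish.
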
  
\begin{proof}
We prove that $\states^{\nameA}$ is strongly connected, by showing that any pair of distinct graphs $G_1 \doteq (L, R, D_1)$ and $G_2 \doteq (L, R, D_2)$ in $\states^{\nameA}$ can be transformed into one another through a sequence of PSOs.

We recall that PSOs swap only edges with the same direction.
Hence, we can independently address the edges in $\fromL_1$ and $\fromL_2$, and the edges in $\fromR_1$ and $\fromR_2$.
Given $\fromL_1$ and $\fromL_2$ (resp. $\fromR_1$ and $\fromR_2$), we construct a \emph{canonical path} from $\fromL_1$ to $\fromL_2$ (resp. from $\fromR_1$ to $\fromR_2$), following the procedure outlined in Section 2.1 of \cite{kannan1999simple}.
This canonical path delineates a series of \emph{switches}, i.e., edge swaps between pairs of edges.
Although the original procedure was tailored for undirected bipartite graphs with the same degree distribution, we can extend it to our scenario.
Since the edges in $\fromL_1$ and $\fromL_2$ (resp. $\fromR_1$ and $\fromR_2$) have the same direction, we can see $(L, R, \fromL_1)$ and $(L, R, \fromL_2)$ as undirected bipartite graphs.
Consequently, each PSO involving edges in $\fromL_1$ and $\fromL_2$ (resp. $\fromR_1$ and $\fromR_2$) corresponds to a switch operation.

Therefore, the sequence of PSOs transforming $G_1$ into $G_2$ simply consists in the concatenation of the switch sequence from $\fromL_1$ to $\fromL_2$, and the switch sequence from $\fromR_1$ to $\fromR_2$.
\end{proof}

\begin{lemma}\label{thm:nudhyc_connected}
$\states^{\nameC}$ is strongly connected via RPSOs.
\end{lemma}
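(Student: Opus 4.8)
The plan is to mirror the proof of \Cref{thm:nudhya_connected}, but to track degree \emph{classes} rather than only degrees. I would first record two structural facts about RPSOs. First, like a PSO, an RPSO removes $(u,\alpha,d),(v,\beta,d)$ and inserts $(u,\beta,d),(v,\alpha,d)$, so each of the four incident vertices keeps both its in- and its out-degree; hence the class $(\indegG{x},\outdegG{x})$ of every vertex is invariant, and all graphs under consideration share the per-vertex degrees (and therefore the classes) of $\observedbip$. Second, an RPSO swaps only edges of one common direction $d$, and its admissibility depends solely on the (fixed) classes of its endpoints. Consequently the edges of $\fromL$ (direction $+1$) and of $\fromR$ (direction $-1$) can be rewired completely independently, and $\biot{\observedbip}$ splits into two class-to-class edge-count matrices, one per direction. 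It therefore suffices to prove connectivity separately for each direction and then concatenate the two swap sequences exactly as in \Cref{thm:nudhya_connected}.

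Fix the direction, say $+1$, and regard the $+1$-edges as a bipartite graph in which every left and right vertex carries its class as an immutable colour. I would next observe that a single-direction PSO preserves $\biot{\cdot}$ if and only if its two left endpoints share a class or its two right endpoints share a class: the swap turns the class-pair multiset $\{(c_u,c_\alpha),(c_v,c_\beta)\}$ into $\{(c_u,c_\beta),(c_v,c_\alpha)\}$, and these coincide exactly under the RPSO condition. Thus the admissible moves are precisely the JOINT-preserving swaps, and the claim reduces to a statement of joint-degree-matrix type: any two colour-respecting bipartite edge sets $F_1,F_2$ with identical class-to-class edge counts are connected by colour-respecting swaps. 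This is the bipartite analogue of joint-degree-matrix realization-connectivity results~\cite{stanton2012constructing,gjoka2015construction}, and I would adapt their argument to our colouring, which is finer than a pure degree colouring but only refines the bookkeeping.

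The engine is an induction on $\sizeof{F_1 \triangle F_2}$: the goal is to exhibit one RPSO applicable to $F_1$ that strictly decreases this symmetric difference. Picking an edge $(u,\alpha)\in F_1\setminus F_2$ with $u$ of left-class $A$ and $\alpha$ of right-class $B$, equality of the $A$--$B$ counts forces an edge $(u',\alpha')\in F_2\setminus F_1$ with $u'\in A$ and $\alpha'\in B$; since $u,u'$ and $\alpha,\alpha'$ lie in common classes, they can anchor admissible swaps. When a direct colour-respecting $4$-cycle is present, one swap reduces the difference by four and the induction proceeds. \textbf{The main obstacle} is the residual case in which no single admissible swap makes progress: there I would route the correction through an auxiliary vertex of the \emph{same} class as $u$ (or $\alpha$), performing two RPSOs whose net effect reduces $\sizeof{F_1\triangle F_2}$ while leaving $\biot{\cdot}$ fixed at every intermediate step. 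Proving that such a same-class relay vertex and detour always exist is the delicate part, and it is exactly here---rather than in the degree-only argument of \Cref{thm:nudhya_connected}---that equality of the full class-to-class counts, and not merely of the marginal degree sequences, is indispensable.
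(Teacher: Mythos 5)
Your setup is correct and, up to that point, parallels the paper's own proof: the paper likewise observes that edges of direction $+1$ and $-1$ can never be swapped with one another and handles the two directions independently, and your characterization of the admissible moves is right (the swap $(u,\alpha),(v,\beta)\to(u,\beta),(v,\alpha)$ preserves the JOINT iff $u,v$ share a degree class or $\alpha,\beta$ do, since the class-pair multiset $\{(c_u,c_\alpha),(c_v,c_\beta)\}$ equals $\{(c_u,c_\beta),(c_v,c_\alpha)\}$ exactly in that case). The genuine gap is the engine you place on top of this reduction. You induct on $\sizeof{F_1 \triangle F_2}$ and, when no single admissible swap makes progress, you invoke a two-RPSO detour through a same-class relay vertex that strictly decreases the symmetric difference---and you yourself flag that proving this relay always exists is ``the delicate part.'' That delicate part \emph{is} the theorem. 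Concretely, the natural relay move is not monotone: given $(u,\alpha)\in F_2\setminus F_1$, equality of degrees gives some $(u,\gamma)\in F_1\setminus F_2$, and equality of class-to-class counts gives some $(u',\alpha')\in F_1\setminus F_2$ in the same class pair, so the swap of $(u,\gamma)$ with $(u',\alpha')$ is admissible because $u$ and $u'$ share a class; but the two edges it creates, $(u,\alpha')$ and $(u',\gamma)$, need not lie in $F_2$ (and may already be present, blocking the swap), so $\sizeof{F_1 \triangle F_2}$ can fail to decrease. Nothing in your outline supplies the finer potential function, or the guarantee of a bounded-length admissible detour, needed to close this case.

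This difficulty is precisely why the paper does not argue by symmetric-difference descent. Following \cite{czabarka2015realizations}, it first \emph{canonicalizes}: \Cref{lemma:lemma4} and \Cref{cor:cor5} show that any $G \in \states^{\nameC}$ can be driven by RPSOs to a balanced realization, in which the edges from each degree class to each other class are spread as evenly as possible over the vertices of the class. Then \Cref{lemma:lemma6} and \Cref{th:th7} match the degree spectra of two balanced realizations vertex by vertex, by applying Ryser's theorem \cite{ryser1963combinatorial} to small auxiliary bipartite graphs and lifting each auxiliary swap back to an RPSO. Finally, once the spectra agree, each class-pair subgraph of the two graphs has identical degree sequences, and inside such a subgraph \emph{every} Ryser swap is automatically an RPSO, which finishes the proof. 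The balancing step is exactly the device that replaces your missing relay lemma: it provides a canonical intermediate target that both graphs can reach by monotone progress on the quantities $C^L_G(j)$ and $C^R_G(j)$, so no step of the argument ever has to decrease the raw symmetric difference in place. To salvage your route you would essentially have to reprove connectivity of joint-degree-matrix realizations under restricted swaps from scratch; the correct proofs of that statement in the literature are substantially more involved than the two-swap detour you sketch, which is a strong hint that the missing step is not a routine verification.
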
  
\begin{proof}
We prove that $\states^{\nameC}$ is strongly connected, by adapting the proof for the case of undirected bipartite graphs, proposed in \cite{czabarka2015realizations}.
Let $\mathcal{C}$ be the list of distinct tuples of in- and out-degrees of vertices in $G$, i.e., $c_i \doteq (k,l)$ iff $\exists v \in L, R \suchthat \indegG{v} = k \wedge \outdegG{v} = l$.
We define the \emph{degree spectrum} of $v \in L, R$ as the vector $s_G(v)$ where $s_G(v)[i]$ is the number of vertices with in-degree $c_i[0]$ and out-degree $c_i[1]$ to which $v$ is connected.
We partition $L$ into $V^L_1, \dots, V^L_{|\mathcal{C}|}$ such that each $V^L_i$ contains the vertices in $L$ with in-degree $c_i[0]$ and out-degree $c_i[1]$.
Similarly, we partition $R$ into $V^R_1, \dots, V^R_{|\mathcal{C}|}$.

For each $j$ such that $\left|V^L_j\right| \neq 0$ and each $i$, we set
\[
A^L_j(i) \doteq \frac{\biot{G}[c_j[0], c_j[1], c_i[0], c_i[1], +1]}{\left|V^L_j\right|}\enspace.
\]
Similarly, for each $j$ such that $\left|V^R_j\right| \neq 0$ and each $i$, we set
\[
A^R_j(i) \doteq \frac{\biot{G}[c_i[0], c_i[1], c_j[0], c_j[1], -1]}{|V^R_j|}\enspace.
\]
We say that $V^L_j$ (resp. $V^R_j$) is \emph{balanced} in $G$, if it is empty, or for each $V^R_i$ (resp. $V^L_i$) the edges connecting $V^L_j$ to $V^R_i$ (resp. $V^R_j$ to $V^L_i$) are as uniformly distributed on $V^L_j$ as possible. 
This happens when for all $v \in V^L_j$ and all $i$, $s_G(v)[i] \in \left\{\left\lfloor A^L_j(i)\right\rfloor, \left\lceil A^L_j(i)\right\rceil\right\}$. 
We say that $G$ is \emph{left-balanced} (resp. \emph{right-balanced}) if $V^L_i$ (resp. $V^R_i$) is balanced for all $i$.
Finally, for each $v \in L$ and $i$ such that $V^R_i \neq \varnothing$, we define $c^L_G(v,i) \doteq \left\lfloor\left|A^L_j(i) - s_G(v)[i]\right|\right\rfloor$, where $c_j = \left(\indegG{v}, \outdegG{v}\right)$, and for each $j$ we define $C_G^L(j) \doteq \sum_{v \in V^L_j}\sum_i c^L_G(v,i)$.
Similarly, for each $v \in R$ and $i$ such that $V^L_i \neq \varnothing$, we define $c^R_G(v,i)$ and for each $j$ we define $C_G^R(j)$.

We observe that we cannot swap two edges $(u,\alpha,d)$ and $(v,\beta,d')$ such that $d \neq d'$, because the resulting graph would have different in- and out-degree sequences. For example, if $D = \{(u,\alpha,+1)$ $,$ $(v,\beta,-1)\}$, by swapping $(u,\alpha,+1)$ and $(v,\beta,-1)$, the in-degree of $\alpha$ goes from $1$ to $0$, while its out-degree goes from $0$ to $1$.
As a result, we can treat the case where we swap edges with direction $1$ and the case where we swap edges with direction $-1$ independently.

To prove that $\states^{\nameC}$ is strongly connected, we first need to prove the following lemma for $C^L_G$. 
The proof for $C^R_G$ follows the same steps.
\begin{lemma}[Lemma 4~\cite{czabarka2015realizations}]\label{lemma:lemma4}
If $C^L_G(j) \neq 0$, then there exists $u,v \in V^L_j$ and a RPSO $(u,\alpha,+1)$ $,$ $(v,\beta,+1)$ $\rpso$ $(u,\beta,+1), (v,\alpha,+1)$ transforming $G$ into $G'$ such that $C^L_{G'}(j) < C^L_G(j)$ and for all $l \neq j$ $C^L_{G'}(l) = C^L_G(l)$.
\end{lemma}
\begin{proof}
We choose $u,v \in V^L_j$ such that $s_G(u)[i]$ is minimal and $s_G(v)[i]$ is maximal among the vertices in $V^L_j$.
Since $u$ has fewer neighbors in $V^R_i$ than $v$, there exists $\beta \in V^R_i$ such that $(v,\beta,+1) \in D$ but $(u,\beta,+1) \notin D$.
Since $u$ and $v$ have the same out-degrees and $s_G(v)[i] > s_G(u)[i]$, there exists a $k \neq i$ such that $s_G(u)[k] > s_G(v)[k]$, and hence there exists $\alpha \in V^R_k$ such that $(u,\alpha,+1) \in D$ but $(v,\alpha,+1) \notin D$.
Therefore, $(u,\alpha,+1)$ $,$ $(v,\beta,+1) \rpso (v,\alpha,+1)$ $,$ $(u,\beta,+1)$ is a RPSO.
Similarly to \cite{czabarka2015realizations}, by applying such RPSO we obtain a graph $G'$ such that $C^L_{G'}(j) < C^L_G(j)$.
Since the RPSO involves only left vertices with in- and out-degree combination $c_j$, all the $C^L_G(l)$ for $l \neq j$ remain unchanged, i.e., $C^L_{G'}(l) = C^L_G(l)$.
\end{proof}

Thanks to \Cref{lemma:lemma4}, we have the following corollary:
\begin{corollary}[Corollary 5~\cite{czabarka2015realizations}]\label{cor:cor5}
Given a directed bipartite graph $G \in \states^{\nameC}$, there exists a series of RPSOs transforming $G$ into a left-balanced (resp. right-balanced) graph $G' \in \states^{\nameC}$.
\end{corollary} 
\begin{proof}
We follow the proof of \cite{czabarka2015realizations}, which shows that successive applications of \Cref{lemma:lemma4} for each $i$ such that $C^L_{G}(i) \neq 0$ (resp. $C^R_{G}(i) \neq 0$), give a sequence of RPSOs that transforms $G$ into a left-balanced (resp. right-balanced) graph.
\end{proof}
We observe that when balancing the left side of $G$, we do not affect the values $C^R_{G}(i)$, and similarly, when balancing the right side of $G$, we do not affect the values $C^L_{G}(i)$.
As a result, we can first apply \Cref{cor:cor5} to transform $G$ into a left-balanced graph $G'$, and then apply \Cref{cor:cor5} to transform $G'$ into a right-balanced graph $G''$ that is left-balanced as well.

To prove that left- and right-balanced graphs can be connected via RPSOs, we first define two auxiliary bipartite graphs $\mathcal{A}^L(G,j) \doteq (U^L,P^L;E^L)$ and $\mathcal{A}^R(G,j) \doteq (U^R,P^R;E^R)$, such that $U^L$ (resp. $U^R$) contains a vertex $u_v$ for each $v \in V_j^L$ (resp. $V_j^R$), $P^L$ (resp. $P^R$) contains a vertex $p_i$ for each $V_i^R$ (resp. $V_i^L$) such that $A^L_j(i) \notin \mathbb{N}$ (resp. $A^R_j(i) \notin \mathbb{N}$), and $E^L$ (resp. $E^R$) contains an edge $(u_v,p_i)$ for each $v$ such that $s_G(v)[i] = \lfloor A^L_j(i) \rfloor + 1$ (resp. $s_G(v)[i] = \lfloor A^R_j(i) \rfloor + 1$). 
Then, we define the \emph{bipartite swap operation} (BSO) as the operation that, given an undirected bipartite graph $B \doteq (V,E)$, takes two edges $(u,a)$, $(v,b)$ $\in E$ such that $(u,b)$, $(v,a)$ $\notin E$, and generates a new bipartite graph $B' = (V, E \setminus \{(u,a), (v,b)\} \cup \{(u,b), (v,a)\})$. It can be easily seen that BSOs preserve the left and right degree distributions of $B$.

Graphs in $\states^{\nameC}$ and their auxiliary graphs are linked by the following lemma:
\begin{lemma}[Lemma 6~\cite{czabarka2015realizations}]\label{lemma:lemma6}
If there is a BSO transforming $\mathcal{A}^L(G,j)$ (resp. $\mathcal{A}^R(G,j)$) into $\mathcal{\hat{A}}^L(G,j)$, then there is a RPSO transforming $G$ into $G'$ such that $\mathcal{A}^L(G',j) = \mathcal{\hat{A}}^L(G,j)$ (resp. $\mathcal{A}^R(G',j) = \mathcal{\hat{A}}^R(G,j)$), and $s_G(v) = s_{G'}(v)$ for each vertex $v \notin V_j^L$ (resp. $v \notin V_j^R$).
\end{lemma}
\begin{proof}
We can straightforwardly use the proof in \cite{czabarka2015realizations} to prove the lemma for both the $\mathcal{A}^L(G,j)$ and the $\mathcal{A}^R(G,j)$ case.
\end{proof}

Thanks to \Cref{lemma:lemma6}, we can prove the following theorem, which establishes a relation between any two left-balanced (resp. right-balanced) graphs in $\states^{\nameC}$ :
\begin{theorem}[Theorem 7~\cite{czabarka2015realizations}]\label{th:th7}
If $G_1$ and $G_2$ are two left-balanced (resp. right-balanced) graphs in $\states^{\nameC}$, then there is a series of RPSOs transforming $G_1$ into $G_1'$, such that $s_{G_1'}(v) = s_{G_2}(v)$ for each $v \in L$ (resp. $v \in R$).
\end{theorem}
\begin{proof}
We prove the theorem for the left-balanced case. The other case can be handled in a similar way.
We construct a sequence of graphs $G^0_1, G^1_1, G^2_1, \dots, G^{|\mathcal{C}|}_1$ with $G_1^0 = G_1$ such that for each $i$ there is a sequence of RPSOs transforming $G^{i-1}_1$ to $G^i_1$ such that $s_{G^i_1}(v) = s_{G_2}(v)$ for each $v \in V^L_i$ and $s_{G^i_1}(v) = s_{G^{i-1}_1}(v)$ for each $v \notin V^L_i$. 
Since \textbf{(i)} all the vertices in $V^L_i$ have the same out-degree and \textbf{(ii)} $G_1$ and $G_2$ have the same JOINT, we have that $\mathcal{A}^L(G^{i-1}_1, i)$ and $\mathcal{A}^L(G_2, i)$ have the same degree sequences.
Therefore, we can apply the Ryser’s Theorem~\cite{ryser1963combinatorial} to obtain a sequence of BSOs transforming one into the other. 
Thanks to \Cref{lemma:lemma6}, we know that there exists a sequence of RPSOs transforming $G_1^{i-1}$ into the desired $G^i_1$.
The proof concludes by setting $G_1' = G^{|\mathcal{C}|}_1$.
\end{proof}

Given two left- and right-balanced graphs $G_1$ and $G_2$, we can apply \Cref{th:th7} two times: the first application transforms $G_1$ into $G_1'$ such that $s_{G_1'}(v) = s_{G_2}(v)$ for each $v \in L$, while the second application transforms $G_1'$ into $G_1''$ such that $s_{G_1''}(v) = s_{G_2}(v)$ for each $v \in R$.
We observe that the second application of the theorem does not affect the values $s_{G_1'}(v)$ for $v \in L$ because it acts only on the outgoing edges of vertices in $R$.

\begin{figure}[!t]
  \centering
  \includegraphics[width=\columnwidth]{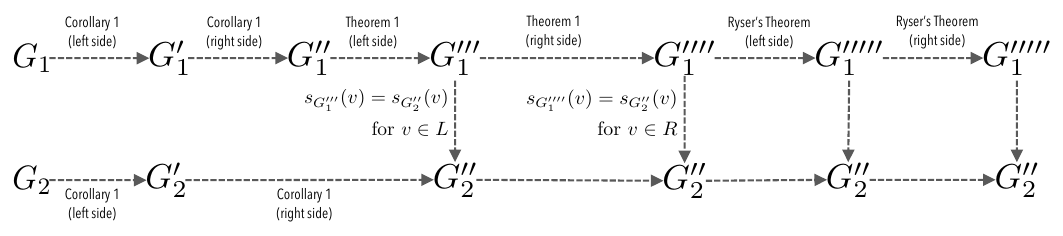}
  \caption{Steps to prove that $\states^{\nameC}$ is strongly connected.}
  \label{fig:proofN2}
\end{figure}

We can now prove that $\states^{\nameC}$ is strongly connected.
The main steps of the proof are illustrated in \Cref{fig:proofN2}.
Let $G_1$ and $G_2$ two non-isomorphic graphs in $\states^{\nameC}$.
We apply \Cref{cor:cor5} two times to transform $G_1$ and $G_2$ into left- and right-balanced graphs $G_1''$ and $G_2''$.
Next, we apply \Cref{th:th7} to transform $G_1''$ into a left-balanced realization $G_1'''$ such that $s_{G_1'''}(v) = s_{G_2''}(v)$ for each $v \in L$.
Next, we we apply \Cref{th:th7} to transform $G_1'''$ into a left- and right-balanced realization $G_1''''$ such that $s_{G_1''''}(v) = s_{G_2''}(v)$ for each $v \in R$.
We note that, the application of the theorem on the right side does not affect the values $s_{G_1'''}(u)$ for $u \in L$, and hence $G_1''''$ is both left and right-balanced. 
For $i, j$, let $G_{1,L}^{ij}$ and $G_{2,L}^{ij}$ the bipartite graphs consisting of edges in $G_1''''$ and $G_2''$, respectively, from vertices in $V^L_i$ to vertices in $V^R_j$.
Similarly, let $G_{1,R}^{ij}$ and $G_{2,R}^{ij}$ the bipartite graphs consisting of edges in $G_1''''$ and $G_2''$, respectively, from vertices in $V^R_i$ to vertices in $V^L_j$.
Let $T \in \{L, R\}$.
Since $s_{G_1''''}(v) = s_{G_2''}(v)$ for all $v \in T$, the degree sequences of $G_{1,T}^{ij}$ and $G_{2,T}^{ij}$ are the same.
Moreover, since all the left vertices in $G_{1,T}^{ij}$ and $G_{2,T}^{ij}$ have the same combination of in- and out-degree, a BSO in $G_{1,T}^{ij}$ is a RPSO in $G_1''''$.
Therefore, for each $i,j$ and each $T \in \{L,R\}$, we can apply the Ryser’s theorem~\cite{ryser1963combinatorial} to obtain a sequence of BSOs transforming $G_{1,T}^{ij}$ into $G_{2,T}^{ij}$, hence obtaining a sequence of RPSOs transforming $G_1''''$ into $G_2''$.
\end{proof}
\section{Algorithmic Details of \algoA}\label{ax:algoa}

Let $\posL \doteq \{v \in L \suchthat \outdeg{G}{v} > 0\}$ the subset of left vertices with out-going edges,
$\posR \doteq \{\alpha \in R \suchthat \outdeg{G}{\alpha} > 0\}$ the subset of right vertices with out-going edges, and
$\setdiff{u}{v} \doteq \{(\alpha,\beta) \suchthat \alpha \in \outneighs{u} \setminus \outneighs{v} \, \wedge \beta \in \outneighs{v} \setminus \outneighs{u}\}$ be the set of pairs of out-neighbors of $u$ and $v$ that are not out-neighbors of $v$ and $u$, respectively.

\begin{algorithm}[ht!]
  \small
    \caption{\algoA}\label{alg:nudhy-a-details}
    \begin{algorithmic}[1]
    \Require Graph $G \doteq (L, R, D) \in \states^{\nameA}$, Number of Steps $s$
    \Ensure Graph sampled uniformly from $\states^{\nameA}$ 
    \RepeatN{$s$}
      \State $\mathsf{out} \gets$ flip a biased coin with heads probability $|\fromL|/|D|$
      \If{$\mathsf{out}$ is \emph{heads}}
        \State $u$, $v$ $\gets$ different vertices drawn u.a.r. from $\posL$
        \If{$|\setdiff{u}{v}| = 0$}
          \textbf{continue}\label{line:sl1} \Comment{\textit{self-loop}}
        \Else
          \State $\alpha$, $\beta$ $\gets$ pair drawn u.a.r. from $\setdiff{u}{v}$
          \State $d \gets +1$
        \EndIf
      \Else
        \State $\alpha$, $\beta$ $\gets$ different vertices drawn u.a.r. from $\posR$
        \If{$|\setdiff{\alpha}{\beta}| = 0$}
          \textbf{continue}\label{line:sl2} \Comment{\textit{self-loop}}
        \Else
          \State $u$, $v$ $\gets$ pair drawn u.a.r. from $\setdiff{\alpha}{\beta}$
          \State $d \gets -1$
        \EndIf
      \EndIf
      \State $G \gets $perform $(u,\alpha,d), (v,\beta, d) \pso (u, \beta, d), (v, \alpha, d)$ on $G$\label{line:apply} \Comment{\textit{transition always accepted}}
    \End
    \State \Return $G$
    \end{algorithmic}
\end{algorithm}
 
To sample a neighbor $G'$ of $G$, we first flip a biased coin that outputs \emph{heads} with probability $|\fromL|/|D|$ and \emph{tails} with probability $|\fromR|/|D|$\footnote{Any other probability can be used; the idea is to prefer the direction for which there are more valid PSOs.}.
If the outcome is \emph{heads}, we draw a pair of different vertices $u,v \in \posL$ uniformly at random between all pairs.
If $\setdiff{u}{v} = \varnothing$, we set $G' = G$ (self-loop).
Otherwise, we draw $(\alpha,\beta)$ from $\setdiff{u}{v}$ uniformly at random.
By construction, $(u,\alpha,+1), (v,\beta,+1) \pso (u,\beta,+1), (v,\alpha,+1)$ is a PSO, and thus we can set $G' = (L, R, D \setminus \{(u,\alpha,+1), (v,\beta,+1)\} \cup \{(u,\beta,+1), (v,\alpha,+1)\})$.
If the outcome is \emph{tails}, we draw a pair of different vertices $\alpha,\beta \in \posR$ uniformly at random between all pairs, and then follow the same procedure described for the \emph{heads} case.
This procedure induces a probability distribution $\neighdistr{G}$ over the set of neighbors of $G$.
Each directed edge $(G, G')$ in $\states^{\nameA}$ has thus weight $\neighdistr{G}{G'}$.
Let $(u,\alpha,d), (v,\beta,d) \pso (u,\beta,d), (v,\alpha,d)$ be the sampled PSO, and let $G'$ be the graph obtained by performing such PSO on $G$.

If $d = +1$, then
\begin{equation}\label{eq:neighdistleft}
\neighdistr{G}(G') = \frac{|\fromL|}{|D|}
           {\displaystyle\binom{|\posL|}{2}^{-1}}
                     \left(\left|\setdiff{u}{v}\right|\right)^{-1}\,.
\end{equation}

If $d = -1$, then 
    \begin{equation}\label{eq:neighdistright}
    \neighdistr{G}(G') = \frac{|\fromR|}{|D|}
               {\displaystyle\binom{|\posR|}{2}^{-1}}
                         \left(\left|\setdiff{\alpha}{\beta}\right|\right)^{-1}\,.
    \end{equation}

Let $G \in \states^{\nameA}$ and $G'$ a neighbor of $G$ chosen according to $\neighdistr{G}$.
A MH algorithm accepts the transition from a state $G$ to a new state $G'$ with probability 
$\min\left\{1, \frac{\nullprob(G')\neighdistr{G'}(G)}{\nullprob(G)\neighdistr{G}(G')}\right\}$,
otherwise, it sets $G' = G$.
However, in our case, $\alpha \in \neighsplus{G}{u} \setminus \neighsplus{G}{v}$ implies that $\left|\neighsplus{G}{u} \setminus \neighsplus{G}{v}\right| = \left|\neighsplus{G'}{u} \setminus \neighsplus{G'}{v}\right|$, and $\beta \in \neighsplus{G}{v} \setminus \neighsplus{G}{u}$ implies that $|\neighsplus{G}{v} \setminus \neighsplus{G}{u}| = |\neighsplus{G'}{v} \setminus \neighsplus{G'}{u}|$.
As a consequence, it holds that $\neighdistr{G}(G') = \neighdistr{G'}(G)$, and thus $\neighdistr{G}(G')/\neighdistr{G'}(G) = 1$.
We next show that $\sum_{G' \in \states^{\nameA}}\neighdistr{G}(G') = 1$, which ensures us that the transition matrix $\{\neighdistr{G}(G')\}$ is doubly stochastic.
From these results, we obtain that the stationary distribution is the uniform distribution, and thus $\nullprob(G')/\nullprob(G) = 1$. This simplifies our use of MH, as the algorithm accepts the transition to the new state with probability $1$.

Let $\mathsf{k}^L$ (resp. $\mathsf{k}^R$) be the set of pairs of left (resp. right) vertices $u,v$ such that $\setdiff{u}{v} \neq \varnothing$,
and let $\mathcal{N}^L(u,v)$ (resp. $\mathcal{N}^R(u,v)$) be the set of graphs $G'$ that can be obtained from $G$ by applying a PSO consisting in edges with direction $+1$ (resp. $-1$) and involving the pair of vertices $(u,v) \in \mathsf{k}^L$ (resp. $(u,v) \in \mathsf{k}^R$).
We recall that there is only one of such pairs for each graph $G'$ adjacent to $G$.
Each $(u,v) \in \mathsf{k}^L$ is part of $\left|\setdiff{u}{v}\right|$ PSOs, and thus
$\sum_{G' \in \mathcal{N}^L(u,v)}\neighdistr{G}(G') = \frac{|\fromL|}{|D|}{\displaystyle\binom{|\posL|}{2}^{-1}}$. 
Similarly, each $(u,v) \in \mathsf{k}^R$ is part of $\left|\setdiff{u}{v}\right|$ PSOs, and thus
$\sum_{G' \in \mathcal{N}^R(u,v)}\neighdistr{G}(G') = \frac{|\fromR|}{|D|}{\displaystyle\binom{|\posR|}{2}^{-1}}$.

Each time we sample a pair of left (resp. right) vertices not in $\mathsf{k}^L$ (resp. $\mathsf{k}^R$) we remain in the state $G$, and thus
\[\neighdistr{G}(G) = \frac{|\fromL|}{|D|}\frac{\displaystyle\binom{|\posL|}{2} - |\mathsf{k}^L|}{\displaystyle\binom{|\posL|}{2}} + \frac{|\fromR|}{|D|}\frac{\displaystyle\binom{|\posR|}{2} - |\mathsf{k}^R|}{\displaystyle\binom{|\posR|}{2}}\,.\]

Finally, for each $G'$ not adjacent to $G$, we have that $\neighdistr{G}(G') = 0$.
Summing all these terms, we obtain our result:

\begin{small}
\begin{align*}
\sum_{G' \in \states^{\nameA}}\neighdistr{G}(G') =& \sum_{(u,v) \in \mathsf{k}^L}\frac{|\fromL|}{|D|}{\displaystyle\binom{|\posL|}{2}^{-1}}
+ \sum_{(u,v) \in \mathsf{k}^R}\frac{|\fromR|}{|D|}{\displaystyle\binom{|\posR|}{2}^{-1}}\\
&+ \frac{|\fromL|}{|D|}\left[\displaystyle\binom{|\posL|}{2} - |\mathsf{k}^L|\right]{\displaystyle\binom{|\posL|}{2}^{-1}} 
+ \frac{|\fromR|}{|D|}\left[\displaystyle\binom{|\posR|}{2} - |\mathsf{k}^R|\right]{\displaystyle\binom{|\posR|}{2}^{-1}}\\
=& \frac{|\fromL|}{|D|}{\displaystyle\binom{|\posL|}{2}^{-1}}|\mathsf{k}^L|
+ \frac{|\fromR|}{|D|}{\displaystyle\binom{|\posR|}{2}^{-1}}|\mathsf{k}^R|\\
&+ \frac{|\fromL|}{|D|}\left[\displaystyle\binom{|\posL|}{2} - |\mathsf{k}^L|\right]{\displaystyle\binom{|\posL|}{2}^{-1}} 
+ \frac{|\fromR|}{|D|}\left[\displaystyle\binom{|\posR|}{2} - |\mathsf{k}^R|\right]{\displaystyle\binom{|\posR|}{2}^{-1}}\\
=& \frac{|\fromL|}{|D|} + \frac{|\fromR|}{|D|} = 1
\end{align*}
\end{small}

\subsection{A (Theoretically) Faster Sampler for \nameA}

\Cref{alg:nudhy-a-details} may get stuck in the same state $G$ for multiple iterations, due to the fact that it selects pairs of vertices that cannot form any PSO (line~\ref{line:sl1} and line~\ref{line:sl2}).
Consequently, achieving convergence may necessitate a large number of iterations.

In an effort to improve mixing time, we explored an alternative implementation of the MH approach for sampling from $\nullmodelA$. This variant, named \algoB, randomly samples pairs of edges at each iteration until it identifies a pair forming a PSO. It then accepts the transition to the new state $G'$ resulting from the applied PSO with a probability of $\min\left\{1, \frac{\neighdistr{G'}(G)}{\neighdistr{G}(G')}\right\}$.
In this case, $\neighdistr{G}(G') = \degG^{-1}$, where $\degG$ is the number of PSOs applicable to $G$.
Moreover, as different numbers of PSOs can be applied to different states $G \in \states^{\nameA}$, the equality $\neighdistr{G}(G') = \neighdistr{G'}(G)$ no longer holds.
As a result, the acceptance probability of \algoB is $\min\left\{1, \frac{\degG}{\mathsf{d}(G')}\right\}$.

\begin{algorithm}[!ht]
  \small
    \caption{\algoB}\label{alg:nudhy-b}
    \begin{algorithmic}[1]
    \Require Graph $G \doteq (L, R, D) \in \states^{\nameA}$, Number of Steps $s$
    \Ensure Graph sampled uniformly from $\states^{\nameA}$ 
    \State $G' \gets G$
    \RepeatN{$s$}
      \While{$(u, \alpha, d_1), (v, \beta, d_2)$ do not form a PSO}
        \State $(u, \alpha, d_1)$, $(v, \beta, d_2)$ $\gets$ edges drawn u.a.r. from $D$
      \EndWhile
      \State $G' \gets $apply $(u,\alpha,d_1), (v,\beta, d_2) \pso (u, \beta, d_1), (v, \alpha, d_2)$ to $G$
      \State $p \gets $ random real number in $[0,1]$
      \If{$p \leq \min\left(1, \degG/\mathsf{d}(G')\right)$}
        $G \gets G'$
      \EndIf
    \End
    \State \Return $G$
    \end{algorithmic}
\end{algorithm}

\Cref{alg:nudhy-b} illustrates the whole procedure.
Due to the absence of self-loops, \algoB does not get stuck in problematic states like \algoA.
As a result, it \emph{should} theoretically converge at a faster rate.
Nevertheless, our experimental findings contradict this expectation. We demonstrated that the computational overhead incurred by calculating the number of potential PSOs at each state counterbalances this anticipated advantage. This holds true even with the optimizations detailed in the next section.
Scalability experiments on the datasets in \Cref{tbl:datasets} in \Cref{ax:data} showed that \algoB is found to be at least one order of magnitude slower than \algoA.

\subsubsection{An Efficient Way to Compute the Degree of the Current State}

In this section we show how \algoB efficiently computes $\degG$ and keeps its value up-to-date as it moves in $\states^{\nameA})$.
Informally, $\degG$ is given by the number of edge pairs that \textbf{(i)} have the same direction, \textbf{(ii)} do not share any endpoint, and \textbf{(iii)} are not endpoints of a \emph{caterpillar}.
A caterpillar is a configuration of $3$ directed edges with the same direction involving four vertices.
Two edges involved in a caterpillar cannot form a PSO, because at least one of the edges that would be created by the PSO already exists in the graph.
Thus, $\degG$ can be computed as follows~\cite{gionis2007assessing}:

\begin{equation}\label{eq:degree}
\degG = J(G) - \typeB{G} + 2\butterfly{G}
\end{equation}

where
\begin{align*}
J(G) =& \frac{1}{2}\left(|\fromL| (|\fromL| + 1) - \sum_{v \in L}{\sqoutdeg{v}} - \sum_{\alpha \in R}{\sqindeg{\alpha}} \right)\\
+& \frac{1}{2}\left(|\fromR| (|\fromR| + 1) - \sum_{v \in L}{\sqindeg{v}} - \sum_{\alpha \in R}{\sqoutdeg{\alpha}} \right)
\end{align*}
is the number of disjoint pairs of edges with the same direction, 
\begin{align*}
\typeB{G} = \sum_{j=1}^{\maxo_L}\sum_{k=1}^{\maxi_R} (j-1)(k-1)\left(\sum_{i=1}^{\maxi_L}\sum_{l=1}^{\maxo_R}\sum_{d \in \{1,-1\}} \biot{G}[i,j,k,l,d] \right)
\end{align*}
is the number of caterpillars, and
\begin{align*}
\butterfly{G} = \sum_{\substack{u,v \in L\\u \neq v}}\binom{|\outneighs{v} \cap \outneighs{u}|}{2} + \sum_{\substack{\alpha,\beta \in R\\\alpha \neq \beta}}\binom{|\outneighs{\alpha} \cap \outneighs{\beta}|}{2}
\end{align*}
is the number of complete $k(2,2)$ graphs in $G$, a.k.a. \emph{butterflies}, formed by edges with the same direction.
Each $k(2,2)$ graph consists in $4$ caterpillars but includes only $2$ pairs of non-swappable edges. Thus, we must add $2\butterfly{G}$ to $\typeB{G}$, to obtain the correct count.

The change in $\typeB{G}$ due to the PSO $(u,\alpha,+1),$ $(v,\beta,+1)$ $\pso$ $(u,\beta,+1),$ $(v,\alpha,+1)$ is
\begin{align*}
\Delta(\overrightarrow{\typeB{G}}) = \left(|\outneighs{u}|-|\outneighs{v}|\right)\left(|\inneighs{\beta}|-|\inneighs{\alpha}|\right)\,,
\end{align*}
while the change in $\butterfly{G}$ is
\begin{align*}
\Delta(\overrightarrow{\butterfly{G}}) &=
\sum_{\substack{w \in \inneighs{\beta} \setminus \inneighs{\alpha}\\w \neq v}}|\outneighs{u} \cap \outneighs{w}| - \left(|\outneighs{v} \cap \outneighs{w}|-1\right)\\
&+ \sum_{\substack{w \in \inneighs{\alpha} \setminus \inneighs{\beta}\\w \neq u}}|\outneighs{v} \cap \outneighs{w}| - \left(|\outneighs{u} \cap \outneighs{w}|-1\right)\,.
\end{align*}

Similarly, the change in $\typeB{G}$ after the application of $(u,\alpha,-1),$ $(v,\beta,-1)$ $\pso$ $(u,\beta,-1),$ $(v,\alpha,-1)$ is
\begin{align*}
\Delta(\overleftarrow{\typeB{G}}) = \left(|\outneighs{\alpha}|-|\outneighs{\beta}|\right)\left(|\inneighs{v}|-|\inneighs{u}|\right)\,,
\end{align*}
while the change in $\butterfly{G}$ is
\begin{align*}
\Delta(\overleftarrow{\butterfly{G}}) &=
\sum_{\substack{w \in \outneighs{\beta} \setminus \outneighs{\alpha} \\w \neq v}}|\inneighs{u} \cap \inneighs{w}| - \left(|\inneighs{v} \cap \inneighs{w}|-1\right)\\
&+ \sum_{\substack{w \in \outneighs{\alpha} \setminus \outneighs{\beta}\\w \neq u}}|\inneighs{v} \cap \inneighs{w}| - \left(|\inneighs{u} \cap \inneighs{w}|-1\right)\,.
\end{align*}

\section{Algorithmic Details of \algoC}\label{ax:algoc}

Let first define some useful quantities:
\begin{squishlist}
  \item $\forall 0 \le i \le \maxi_L\,,1 \le j \le \maxo_L\,,\quad L^+_{i,j} \doteq \{v \in L \suchthat \outdegG{v} = j \, \wedge \, \indegG{v} = i\}\,;$
  \item $\forall 0 \le i \le \maxi_R\,,1 \le j \le \maxo_R\,,\quad R^+_{i,j} \doteq \{\alpha \in R \suchthat \outdegG{\alpha} = j \, \wedge \, \indegG{\alpha} = i\}\,;$
  \item $\forall 1 \le i \le \maxi_L\,,0 \le j \le \maxo_L\,,\quad L^-_{i,j} \doteq \{v \in L \suchthat \outdegG{v} = j \, \wedge \, \indegG{v} = i\}\,;$
  \item $\forall 1 \le i \le \maxi_R\,,0 \le j \le \maxo_R\,,\quad R^-_{i,j} \doteq \{\alpha \in R \suchthat \outdegG{\alpha} = j \, \wedge \, \indegG{\alpha} = i\}\,;$
  \item $\forall u,v \in L \cup R\,,\quad \setdiff{u}{v} \doteq \{(\alpha,\beta) \suchthat \alpha \in \outneighs{u} \setminus \outneighs{v} \, \wedge \beta \in \outneighs{v} \setminus \outneighs{u}\}\,;$
  \item $\forall u,v \in L \cup R\,,\quad \setidiff{u}{v} \doteq \{(\alpha, \beta) \suchthat \alpha \in \inneighs{u} \setminus \inneighs{v} \, \wedge \, \beta \in \inneighs{v} \setminus \inneighs{u}\}\,.$
\end{squishlist}

\begin{algorithm}[!ht]
  \small
    \caption{\algoC}\label{alg:nudhy-c-details}
    \begin{algorithmic}[1]
    \Require Graph $G \doteq (L, R, D) \in \states^{\nameC}$, Number of Steps $s$
    \Ensure Graph sampled uniformly from $\states^{\nameC}$ 
    \RepeatN{$s$}
      \State $\mathsf{out} \gets$ flip a biased coin with heads prob $|\fromL|/|D|$
      \State $\mathsf{out2} \gets$ flip a fair coin
      \If{$\mathsf{out}$ is \emph{heads} \textbf{and} $\mathsf{out2}$ is \emph{heads}}
        \State $i,j \gets$ ints drawn with prob $\vartheta(i,j)$ from $[0,\maxi_L]$ and $[1,\maxo_L]$ 
        \State $u$, $v$ $\gets$ different vertices drawn u.a.r. from $L^+_{ij}$
        \If{$\setdiff{u}{v} = \varnothing$}
          \textbf{continue} \Comment{\textit{self-loop}}
        \Else
          \State $\alpha$, $\beta$ $\gets$ pair drawn u.a.r. from $\setdiff{u}{v}$
          \State $d \gets +1$
        \EndIf
      \ElsIf{$\mathsf{out}$ is \emph{tails} \textbf{and} $\mathsf{out2}$ is \emph{heads}}
        \State $i,j \gets$ ints drawn with prob $\eta(i,j)$ from $[1,\maxi_L]$ and $[0,\maxo_L]$ 
        \State $u$, $v$ $\gets$ different vertices drawn u.a.r. from $L^-_{ij}$
        \If{$\setidiff{u}{v} = \varnothing$}
          \textbf{continue} \Comment{\textit{self-loop}}
        \Else
          \State $\alpha$, $\beta$ $\gets$ pair drawn u.a.r. from $\setidiff{u}{v}$
          \State $d \gets -1$
        \EndIf
      \ElsIf{$\mathsf{out}$ is \emph{heads} \textbf{and} $\mathsf{out2}$ is \emph{tails}}
        \State $i,j \gets$ ints drawn with prob $\phi(i,j)$ from $[1,\maxi_R]$ and $[0,\maxo_R]$ 
        \State $\alpha$, $\beta$ $\gets$ different vertices drawn u.a.r. from $R^-_{ij}$
        \If{$\setidiff{\alpha}{\beta} = \varnothing$}
          \textbf{continue} \Comment{\textit{self-loop}}
        \Else
          \State $u$, $v$ $\gets$ pair drawn u.a.r. from $\setidiff{\alpha}{\beta}$
          \State $d \gets +1$
        \EndIf
      \Else
        \State $i,j \gets$ ints drawn with prob $\nu(i,j)$ from $[0,\maxi_R]$ and $[1,\maxo_R]$ 
        \State $\alpha$, $\beta$ $\gets$ different vertices drawn u.a.r. from $R^+_{ij}$
        \If{$\setdiff{\alpha}{\beta} = \varnothing$}
          \textbf{continue} \Comment{\textit{self-loop}}
        \Else
          \State $u$, $v$ $\gets$ pair drawn u.a.r. from $\setdiff{\alpha}{\beta}$
          \State $d \gets -1$
        \EndIf
      \EndIf
      \State $G \gets $perform $(u,\alpha,d), (v,\beta, d) \rpso (u, \beta, d), (v, \alpha, d)$ on $G$ \Comment{\textit{transition always accepted}}
    \End
    \State \Return $G$
    \end{algorithmic}
\end{algorithm}

To sample a neighbor $G'$ of $G$, we first flip two coins.
The first one is an fair coin, while the second one is a biased coin that outputs \emph{heads} with probability $|\fromL|/|D|$.
Let denote with $\langle\mathsf{fair}, \mathsf{biased}\rangle$ the tuple consisting in the outcomes of the two coins.
We distinguish four cases.

\mpara{Case $\langle\mathsf{heads}, \mathsf{heads}\rangle$.}
We draw a pair of integers $0 \le i \le \maxi_L$ and $1 \le j \le \maxo_L$ with probability
\begin{equation}\label{eq:probability-hh}
  \vartheta(i,j) \doteq {\displaystyle \binom{|L^+_{i,j}|}{2}}\left({\sum_{k=0}^{\maxi_L}\sum_{l=1}^{\maxo_L}\displaystyle\binom{|L^+_{k,l}|}{2}}\right)^{-1},
\end{equation}
and then draw a pair $(u,v)$ of distinct vertices from $L^+_{i,j}$ uniformly at random.
If $|\setdiff{u}{v}| = 0$, we set $G' = G$. 
Otherwise, we draw $(\alpha,\beta)$ from $\setdiff{u}{v}$ uniformly at random. 
By construction, $(u,\alpha,+1),$ $(v,\beta,+1)$ $\rpso$ $(u,\beta,+1),$ $(v,\alpha,+1)$ is a RPSO, and we can set $G'$ to be the graph obtained by performing it on $G$.

\mpara{Case $\langle\mathsf{tails}, \mathsf{heads}\rangle$.}
We draw a pair of integers $1 \le i \le \maxi_L$ and $0 \le j \le \maxo_L$ with probability
\begin{equation}\label{eq:probability-th}
  \eta(i,j) \doteq \binom{|L^-_{i,j}|}{2}\left(\sum_{k=1}^{\maxi_L}\sum_{l=0}^{\maxo_L}\binom{|L^-_{k,l}|}{2}\right)^{-1},
\end{equation}
and then draw a pair $(u,v)$ of distinct elements from $L^-_{i,j}$ uniformly at random.
If $|\setidiff{u}{v}| = 0$, we set $G' = G$. 
Otherwise, we draw $(\alpha, \beta)$ from $\setidiff{u}{v}$ uniformly at random. 
By construction, $(u,\alpha,-1),$ $(v,\beta,-1)$ $\rpso$ $(u,\beta,-1),$ $(v,\alpha,-1)$ is a RPSO, and we can set $G'$ to be the graph obtained by performing it on $G$.

\mpara{Case $\langle\mathsf{heads}, \mathsf{tails}\rangle$.}
We draw a pair of integers $1 \le i \le \maxi_R$ and $0 \le j \le \maxo_R$ with probability
\begin{equation}\label{eq:probability-ht}
  \phi(i,j) \doteq \binom{|R^-_{i,j}|}{2}\left(\sum_{k=1}^{\maxi_R}\sum_{l=0}^{\maxo_R}\binom{|R^-_{k,l}|}{2}\right)^{-1},
\end{equation}
and then we draw a pair $(\alpha,\beta)$ of distinct elements from $R^-_{i,j}$ uniformly at random. 
If $|\setidiff{\alpha}{\beta}| = 0$, we set $G' = G$. 
Otherwise, we draw $(u,v)$ from $\setidiff{\alpha}{\beta}$ uniformly at random. 
By construction, $(u,\alpha,+1),(v,\beta,+1) \rpso (u,\beta,+1),(v,\alpha,+1)$ is a RPSO, and we can set $G'$ to be the graph obtained by performing it on $G$.

\mpara{Case $\langle\mathsf{tails}, \mathsf{tails}\rangle$.}
We draw a pair of integers $0 \le i \le \maxi_R$ and $1 \le j \le \maxo_R$ with probability
\begin{equation}\label{eq:probability-tt}
  \nu(i,j) \doteq \binom{|R^+_{i,j}|}{2}\left(\sum_{k=0}^{\maxi_R}\sum_{l=1}^{\maxo_R}\binom{|R^+_{k,l}|}{2}\right)^{-1},
\end{equation}
and then we draw a pair $(\alpha,\beta)$ of distinct elements from $R^+_{i,j}$ uniformly at random. 
If $|\setdiff{\alpha}{\beta}|= 0$, we set $G' = G$. 
Otherwise, we draw $(u,v)$ from $\setdiff{\alpha}{\beta}$ uniformly at random. 
By construction, $(u,\alpha,-1),(v,\beta,-1) \rpso (u,\beta,-1),(v,\alpha,-1)$ 
 is a RPSO, and we can set $G'$ to be the graph obtained by performing it on $G$.
This procedure induces a probability distribution $\neighdistr{G}$ over the set of neighbors of $G$. 
Each directed edge $(G, G')$ in $\states^{\nameC}$ has thus weight $\neighdistr{G}(G')$.
Let $(u,\alpha,d), (v,\beta,d) \rpso (u,\beta,d), (v,\alpha,d)$ be the sampled RPSO and $G'$ be the graph obtained by performing such RPSO on $G$.
We define the following two events:
\begin{align*}
  \mathsf{E}_L \doteq& \text{``vertices $u$ and $v$ have the same in- and out-degree'';}\\
  \mathsf{E}_R \doteq& \text{``vertices $\alpha$ and $\beta$ have the same in- and out-degree''.}
\end{align*}

Then, $\neighdistr{G}(G')$ takes one of the following six values:
\begin{itemize}
  \item if $d = +1$ and only $\mathsf{E}_L$ holds, then
    \begin{equation}\label{eq:neighdistrleft1}
      \neighdistr{G}(G') = 
      \frac{|\fromL|}{|D|}\frac{1}{2}\left({\displaystyle \sum_{k=0}^{\maxi_L}\sum_{l=1}^{\maxo_L}\binom{|L^+_{k,l}|}{2}}|\setdiff{u}{v}|\right)^{-1};
    \end{equation}
    \item if $d = +1$ and only $\mathsf{E}_R$ holds, then
    \begin{equation}\label{eq:neighdistrright1}
      \neighdistr{G}(G') = 
      \frac{|\fromL|}{|D|}\frac{1}{2}\left({\displaystyle \sum_{k=1}^{\maxi_R}\sum_{l=0}^{\maxo_R}\binom{|R^-_{k,l}|}{2}}|\setidiff{\alpha}{\beta}|\right)^{-1};
    \end{equation}
    \item if $d = +1$ and both $\mathsf{E}_L$ and $\mathsf{E}_R$ hold, then $\neighdistr{G}(G')$ is the sum of \cref{eq:neighdistrleft1} and \cref{eq:neighdistrright1};
    \item if $d = -1$ and only $\mathsf{E}_L$ holds, then
    \begin{equation}\label{eq:neighdistrleft-1}
      \neighdistr{G}(G') = 
      \frac{|\fromR|}{|D|}\frac{1}{2}\left({\displaystyle \sum_{k=1}^{\maxi_L}\sum_{l=0}^{\maxo_L}\binom{|L^-_{k,l}|}{2}}|\setidiff{u}{v}|\right)^{-1};
    \end{equation}
    \item if $d = -1$ and only $\mathsf{E}_R$ holds, then
    \begin{equation}\label{eq:neighdistrright-1}
      \neighdistr{G}(G') = 
      \frac{|\fromR|}{|D|}\frac{1}{2}\left({\displaystyle \sum_{k=0}^{\maxi_R}\sum_{l=1}^{\maxo_R}\binom{|R^+_{k,l}|}{2}}|\setdiff{\alpha}{\beta}|\right)^{-1};
    \end{equation}
    \item if $d=-1$ and both $\mathsf{E}_L$ and $\mathsf{E}_R$ hold, then $\neighdistr{G}(G')$ is the sum of \cref{eq:neighdistrleft-1} and \cref{eq:neighdistrright-1}.
\end{itemize}

Also in this case we have that $\neighdistr{G}(G') = \neighdistr{G'}(G)$, and thus we just need to prove that $\sum_{G' \in \states^{\nameC}}\neighdistr{G}(G') = 1$, to get that the acceptance probability of $\algoC$ is $1$.

In the following, let $\tilde{L}^+_{ij}$ be the set of pairs of vertices $u,v \in L^+_{ij}$ such that $\setdiff{u}{v} \neq \varnothing$, $\neg \tilde{L}^+_{ij} = L^+_{ij} \setminus \tilde{L}^+_{ij}$, and $\tilde{\mathsf{X}}^+_{uv}$ be the set of pairs of vertices $(\alpha, \beta) \in \setdiff{u}{v}$ such that $\indegG{\alpha}=\indegG{\beta}$ and $\outdegG{\alpha} = \outdegG{\beta}$.
We define $\tilde{L}^-_{ij}$, $\tilde{R}^+_{ij}$, $\tilde{R}^-_{ij}$, $\neg \tilde{L}^-_{ij}$, $\neg \tilde{R}^+_{ij}$, $\neg \tilde{R}^-_{ij}$, and $\tilde{\mathsf{X}}^-_{uv}$ in the same way.

Let consider the subset $\states^{\nameC}_+$ of graphs $G' \in \states^{\nameC}$ that can be obtained from $G$ applying a RPSO of type $(u, \alpha, +1), (v, \beta, +1)$ $\rpso$ $(u,\beta, +1), (v, \alpha, +1)$. 
Since two states in $\states^{\nameC}$ can be connected by at most one RPSO, it holds that

\begin{scriptsize}
\begin{align*}
\sum_{G' \in \states^{\nameC}_+}\neighdistr{G}(G') =&
\sum_{i=0}^{\maxi_L} \sum_{j=1}^{\maxo_L} \sum_{(u,v) \in \tilde{L}^+_{ij}}\frac{|\fromL|}{|D|}\frac{1}{2}\left[\sum_{k=0}^{\maxi_L} \sum_{l=1}^{\maxo_L} {\displaystyle\binom{|L^+_{kl}|}{2}^{-1}}\frac{|\tilde{\mathsf{X}}_{uv}^+|}{|\setdiff{u}{v}|}
+\left(\sum_{k=1}^{\maxi_R} \sum_{l=0}^{\maxo_R} \displaystyle\binom{|R^-_{kl}|}{2}\sum_{(\alpha,\beta) \in \tilde{\mathsf{X}}_{uv}^+}|\setidiff{\alpha}{\beta}|\right)^{-1}\right]\\
&+\sum_{i=0}^{\maxi_L} \sum_{j=1}^{\maxo_L} \sum_{(u,v) \in \tilde{L}^+_{ij}}\frac{|\fromL|}{|D|}\frac{1}{2}\sum_{k=0}^{\maxi_L} \sum_{l=1}^{\maxo_L} \binom{|L^+_{kl}|}{2}^{-1}\frac{|\setdiff{u}{v}|-|\tilde{\mathsf{X}}_{uv}^+|}{|\setdiff{u}{v}|}\\
&+ \sum_{i=1}^{\maxi_R} \sum_{j=0}^{\maxo_R} \sum_{(\alpha,\beta) \in \tilde{R}^-_{ij}}\frac{|\fromL|}{|D|}\frac{1}{2}\sum_{k=1}^{\maxi_R} \sum_{l=0}^{\maxo_R} \binom{|R^-_{kl}|}{2}^{-1}\frac{|\setidiff{\alpha}{\beta}| - |\tilde{\mathsf{X}}^-_{\alpha\beta}|}{|\setidiff{\alpha}{\beta}|}\\
=& \frac{|\fromL|}{|D|}\frac{1}{2}\sum_{i=0}^{\maxi_L} \sum_{j=1}^{\maxo_L} |\tilde{L}^+_{ij}|\left(\sum_{k=0}^{\maxi_L} \sum_{l=1}^{\maxo_L} \binom{|L^+_{kl}|}{2}\right)^{-1}
+ \sum_{i=1}^{\maxi_R} \sum_{j=0}^{\maxo_R} \sum_{(\alpha,\beta) \in \tilde{R}^-_{ij}}\frac{|\fromL|}{|D|}\frac{1}{2}\sum_{k=1}^{\maxi_R} \sum_{l=0}^{\maxo_R} \binom{|R^-_{kl}|}{2}^{-1}\frac{|\tilde{\mathsf{X}}^-_{\alpha\beta}|}{|\setidiff{\alpha}{\beta}|}\\
&+ \sum_{i=1}^{\maxi_R} \sum_{j=0}^{\maxo_R} \sum_{(\alpha,\beta) \in \tilde{R}^-_{ij}}\frac{|\fromL|}{|D|}\frac{1}{2}\sum_{k=1}^{\maxi_R} \sum_{l=0}^{\maxo_R} \binom{|R^-_{kl}|}{2}^{-1}\frac{|\setidiff{\alpha}{\beta}| - |\tilde{\mathsf{X}}^-_{\alpha\beta}|}{|\setidiff{\alpha}{\beta}|}\\
=&\frac{|\fromL|}{|D|}\frac{1}{2}\sum_{i=0}^{\maxi_L} \sum_{j=1}^{\maxo_L} |\tilde{L}^+_{ij}|\left(\sum_{k=0}^{\maxi_L} \sum_{l=1}^{\maxo_L} \binom{|L^+_{kl}|}{2}\right)^{-1}
+\frac{|\fromL|}{|D|}\frac{1}{2}\sum_{i=1}^{\maxi_R} \sum_{j=0}^{\maxo_R} |\tilde{R}^-_{ij}|\left(\sum_{k=1}^{\maxi_R} \sum_{l=0}^{\maxo_R} \binom{|R^-_{kl}|}{2}\right)^{-1}.
\end{align*}
\end{scriptsize}
The first term is the sum of the transition probabilities for the graphs reachable by RPSO where both source and destination vertices have the same in- and out-degree;
the second term refers to the cases where only the source vertices have the same in- and out-degree;
and the third term refers to the cases where only the destination vertices have the same in- and out-degree.
The first equality is obtained by observing that each $(\alpha,\beta)$ in the first term appears exactly $\tilde{\mathsf{X}}^-_{\alpha\beta}$ times in the summation, because it is considered for each pair of $(u,v) \in \setidiff{\alpha}{\beta}$ with the same degree.  

Similarly, the sum of the transition probabilities to the subset $\states^{\nameC}_-$ of graphs $G' \in \states^{\nameC}$ that can be obtained from $G$ applying a RPSO of type $(u, \alpha, -1),$ $(v, \beta, -1)$ $\rpso$ $(u,\beta, -1),$ $(v, \alpha, -1)$ is equal to
\begin{scriptsize}
\begin{align*}
\sum_{G' \in \states^{\nameC}_-}\neighdistr{G}(G') &=
\frac{|\fromR|}{|D|}\frac{1}{2}\sum_{i=1}^{\maxi_L} \sum_{j=0}^{\maxo_L} |\tilde{L}^-_{ij}|\left(\sum_{k=1}^{\maxi_L} \sum_{l=0}^{\maxo_L} \binom{|L^-_{kl}|}{2}\right)^{-1}
+ \frac{|\fromR|}{|D|}\frac{1}{2}\sum_{i=0}^{\maxi_R} \sum_{j=1}^{\maxo_R} |\tilde{R}^+_{ij}|\left(\sum_{k=0}^{\maxi_R} \sum_{l=1}^{\maxo_R} \binom{|R^+_{kl}|}{2}\right)^{-1}.
\end{align*}
\end{scriptsize}

Every time the pair of vertices drawn $(u,v)$ (resp. $(\alpha, \beta)$) belongs to $\neg \tilde{L}^+_{ij}$ in the case $\langle\mathsf{heads}, \mathsf{heads}\rangle$, or to $\neg \tilde{L}^-_{ij}$ in the case $\langle\mathsf{tails}, \mathsf{heads}\rangle$ (resp. $\neg \tilde{R}^+_{ij}$ in the case $\langle\mathsf{tails}, \mathsf{tails}\rangle$, or $\neg \tilde{R}^-_{ij}$ in the case $\langle\mathsf{heads}, \mathsf{tails}\rangle$), we transition to the graph itself, i.e. we perform a \emph{self-loop}. 
This gives the following self-loop probability:
\begin{scriptsize}
\begin{align*}
\neighdistr{G}(G)
=&\sum_{i=0}^{\maxi_L} \sum_{j=1}^{\maxo_L} \sum_{(u,v) \in \neg \tilde{L}^+_{ij}}\frac{|\fromL|}{|D|}\frac{1}{2}\sum_{k=0}^{\maxi_L} \sum_{l=1}^{\maxo_L} {\displaystyle \binom{|L^+_{kl}|}{2}^{-1}}
+ \sum_{k=1}^{\maxi_R} \sum_{l=0}^{\maxo_R} \sum_{(\alpha,\beta) \in \neg \tilde{R}^-_{ij}}\frac{|\fromL|}{|D|}\frac{1}{2}\sum_{k=1}^{\maxi_R} \sum_{l=0}^{\maxo_R} \binom{|R^-_{kl}|}{2}^{-1}\\
&+ \sum_{i=1}^{\maxi_L} \sum_{j=0}^{\maxo_L} \sum_{(u,v) \in \neg \tilde{L}^-_{ij}}\frac{|\fromR|}{|D|}\frac{1}{2}\sum_{k=1}^{\maxi_L} \sum_{l=0}^{\maxo_L} \binom{|L^-_{kl}|}{2}^{-1}
+ \sum_{k=0}^{\maxi_R} \sum_{l=1}^{\maxo_R} \sum_{(\alpha,\beta) \in \neg \tilde{R}^+_{ij}}\frac{|\fromR|}{|D|}\frac{1}{2}\sum_{k=0}^{\maxi_R} \sum_{l=1}^{\maxo_R} \binom{|R^+_{kl}|}{2}^{-1}\\
=& \sum_{i=0}^{\maxi_L} \sum_{j=1}^{\maxo_L} |\neg \tilde{L}^+_{ij}|\frac{|\fromL|}{|D|}\frac{1}{2}\sum_{k=0}^{\maxi_L} \sum_{l=1}^{\maxo_L} \binom{|L^+_{kl}|}{2}^{-1}
+ \sum_{k=1}^{\maxi_R} \sum_{l=0}^{\maxo_R} |\neg \tilde{R}^-_{ij}|\frac{|\fromL|}{|D|}\frac{1}{2}\sum_{k=1}^{\maxi_R} \sum_{l=0}^{\maxo_R} \binom{|R^-_{kl}|}{2}^{-1}\\
&+ \sum_{i=1}^{\maxi_L} \sum_{j=0}^{\maxo_L} |\neg \tilde{L}^-_{ij}|\frac{|\fromR|}{|D|}\frac{1}{2}\sum_{k=1}^{\maxi_L} \sum_{l=0}^{\maxo_L} \binom{|L^-_{kl}|}{2}^{-1}
+ \sum_{k=0}^{\maxi_R} \sum_{l=1}^{\maxo_R} |\neg \tilde{R}^+_{ij}|\frac{|\fromR|}{|D|}\frac{1}{2}\sum_{k=0}^{\maxi_R} \sum_{l=1}^{\maxo_R} \binom{|R^+_{kl}|}{2}^{-1}\,.
\end{align*}
\end{scriptsize}

By observing that each of the $\binom{|L^+_{ij}|}{2}$ pairs of vertices in $L^+_{ij}$ is either in $\tilde{L}^+_{ij}$ or in $\neg \tilde{L}^+_{ij}$, 
the sum of all the previous terms gives our result:
\begin{small}
\begin{align*}
\sum_{G' \in \states^{\nameC}}\neighdistr{G}(G') &= \sum_{G' \in \states^{\nameC}_+}\neighdistr{G}(G') + \sum_{G' \in \states^{\nameC}_-}\neighdistr{G}(G') + \neighdistr{G}(G)
= \frac{|\fromL|}{|D|}+ \frac{|\fromR|}{|D|} = 1\,.
\end{align*}
\end{small}

\subsection{A (Theoretically) Faster Sampler for \nameC}

We studied an edge-sampling-based MH algorithm also for sampling from $\nullmodelC$.
This algorithm, dubbed {\algoD}, samples pairs of edges uniformly at random until it finds a pair of edges forming a RPSO.
Its transition probability from a state $G$ to a state $G'$ is $\neighdistr{G}(G') = \degGR^{-1}$, where $\degGR$ is the number of RPSOs applicable to $G$.
The acceptance probability of {\algoD} is thus 
$\min\left\{1, \frac{\degGR}{\mathsf{d}^\mathrm{R}(G')}\right\}$.
\Cref{alg:nudhy-d} illustrates the procedure.

\begin{algorithm}[ht!]
	\small
    \caption{\algoD}\label{alg:nudhy-d}
    \begin{algorithmic}[1]
    \Require Graph $G \doteq (L, R, D) \in \states^{\nameC}$, Number of Steps $s$
    \Ensure Graph sampled uniformly from $\states^{\nameC}$ 
    \State $G' \gets G$
    \RepeatN{$s$}
    	\While{$(u, \alpha, d_1), (v, \beta, d_2)$ do not form a RPSO}
      	\State $(u, \alpha, d_1)$, $(v, \beta, d_2)$ $\gets$ edges drawn u.a.r. from $D$
      \EndWhile
        \State $G' \gets $apply $(u,\alpha,d_1), (v,\beta, d_2) \pso (u, \beta, d_1), (v, \alpha, d_2)$ to $G$
        \State $p \gets $ random real number in $[0,1]$
        \If{$p \leq \min\left(1, \degGR/\mathsf{d}^\mathrm{R}(G')\right)$}
          $G \gets G'$
        \EndIf
    \End
    \State \Return $G$
    \end{algorithmic}
\end{algorithm}

In the following, we discuss how \algoD computes $\degGR$ and keeps its value up-to-date as it moves in the Markov graph.
We recall that a pair of edges forms a RPSO if they \textbf{(i)} have the same direction, \textbf{(ii)} do not share any endpoint, \textbf{(iii)} are not endpoints of a caterpillar, and \textbf{(iv)} either the sources or the destinations of the two edges have the same in- and out-degree.
Thus, $\degGR$ can be computed as follows:
\begin{small}
\begin{equation}\label{eq:degree2}
\degGR = \overrightarrow{J}^\mathrm{R}(G) + \overleftarrow{J}^\mathrm{R}(G) - \overrightarrow{C}^\mathrm{R}(G) - \overleftarrow{C}^\mathrm{R}(G) + 2\overrightarrow{\butterfly{G}}^\mathrm{R} + 2\overleftarrow{\butterfly{G}}^\mathrm{R}\,.
\end{equation}
\end{small}
In the following, we introduce each term in \Cref{eq:degree2} and show how to calculate it efficiently.
Let $\mathsf{P}_L \doteq \left[\maxi_L\right] \times \left[\maxo_L\right]$ and
$\mathsf{P}_R \doteq \left[\maxi_R\right] \times \left[\maxo_R\right]$. 
For $i \in [0, \maxi_L]$, $j \in [0, \maxo_L]$, $l \in [0, \maxi_R]$, and $k \in [0, \maxo_R]$, we define the following quantities:
\begin{squishlist} 
\item $\overrightarrow{\biotG}^{ij} \doteq \sum\limits_{(l,k) \in \mathsf{P}_R}\biotG[i,j,l,k,+1]$ and $\overrightarrow{\biotG}^{lk} \doteq \sum\limits_{(i,j) \in \mathsf{P}_L}\biotG[i,j,l,k,+1]$;
\item $\overleftarrow{\biotG}^{ij} \doteq \sum\limits_{(l,k) \in \mathsf{P}_R}\biotG[i,j,l,k,-1]$ and $\overleftarrow{\biotG}^{lk} \doteq \sum\limits_{(i,j) \in \mathsf{P}_L}\biotG[i,j,l,k,-1]$;
\item $L_{ij} \doteq \{v \in L \suchthat \outdegG{v} = j \, \wedge \, \indegG{v}=i\}$;
\item $R_{ij} \doteq \{\alpha \in R \suchthat \outdegG{\alpha} = j\, \wedge \, \indegG{\alpha} = i\}$.
\end{squishlist}

The term $\overrightarrow{J}^\mathrm{R}(G)$ counts the disjoint pairs of candidate swappable edges with direction $+1$:
\begin{align}
\label{eq:A}
\overrightarrow{J}^\mathrm{R}(G) =&
\sum_{i=0}^{\maxi_L}\sum_{j=0}^{\maxo_L}\binom{\overrightarrow{\biotG}^{ij} + 1}{2}\\
\label{eq:B}
+& \sum_{l=0}^{\maxi_R}\sum_{k=0}^{\maxo_R}\binom{\overrightarrow{\biotG}^{lk} + 1}{2}\\
\label{eq:C}
-& \sum_{i=0}^{\maxi_L}\sum_{l=0}^{\maxi_R}\sum_{j=0}^{\maxo_L}\sum_{k=0}^{\maxo_R}\binom{\biotG[i,j,l,k,+1]}{2}\\
\label{eq:D}
-& \sum_{v \in L}\binom{\outdegG{v}}{2}\\
\label{eq:E}
-& \sum_{\alpha \in R}\binom{\indegG{\alpha}}{2}\\
-& 2|\fromL|
\end{align}
where \cref{eq:A} is the number of pairs of edges in $\fromL$ from vertices with the same in- and out-degree;
where \cref{eq:B} is the number of pairs of edges in $\fromL$ to vertices with the same in- and out-degree;
where \cref{eq:C} is the number of pairs of distinct edges in $\fromL$ from vertices with the same in- and out-degree to vertices with the same in- and out-degree (we subtract the term because these pairs are counted twice);
where \cref{eq:D} is the number of pairs of edges in $\fromL$ with the same source (we do not want wedges);
and where \cref{eq:A} is the number of pairs of edges in $\fromL$ with the same destination (we do not want wedges).
We also subtract $2|\fromL|$ because we count twice the pairs of edges of the type $(e,e)$.

Similarly, $\overleftarrow{J}^\mathrm{R}(G)$ counts those with direction $-1$:
\begin{align*}
\overleftarrow{J}^\mathrm{R}(G) &= 
\sum_{i=0}^{\maxi_L}\sum_{j=0}^{\maxo_L}\binom{\overleftarrow{\biotG}^{ij} + 1}{2} + 
\sum_{l=0}^{\maxi_R}\sum_{k=0}^{\maxo_R}\binom{\overleftarrow{\biotG}^{lk} + 1}{2}
- \sum_{i=0}^{\maxi_L}\sum_{l=0}^{\maxi_R}\sum_{j=0}^{\maxo_L}\sum_{k=0}^{\maxo_R}\binom{\biotG[i,j,l,k,-1]}{2}\\
&- \sum_{v \in L}\binom{\indegG{v}}{2} - \sum_{\alpha \in R}\binom{\outdegG{\alpha}}{2} - 2|\fromR|\,.
\end{align*}

The term $\overrightarrow{C}^\mathrm{R}(G)$ is the number of configurations of $3$ directed edges with direction $+1$ involving four vertices such that the left and/or the right vertices have the same in- and out-degree:
\begin{align}
\label{eq:F}
\overrightarrow{C}^\mathrm{R}(G) =& \sum_{(u,\alpha,+1)\in \fromL}\left(\outdegG{u}_{\upharpoonright \alpha} - 1\right)\left(\indegG{\alpha} - 1\right)\\
\label{eq:G} 
+& \sum_{(u,\alpha,+1)\in \fromL}\left(\outdegG{u} - 1\right)\left(\indegG{\alpha}_{\upharpoonright u} - 1\right)\\
\label{eq:H}
-& \sum_{(u,\alpha,+1)\in \fromL}\left(\outdegG{u}_{\upharpoonright \alpha} - 1\right)\left(\indegG{\alpha}_{\upharpoonright u} - 1\right)
\end{align}
where \cref{eq:F} is the number of configurations where the right vertices have the same in- and out-degree ($\outdegG{u}_{\upharpoonright \alpha}$ is the number of out-neighbors of $u$ with the same in- and out-degree of $\alpha$);
where \cref{eq:G} is the number of configurations where the left vertices have the same in- and out-degree ($\indegG{\alpha}_{\upharpoonright u}$ is the number of in-neighbors of $\alpha$ with the same in- and out-degree of $u$);
and where \cref{eq:H} is the number of configurations where the left vertices have the same in- and out-degree and the right vertices have the same in- and out-degree (we subtract the term because these configurations are counted twice).

Similarly, $\overleftarrow{C}^\mathrm{R}(G)$ is the number of configurations of $3$ directed edges with direction $-1$ involving four vertices such that the left and/or the right vertices have the same in- and out-degree:
\begin{align*}
\overleftarrow{C}^\mathrm{R}(G) =& \sum_{(u,\alpha,-1)\in \fromR}\left(\indegG{u}_{\upharpoonright \alpha} - 1\right)\left(\outdegG{\alpha} - 1\right)\\
+& \sum_{(u,\alpha,-1)\in \fromR}\left(\indegG{u} - 1\right)\left(\outdegG{\alpha}_{\upharpoonright u} - 1\right)\\
-& \sum_{(u,\alpha,-1)\in \fromR}\left(\indegG{u}_{\upharpoonright \alpha} - 1\right)\left(\outdegG{\alpha}_{\upharpoonright u} - 1\right)\,.
\end{align*}

The term $\overrightarrow{\butterfly{G}}^\mathrm{R}$ is the number of butterflies whose edges have direction $+1$ and such that the left and/or the right vertices have the same in- and out-degree:
\begin{align}
\label{eq:I}
\overrightarrow{\butterfly{G}}^\mathrm{R} =& \sum_{(i,j)\in\mathsf{P}_L}\sum_{\substack{u,v \in L_{ij}\\u \neq v}}\binom{|\outneighs{u} \cap \outneighs{v}|}{2}\\
\label{eq:L}
+& \sum_{(i,j)\in\mathsf{P}_R}\sum_{\substack{\alpha,\beta \in R_{ij}\\\alpha \neq \beta}}\binom{|\inneighs{\alpha} \cap \inneighs{\beta}|}{2}\\
\label{eq:M}
-& \sum_{(i,j)\in\mathsf{P}_L}\sum_{\substack{u,v \in L_{ij}\\u \neq v}}\sum_{(l,k)\in \mathsf{P}_R}\binom{|\outneighs{u}_{\upharpoonright lk} \cap \outneighs{v}_{\upharpoonright lk}|}{2}
\end{align}
where \cref{eq:I} is the number of butterflies where the left vertices have the same in- and out-degree;
where \cref{eq:L} is the number of butterflies where the right vertices have the same in- and out-degree;
and where \cref{eq:M} is the number of butterflies where the left vertices have the same in- and out-degree and the right vertices have the same in- and out-degree ($\outneighs{v}_{\upharpoonright lk}$ is the set of out-neighbors of $v$ with in-degree $l$ and out-degree $k$). We subtract this term because these butterflies are counted twice. 

Similarly, the term $\overleftarrow{\butterfly{G}}^\mathrm{R}$ is the number of butterflies whose edges have direction $-1$ and such that the left and/or the right vertices have the same in- and out-degree:
\begin{align*}
\overleftarrow{\butterfly{G}}^\mathrm{R} =& \sum_{(i,j)\in \mathsf{P}_L}\sum_{\substack{u,v \in L_{ij}\\u \neq v}}\binom{|\inneighs{u} \cap \inneighs{v}|}{2}\\
+& \sum_{(i,j)\in \mathsf{P}_R}\sum_{\substack{\alpha,\beta \in R_{ij}\\\alpha \neq \beta}}\binom{|\outneighs{\alpha} \cap \outneighs{\beta}|}{2}\\
-& \sum_{(i,j)\in \mathsf{P}_L}\sum_{\substack{u,v \in L_{ij}\\u \neq v}}\sum_{(l,k)\in \mathsf{P}_R}\binom{|\inneighs{u}_{\upharpoonright lk} \cap \inneighs{v}_{\upharpoonright lk}|}{2}\,.
\end{align*}

We now show how to obtain the degree of the next state from the degree of the previous state.
The change in $\overrightarrow{C}^\mathrm{R}(G)$ after the application of $(u,\alpha,+1), (v,\beta,+1)$ $\rpso$ $(u,\beta,+1), (v,\alpha,+1)$ is
\begin{align*}
\Delta(\overrightarrow{C}^\mathrm{R}(G)) &= \left(\outdegG{u}_{\upharpoonright \beta} - \outdegG{v}_{\upharpoonright \beta}\right)\left(\indegG{\beta}-\indegG{\beta}_{\upharpoonright v}\right)\\ 
+& \left(\outdegG{v}_{\upharpoonright \alpha} - \outdegG{u}_{\upharpoonright \alpha}\right)\left(\indegG{\alpha}-\indegG{\alpha}_{\upharpoonright v}\right)
\end{align*}
if only $u$ and $v$ have the same in- and out-degree;
\begin{align*}
\Delta(\overrightarrow{C}^\mathrm{R}(G)) &= \left(\outdegG{v}_{\upharpoonright \beta} - \outdegG{v}\right)\left(\indegG{\beta}_{\upharpoonright v}-\indegG{\alpha}_{\upharpoonright v}\right)\\ 
+& \left(\outdegG{u}_{\upharpoonright \beta} - \outdegG{u}\right)\left(\indegG{\alpha}_{\upharpoonright u}-\indegG{\beta}_{\upharpoonright u}\right)
\end{align*}
if only $\alpha$ and $\beta$ have the same in- and out-degree; and
\begin{align*}
\Delta(\overrightarrow{C}^\mathrm{R}(G)) &= \left(\outdegG{v}_{\upharpoonright \beta} - \outdegG{u}_{\upharpoonright \beta}\right)\left(\indegG{\beta}_{\upharpoonright v}-\indegG{\alpha}_{\upharpoonright v}\right)
\end{align*}
if both $u$,$v$ and $\alpha$,$\beta$ have the same in- and out-degree.

The change in $\overleftarrow{C}(G)$ due to the RPSO $(u,\alpha,-1), (v,\beta,-1)$ $\rpso$ $(u,\beta,-1),$ $(v,\alpha,-1)$ is
\begin{align*}
\Delta(\overleftarrow{C}^\mathrm{R}(G)) &= \left(\indegG{u}_{\upharpoonright \beta} - \indegG{v}_{\upharpoonright \beta}\right)\left(\outdegG{\beta}-\outdegG{\beta}_{\upharpoonright v}\right)\\ 
+& \left(\indegG{v}_{\upharpoonright \alpha} - \indegG{u}_{\upharpoonright \alpha}\right)\left(\outdegG{\alpha}-\outdegG{\alpha}_{\upharpoonright v}\right)
\end{align*}
if only $u$ and $v$ have the same in- and out-degree;
\begin{align*}
\Delta(\overleftarrow{C}^\mathrm{R}(G)) &= \left(\indegG{v}_{\upharpoonright \beta} - \indegG{v}\right)\left(\outdegG{\beta}_{\upharpoonright v}-\outdegG{\alpha}_{\upharpoonright v}\right)\\ 
+& \left(\indegG{u}_{\upharpoonright \beta} - \indegG{u}\right)\left(\outdegG{\alpha}_{\upharpoonright u}-\outdegG{\beta}_{\upharpoonright u}\right)
\end{align*}
if only $\alpha$ and $\beta$ have the same in- and out-degree; and
\begin{align*}
\Delta(\overleftarrow{C}^\mathrm{R}(G)) &= \left(\indegG{v}_{\upharpoonright \beta} - \indegG{u}_{\upharpoonright \beta}\right)\left(\outdegG{\beta}_{\upharpoonright v}-\outdegG{\alpha}_{\upharpoonright v}\right)
\end{align*}
if both $u$, $v$ and $\alpha$, $\beta$ have the same in- and out-degree.

Let denote with 
\begin{squishlist}
	\item $\outneighs{u}_{\upharpoonright z}$ the subset of out-neighbors of $u$ with same in- and out-degree of $z$;
	\item $\inneighs{u}_{\upharpoonright z}$ the subset of in-neighbors of $u$ with same in- and out-degree of $z$;
	\item $L_{\upharpoonright u}$ the subset of left vertices with same in- and out-degree of $u$;
	\item $\overrightarrow{\gamma}(u,v)$ the number of common out-neighbors of $u$ and $v$;
	\item $\overleftarrow{\gamma}(u,v)$ the number of common in-neighbors of $u$ and $v$;
	\item $\overrightarrow{\gamma}(u,v)_{\upharpoonright w}$ the number of common out-neighbors of $u$ and $v$ with in- and out-degree as $w$;
	\item $\overleftarrow{\gamma}(u,v)_{\upharpoonright w}$ the number of common in-neighbors of $u$ and $v$ with in- and out-degree as $w$.
\end{squishlist}

Then, the change in $\overrightarrow{\butterfly{G}}^\mathrm{R}$ after the application of $(u,\alpha,+1),$ $(v,\beta,+1)$ $\rpso$ $(u,\beta,+1),$ $(v,\alpha,+1)$ is
\begin{align}
\label{eq:b1}
\Delta(\overrightarrow{\butterfly{G}}^\mathrm{R}) =& 
\sum_{\substack{w \in \inneighs{\alpha}_{\upharpoonright v} \setminus \inneighs{\beta}\\w \neq u}}\overrightarrow{\gamma}(v,w)\\
\label{eq:b2}
+& \sum_{\substack{\delta \in \outneighs{v}_{\upharpoonright \alpha} \setminus \outneighs{u}\\\delta \neq \beta}}\overleftarrow{\gamma}(\alpha,\delta)\\
\label{eq:b3}
-& \sum_{\substack{w \in \inneighs{\beta}_{\upharpoonright v} \setminus \inneighs{\alpha}\\w \neq v}}\overrightarrow{\gamma}(v,w)-1\\
\label{eq:b4}
-& \sum_{\substack{\delta \in \outneighs{v}_{\upharpoonright \beta} \setminus \outneighs{u}\\\delta \neq \beta}}\overleftarrow{\gamma}(\beta,\delta)-1\\
\label{eq:b5}
+& \sum_{\substack{w \in \inneighs{\beta}_{\upharpoonright u}\setminus \inneighs{\alpha}\\w \neq v}}\overrightarrow{\gamma}(u,w)\\
\label{eq:b6}
+& \sum_{\substack{\delta \in \outneighs{u}_{\upharpoonright \beta}\setminus \outneighs{v}\\\delta \neq \alpha}}\overleftarrow{\gamma}(\beta,\delta)\\
\label{eq:b7}
-& \sum_{\substack{w \in \inneighs{\alpha}_{\upharpoonright u}\setminus \inneighs{\beta}\\ w \neq u}}\overrightarrow{\gamma}(u,w)-1\\
\label{eq:b8}
-& \sum_{\substack{\delta \in \outneighs{u}_{\upharpoonright \alpha}\setminus \outneighs{v}\\ \delta \neq \alpha}}\overleftarrow{\gamma}(\alpha,\delta)-1\\
\label{eq:b9}
-& \sum_{\substack{w \in L_{\upharpoonright v}\\w \neq v,u}}(\overrightarrow{\gamma}(v,w)_{\upharpoonright \alpha}-1)
- \sum_{\substack{w \in L_{\upharpoonright u}\\w \neq v,u}}(\overrightarrow{\gamma}(u,w)_{\upharpoonright \beta}-1)\\
\label{eq:b11}
+& \sum_{\substack{w \in L_{\upharpoonright v}\\w \neq v,u}}(\overrightarrow{\gamma}(v,w)_{\upharpoonright \beta}-1)
+ \sum_{\substack{w \in L_{\upharpoonright u}\\w \neq v,u}}(\overrightarrow{\gamma}(u,w)_{\upharpoonright \alpha}-1)\,.
\end{align}
\Cref{eq:b1} indicates the number of additional butterflies where the left vertices have the same in- and out-degree that $v$ will form after the swap (the number of common neighbors between $v$ and $w$ increases by $1$, unless $w$ shares $\beta$ with $v$);
\Cref{eq:b2} indicates the number of additional butterflies where the right vertices have the same in- and out-degree that $\alpha$ will form after the swap (the number of common neighbors between $\alpha$ and $\delta$ increases by $1$, unless $\delta$ shares $u$ with $\alpha$);
\Cref{eq:b3} indicates the number of butterflies where the left vertices have the same in- and out-degree that $v$ will not form anymore after the swap (the number of common neighbors between $v$ and $w$ decreases by $1$, unless $w$ is neighbor of $\alpha$);
\Cref{eq:b4} indicates the number of butterflies where the right vertices have the same in- and out-degree that $\beta$ will not form anymore after the swap (the number of common neighbors between $\beta$ and $\delta$ decreases by $1$, unless $\delta$ is neighbor of $u$);
\Cref{eq:b5} indicates the number of additional butterflies where the left vertices have the same in- and out-degree that $u$ will form after the swap (the number of common neighbors between $u$ and $w$ increases by $1$, unless $w$ shares $\alpha$ with $u$);
\Cref{eq:b6} indicates the number of additional butterflies where the right vertices have the same in- and out-degree that $\beta$ will form after the swap (the number of common neighbors between $\beta$ and $\delta$ increases by $1$, unless $\delta$ shares $v$ with $\beta$);
\Cref{eq:b7} indicates the number of butterflies where the left vertices have the same in- and out-degree that $u$ will not form anymore after the swap (the number of common neighbors between $u$ and $w$ decreases by $1$, unless $w$ is neighbor of $\beta$);
\Cref{eq:b8} indicates the number of butterflies where the right vertices have the same in- and out-degree that $\alpha$ will not form anymore after the swap (the number of common neighbors between $\alpha$ and $\delta$ decreases by $1$, unless $\delta$ is neighbor of $v$);
\Cref{eq:b9} indicates the number of butterflies counted twice in \Cref{eq:b1} and \Cref{eq:b2} (resp. \Cref{eq:b5} and \Cref{eq:b6}) because they involve both left and right vertices with the same in- and out-degree; 
\Cref{eq:b11} indicates the number of butterflies removed two times in \Cref{eq:b3} and \Cref{eq:b4} (resp. \Cref{eq:b7} and \Cref{eq:b8}) because they involve both left and right vertices with the same in- and out-degree.

Similarly, the change in $\overleftarrow{\butterfly{G}}^\mathrm{R}$ due to the RPSO $(u,\alpha,-1),$ $(v,\beta,-1)$ $\rpso$ $(u,\beta,-1),$ $(v,\alpha,-1)$ is 
\begin{align*}
\Delta(\overleftarrow{\butterfly{G}}^\mathrm{R}) &=
\sum_{\substack{w \in \outneighs{\alpha}_{\upharpoonright v} \setminus \outneighs{\beta}\\w \neq u}}\overleftarrow{\gamma}(v,w) - \sum_{\substack{w \in \outneighs{\beta}_{\upharpoonright v} \setminus \outneighs{\alpha}\\w \neq v}}\left(\overleftarrow{\gamma}(v,w)-1\right)\\
&+ \sum_{\substack{\delta \in \inneighs{v}_{\upharpoonright \alpha} \setminus \inneighs{u}\\\delta \neq \beta}}\overrightarrow{\gamma}(\alpha,\delta) - \sum_{\substack{\delta \in \inneighs{u}_{\upharpoonright \alpha}\setminus \inneighs{v}\\ \delta \neq \alpha}}\left(\overrightarrow{\gamma}(\alpha,\delta)-1\right)\\
&+ \sum_{\substack{\delta \in \inneighs{u}_{\upharpoonright \beta}\setminus \inneighs{v}\\\delta \neq \alpha}}\overrightarrow{\gamma}(\beta,\delta) - \sum_{\substack{\delta \in \inneighs{v}_{\upharpoonright \beta} \setminus \inneighs{u}\\\delta \neq \beta}}\left(\overrightarrow{\gamma}(\beta,\delta)-1\right)\\
&+ \sum_{\substack{w \in \outneighs{\beta}_{\upharpoonright u}\setminus \outneighs{\alpha}\\w \neq v}}\overleftarrow{\gamma}(u,w) - \sum_{\substack{w \in \outneighs{\alpha}_{\upharpoonright u}\setminus \outneighs{\beta}\\ w \neq u}}\left(\overleftarrow{\gamma}(u,w)-1\right)\\
&+ \sum_{\substack{w \in L_{\upharpoonright v}\\w \neq v,u}}\overleftarrow{\gamma}(v, w)_{\upharpoonright \beta} - \overleftarrow{\gamma}(v,w)_{\upharpoonright \alpha} + \sum_{\substack{w \in L_{\upharpoonright u}\\w \neq v,u}}\overleftarrow{\gamma}(u, w)_{\upharpoonright \alpha} - \overleftarrow{\gamma}(u,w)_{\upharpoonright \beta}\,. 
\end{align*}
\section{Data}\label{ax:data}

We showcased the flexibility of \algo, considering both directed and undirected hypergraphs from various domains.
All the datasets used in our analyses are publicly available on GitHub~\footnote{\url{https://github.com/lady-bluecopper/NuDHy}}.

\Cref{tbl:data_senate_house} reports the main characteristics of the directed hypergraphs representing sponsor-cosponsor relationships in Senate bills (\textsc{S-bills}) and House bills (\textsc{H-bills}) from the $93^{\mathrm{rd}}$ to the $108^{\mathrm{th}}$ Congresses. We exploited these datasets in the group affinity analysis presented in \Cref{sec:app_congress}.

\begin{table}[ht!]
\caption{Characteristics of \textsc{H-bills} and \textsc{S-bills} per session: starting year, majority of seats, num. of legislators, num. of Republicans and of Democrats, num. of bills sponsored by a Republican and by a Democrat, mean num. of bills sponsored by a Republican and by a Democrat, and mean num. of bills co-sponsored by a Republican and by a Democrat.}
\label{tbl:data_senate_house}
\begin{subtable}{\linewidth}
\caption{House}
\resizebox{\columnwidth}{!}{
\begin{tabular}{rccrrrrrrrrr}
\toprule
 & Start & Majority & Legislators & Rep. & Dem. & Rep. Bills & Dem. Bills & Sp. Rep. & Sp. Dem. & Cosp. Rep. & Cosp. Dem. \\
\midrule
\textbf{93} & 1973 & Dem. & 440 & 192 & 241 & 1612 & 3237 & 9.006 & 13.953 & 82.574 & 126.751 \\
\textbf{94} & 1975 & Dem. & 441 & 144 & 291 & 1556 & 3927 & 11.358 & 14.385 & 88.231 & 129.823 \\
\textbf{95} & 1977 & Dem. & 440 & 143 & 292 & 1758 & 4043 & 12.468 & 15.257 & 123.143 & 140.949 \\
\textbf{96} & 1979 & Dem. & 438 & 156 & 276 & 965 & 2144 & 6.307 & 8.152 & 155.975 & 156.119 \\
\textbf{97} & 1981 & Dem. & 440 & 191 & 243 & 1230 & 2017 & 6.543 & 8.732 & 176.933 & 188.984 \\
\textbf{98} & 1983 & Dem. & 439 & 164 & 269 & 1023 & 2332 & 6.354 & 8.969 & 215.174 & 281.838 \\
\textbf{99} & 1985 & Dem. & 438 & 181 & 252 & 1222 & 2305 & 6.983 & 9.486 & 241.687 & 318.203 \\
\textbf{100}  & 1987 & Dem. & 441 & 177 & 258 & 1302 & 2367 & 7.750 & 9.430 & 252.737 & 329.531 \\
\textbf{101}  & 1989 & Dem. & 440 & 174 & 259 & 1370 & 2660 & 7.874 & 10.391 & 287.596 & 356.031 \\
\textbf{102}  & 1991 & Dem. & 441 & 167 & 267 & 1323 & 2578 & 8.067 & 9.954 & 264.347 & 317.478 \\
\textbf{103}  & 1993 & Dem. & 441 & 176 & 258 & 1286 & 2134 & 7.565 & 8.570 & 226.028 & 223.569 \\
\textbf{104}  & 1995 & Rep. & 439 & 230 & 204 & 1640 & 1041 & 7.354 & 5.627 & 150.013 & 142.913 \\
\textbf{105}  & 1997 & Rep. & 444 & 226 & 207 & 1865 & 1294 & 8.216 & 6.811 & 168.740 & 209.708 \\
\textbf{106}  & 1999 & Rep. & 437 & 223 & 211 & 2176 & 1537 & 9.982 & 7.319 & 192.879 & 284.566 \\
\textbf{107}  & 2001 & Rep. & 442 & 221 & 211 & 2049 & 1756 & 9.230 & 8.566 & 170.009 & 298.207 \\
\textbf{108}  & 2003 & Rep. & 439 & 229 & 204 & 2055 & 1718 & 9.215 & 8.422 & 166.415 & 298.222 \\
\bottomrule
\end{tabular}
}
\end{subtable}
\begin{subtable}{\linewidth}
\caption{Senate}
\resizebox{\columnwidth}{!}{
\begin{tabular}{rccrrrrrrrrr}
\toprule
 & Start & Majority & Legislators & Rep. & Dem. & Rep. Bills & Dem. Bills & Sp. Rep. & Sp. Dem. & Cosp. Rep. & Cosp. Dem. \\
\midrule
\textbf{93} & 1973 & Dem. & 101 & 44 & 54 & 571 & 1054 & 13.595 & 18.491 & 105.833 & 140.000 \\
\textbf{94} & 1975 & Dem. & 100 & 37 & 60 & 526 & 943 & 14.216 & 15.459 & 101.703 & 101.984 \\
\textbf{95} & 1977 & Dem. & 104 & 38 & 61 & 521 & 942 & 13.711 & 14.952 & 99.395 & 88.969 \\
\textbf{96} & 1979 & Dem. & 101 & 41 & 58 & 526 & 866 & 13.150 & 14.931 & 109.927 & 94.797 \\
\textbf{97} & 1981 & Rep. & 101 & 53 & 46 & 882 & 612 & 16.642 & 13.304 & 137.167 & 167.109 \\
\textbf{98} & 1983 & Rep. & 101 & 54 & 46 & 1068 & 648 & 19.418 & 14.087 & 176.345 & 221.391 \\
\textbf{99} & 1985 & Rep. & 101 & 53 & 46 & 1150 & 722 & 21.698 & 15.696 & 192.593 & 244.149 \\
\textbf{100}  & 1987 & Dem. & 101 & 45 & 55 & 770 & 1120 & 16.739 & 20.364 & 243.826 & 285.364 \\
\textbf{101} & 1989 & Dem. & 100 & 45 & 55 & 760 & 1288 & 16.889 & 23.418 & 266.822 & 300.527 \\
\textbf{102} & 1991 & Dem. & 102 & 44 & 56 & 722 & 1241 & 16.409 & 21.772 & 240.795 & 263.655 \\
\textbf{103} & 1993 & Dem. & 101 & 43 & 57 & 535 & 985 & 12.159 & 17.589 & 165.091 & 168.228 \\
\textbf{104} & 1995 & Rep. & 102 & 53 & 47 & 778 & 431 & 14.679 & 8.979 & 103.481 & 75.708 \\
\textbf{105} & 1997 & Rep. & 100 & 55 & 45 & 912 & 564 & 16.582 & 12.533 & 123.764 & 126.467 \\
\textbf{106} & 1999 & Rep. & 102 & 55 & 45 & 1087 & 822 & 19.411 & 18.267 & 154.036 & 200.543 \\
\textbf{107} & 2001 & Dem. & 101 & 50 & 50 & 778 & 1084 & 15.878 & 21.680 & 119.080 & 195.245 \\
\textbf{108} & 2003 & Rep. & 100 & 51 & 48 & 953 & 926 & 18.686 & 19.292 & 116.941 & 206.188 \\
\bottomrule
\end{tabular}
}
\end{subtable}
\end{table}

\Cref{tbl:contact_networks} reports the main characteristics of the undirected hypergraphs representing \textbf{(i)} face-to-face interactions among children in a primary school in Lyon, France~\cite{gemmetto2014mitigation} (\textsc{lyon}) and among students in a high school in Lyc\'{e}e Thiers, France~\cite{mastrandrea2015contact} (\textsc{high}), and \textbf{(ii)} email exchanges between members of a European research institution (\textsc{email-EU}) and between Enron employees (\textsc{email-Enron})~\cite{benson2018simplicial}.
We exploited these datasets in the non-linear contagion analysis presented in \Cref{sec:app_contagion}.

\begin{table}[!th]
    \small
    \centering
    \caption{Characteristics of the contact networks: number of nodes, number of hyperedges, max size of a hyperedge, mean and std hyperedge size, mean and std node degree, number of steps performed by \algo, invasion thresholds for the linear ($l$) and super-linear ($sl$) case, and bistability threshold.}
    \label{tbl:contact_networks}
    \resizebox{\columnwidth}{!}{
  \begin{tabular}{lrrrrrrrrrrr}
  \toprule
  \textbf{Network} & \multicolumn{1}{c}{$\mathbf{|V|}$} & \multicolumn{1}{c}{$\mathbf{|E|}$} & \multicolumn{1}{c}{$\mathbf{d}$} & \multicolumn{1}{c}{$\mu_{|e|}$} & \multicolumn{1}{c}{$\sigma_{|e|}$} & \multicolumn{1}{c}{$\mu_{\mathrm{deg}}$} & \multicolumn{1}{c}{$\sigma_{\mathrm{deg}}$} & $s$ & $\lambda_c^l$ & $\lambda_c^{sl}$ & $\nu_c$\\
  \midrule
  \textsc{lyon} & \num{243} & \num{1188} & \num{5} & \num{2.40} & \num{0.52} & \num{11.79} & \num{5.59} & \num{57060} & 0.0474 & 0.0382 & 2.5415\\
  \textsc{high} & \num{327} & \num{7818} & \num{5} & \num{2.33} & \num{0.53} & \num{55.63} & \num{27.06} & \num{363840} & 0.0101 & 0.0096 & 2.4337\\
  \textsc{email-Enron} & \num{143} & \num{1512} & \num{18} & \num{3.00} & \num{1.95} & \num{31.82} & \num{24.22} & \num{227500} & 0.0060 & 0.0025 & 1.3182 \\
  \textsc{email-Eu} & \num{998} & \num{25027} & \num{25} & \num{3.42} & \num{2.84} & \num{85.91} & \num{114.23} & \num{1714740} & 0.0009 & 0.0008 & 1.2313\\
  \bottomrule
  \end{tabular}
  }
\end{table}

\Cref{tbl:datasets_eco} reports the main characteristics of the directed hypergraphs generated from international trade data~\cite{hausmann2014atlas} of four years: 1995, 2009, 2019, and 2020.
The head of each hyperedge includes countries that export the product, while the tail consists in countries that import the product. We follow the standard economics literature~\cite{cristelli2013measuring} and consider a country to be an exporter of a product if its \emph{Revealed Comparative Advantage}~\cite{balassa1965trade} is greater than $1$, and to be an importer of a product if its \emph{Revealed Comparative Disadvantage}~\cite{krugman2009international} is greater than $1$.
We follow~\cite{harvard} and include only countries with population above $1$ million and average trade above $1$ billion USD.
We exploited these datasets in the economic complexity analysis presented in \Cref{sec:app_trade}.

\begin{table}[!th]
    \small
    \centering
    \caption{Characteristics of the trade hyper-networks: number of nodes, number of hyperedges, max size of a hyperedge, average head size, average tail size, average node in-degree, average node out-degree, and number of steps of \algo.}
    \label{tbl:datasets_eco}
    \resizebox{\columnwidth}{!}{
  \begin{tabular}{lrrrrrrrr}
  \toprule
  \textbf{Dataset} & \multicolumn{1}{c}{$\mathbf{|V|}$} & \multicolumn{1}{c}{$\mathbf{|E|}$} & \multicolumn{1}{c}{$\mathbf{d}$} & \multicolumn{1}{c}{$\bar{|h|}$} & \multicolumn{1}{c}{$\bar{|t|}$} & \multicolumn{1}{c}{$\overline{\indeg{}{v}}$} & \multicolumn{1}{c}{$\overline{\outdeg{}{v}}$} & \multicolumn{1}{c}{$s$}\\
  \midrule
  \textsc{hs1995} & \num{129} & \num{5}k & \num{107} & \num{14.76} & \num{32.95} & \num{1.29}k & \num{576.3} & \num{12}m\\
  \textsc{hs2009} & \num{133} & \num{4.9}k & \num{113} & \num{16.43} & \num{35.59} & \num{1.3}k & \num{604.4} & \num{13}m\\
  \textsc{hs2019} & \num{133} & \num{4.6}k & \num{120} & \num{16.24} & \num{37.31} & \num{1.29}k & \num{563.9}& \num{12}m\\ 
  \textsc{hs2020} & \num{133} & \num{4.6}k & \num{118} & \num{15.92} & \num{37.32} & \num{1.29}k & \num{552.4} & \num{12}m\\ 
  \bottomrule
  \end{tabular}
  }
\end{table}

In the extended analysis presented in \Cref{ax:additional}, we consider the following datasets, whose characteristics are summarized in \Cref{tbl:datasets}.
\textbf{\textsc{ecoli}}~\cite{shen2018genome} is a metabolic hypergraph constructed with data from the Kyoto Encyclopedia of Genes and Genomes (KEGG) pathway database. Specifically, we consider the pathway \emph{eco01100} of Escherichia coli.
\textbf{\textsc{iaf1260b}} and \textbf{\textsc{ijo1366}}~\cite{kim2022reciprocity} model chemical reactions among genes. Each node is a gene, and each hyperedge is a reaction.
\textbf{\textsc{dblp-9}}~\cite{tang2008arnetminer} and \textbf{\textsc{cit-sw}}~\cite{kim2022reciprocity} are DBLP citation hypergraphs, where nodes are authors and hyperedges are  citations. The head of each hyperedge is a paper represented as its set of authors, while the tail is the set of authors of the paper cited by the head. 
\textbf{\textsc{math}}~\cite{kim2022reciprocity} is a question-answering dataset obtained from the MathOverflow website. Each node is a user, and each hyperedge is a post. The questioner is the head and the answerers are the tail.
\textbf{\textsc{enron}}~\footnote{\url{(https://www.cs.cmu.edu/~enron)}} is a communication network of emails sent. For each directed hyperedge, the head is the sender and the tail contains the recipients.
\textbf{\textsc{ord}}~\cite{kearnes2021open} is a hypergraph modeling chemical reactions from the Open Reaction Database (ORD). Each hyperedge is a chemical reaction, where the head is the set of reagents and the tail is the set of products generated from the reaction.

\begin{table}[!th]
    \small
    \centering
    \caption{Characteristics of the datasets: number of nodes, number of hyperedges, max size of a hyperedge, average head size, average tail size, average node in-degree, average node out-degree, and number of steps of \algo.}
    \label{tbl:datasets}
    \resizebox{\columnwidth}{!}{
  \begin{tabular}{lrrrrrrrr}
  \toprule
  \textbf{Dataset} & \multicolumn{1}{c}{$\mathbf{|V|}$} & \multicolumn{1}{c}{$\mathbf{|E|}$} & \multicolumn{1}{c}{$\mathbf{d}$} & \multicolumn{1}{c}{$\bar{|h|}$} & \multicolumn{1}{c}{$\bar{|t|}$} & \multicolumn{1}{c}{$\overline{\indeg{}{v}}$} & \multicolumn{1}{c}{$\overline{\outdeg{}{v}}$} & \multicolumn{1}{c}{$s$}\\
  \midrule
  \textsc{ecoli} & \num{702} & \num{923} & \num{9} & \num{2.02} & \num{2.26} & \num{3.99} & \num{3.55} & \num{79}k\\
  \textsc{iaf1260b} & \num{1.7}k & \num{2}k & \num{12} & \num{2.27} & \num{2.00} & \num{2.72} & \num{3.60} & \num{178}k\\
  \textsc{ijo1366} & \num{1.8}k & \num{2.2}k & \num{13} & \num{2.27} & \num{2.03} & \num{2.75} & \num{3.54} & \num{193}k\\
  \textsc{cit-sw} & \num{16}k & \num{53}k & \num{19} & \num{2.72} & \num{2.93} & \num{10.65} & \num{13.17} & \num{6}m\\
  \textsc{math} & \num{35}k & \num{94}k & \num{213} & \num{1.00} & \num{1.78} & \num{3.57} & \num{10.54} & \num{5.2}m\\
  \textsc{dblp-9} & \num{21}k & \num{95}k & \num{30} & \num{2.44} & \num{2.46} & \num{26.45} & \num{12.21} & \num{9.3}m\\
  \textsc{enron} & \num{57}k & \num{149}k & \num{30} & \num{1.00} & \num{3.98} & \num{13.12} & \num{6.27} & \num{15}m\\
  \textsc{ord} & \num{633}k & \num{479}k & \num{20} & \num{4.52} & \num{1.03} & \num{1.13} & \num{6.64} & \num{53}m\\
  \bottomrule
  \end{tabular}
  }
\end{table}

For each dataset and each sampler, we generated $33$ samples by performing $s = 20w$ steps in the corresponding Markov graph, where $w$ is the number of edges in the bipartite graph representation of the dataset.

\section{Baselines}\label{ax:base}

In \Cref{ax:additional} we compared \algo with \redi~\cite{kim2022reciprocity}, the first realistic generative model specifically designed for directed hypergraphs.
\redi extends the preferential attachment model proposed in \cite{do2020structural} to directed hypergraphs, allowing the generation of random hypergraphs exhibiting reciprocal patterns akin to those observed in real directed hypergraphs.
This generator requires three input parameters: the number of nodes, a proportion $\beta_1$ of reciprocal hyperedges, and the extent $\beta_2$ of reciprocity between a hyperedge and its reciprocal counterpart.
The random hypergraph generated preserves, on average, the distribution of head and tail sizes, and the distribution of number of hyperedges in which each group of nodes appears.

However, \redi suffers from three main limitations.
Firstly, it does not account for situations where a node is related to itself, which is common in real-world hypergraphs such as citation networks. 
Secondly, precise tuning of $\beta_1$ and $\beta_2$ is necessary to ensure that the reciprocity measured in the random samples (see \Cref{eq:reciprocity}) aligns with that of the observed hypergraph. 
Achieving this tuning can be challenging due to the computational complexity of calculating reciprocity ($O(2^{|E|})$) and the possibility of reciprocal hyperedges occurring by chance. Thus, $\beta_1$ effectively represents a "lower bound" for the actual ratio of reciprocated hyperedges. 
To circumvent this limitation, we set $\beta_1 = \beta_2 = 0$ as recommended by the authors of \redi.
Thirdly, the procedures used by \redi to preserve the distribution of the number of hyperedges in which each node group participates are computationally intensive, resulting in significantly increased runtime as the hypergraph size grows. 
To address this issue, \cite{kim2022reciprocity} introduced an alternative generator that, instead of node group degrees, preserves node degrees on average.
In our analyses, we considered both the generator that preserves node group degrees (dubbed \base) and the one that preserves node degrees (dubbed \based).
We note that for the largest dataset in \Cref{tbl:datasets}, \base was not able to generate a random sample within 24 hours.

Furthermore, we implemented also a naive sampler, \nullm, that preserves the head and tail size distributions, but populates the hyperedges of the random hypergraph by drawing nodes uniformly at random from the set of nodes in the observed hypergraph.
\section{Convergence}\label{ax:convergence}

To study the convergence of \algo, we follow a procedure similar to the 
one proposed by~\cite{gionis2007assessing}. 
Given a hypergraph $H \doteq (V, E)$, we generate two datasets $\dataset$ of \emph{itemsets}:
in $\dataset_{V,1}$ each node in $V$ is an item and each itemset is the head of a hyperedge in $E$; and 
in $\dataset_{V,-1}$ each node in $V$ is an item and each itemset is the tail of a hyperedge in $E$.
The mixing time of the \algo samplers is then estimated by looking at the convergence of the \emph{Average Relative Support Difference (ARSD)}, defined for each dataset $\dataset$ as follows:

\begin{align}\label{eq:arsd}
\mathrm{ARSD}(\dataset^s) = \frac{1}{\left|\textsc{FI}_{f,l}(\dataset)\right|}\sum_{A \in \textsc{FI}_{f,l}(\dataset)}\frac{\left|\supp{\dataset}{A} - \supp{\dataset^s}{A}\right|}{\supp{\dataset}{A}}\,,
\end{align}

where $A$ is a set of items, $\supp{\dataset}{A}$ is the number of itemsets in $\dataset$ containing $A$, $\textsc{FI}_{f,l}(\dataset)$ is the collection of top-$f$ frequent itemsets $A$ with length $|A| \geq l$,
and $\dataset^s$ is the dataset generated from the hypergraph obtained by the sampler after $s$ steps.
A frequent itemset in $\dataset_{V,1}$ represents a frequent co-occurrence of nodes in the heads of the hyperedges; whereas 
a frequent itemset in $\dataset_{V,-1}$ represents a frequent co-occurrence of vertices in the tails.

In our experiments we set $f = 20$ and $l = 3$.
\Cref{fig:convergence} illustrates the ARSD values for each dataset in \Cref{tbl:datasets}, for $s = \lceil k \cdot w \rceil$ with $k \in [0, 50]$ and $w = \sum_{A \in D}|A|$, for \algoA and \algoC.
We observe that, in general, \algoC requires fewer steps to converge, especially for larger hypergraphs.

\begin{figure}[!th]
    \centering
    \includegraphics[width=\linewidth]{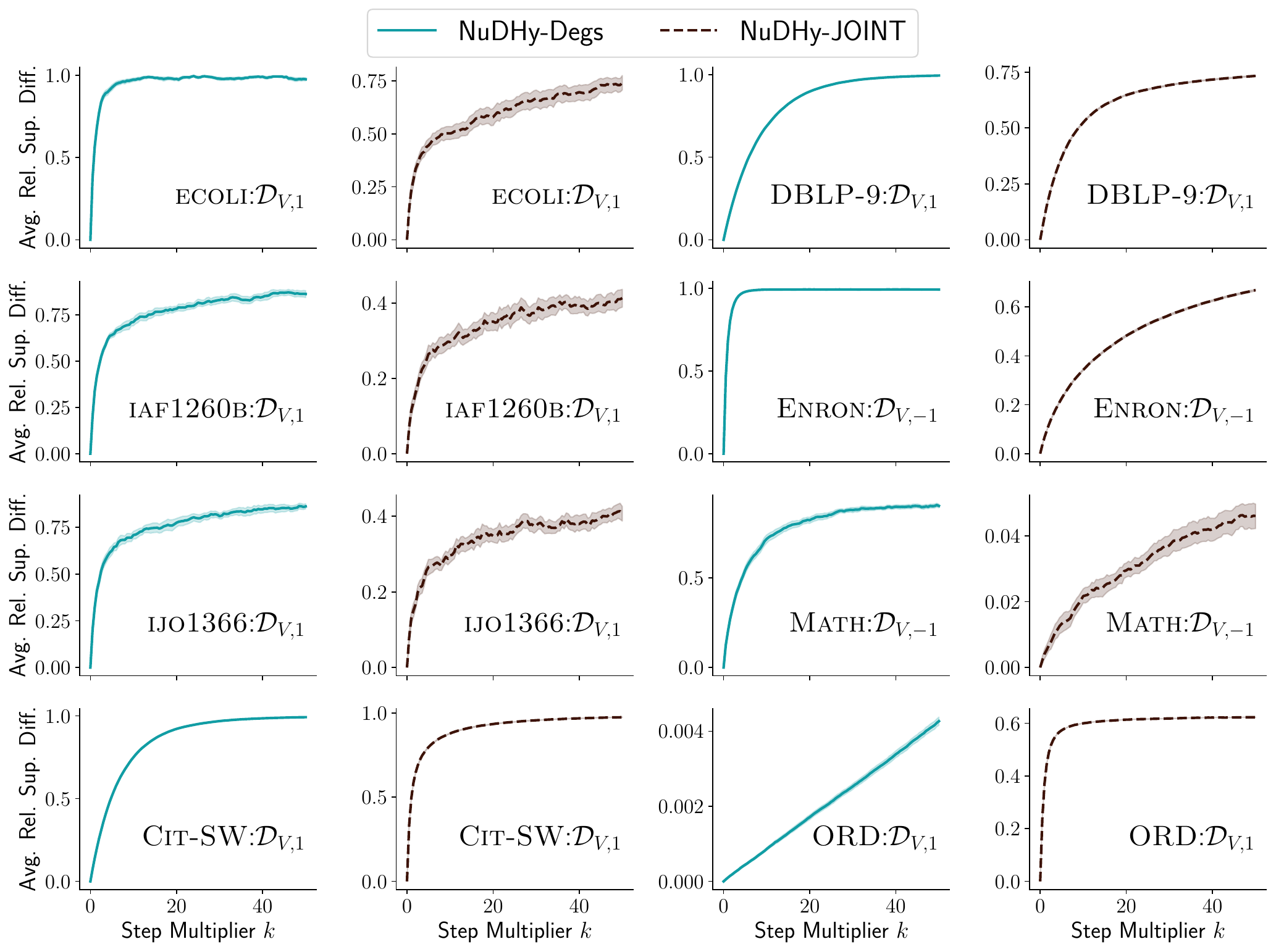}
    \caption{Convergence experiment for each dataset, for \algoA and \algoC.}
    \label{fig:convergence}
\end{figure}

We note that although \algoB and \algoD (not shown in the charts) require fewer steps to reach convergence due to their acceptance probability mechanism, which prioritizes visiting dense regions more frequently and sparse regions less often, their overall speed is hindered by the need to compute the degree of a state, which has a time complexity of $O(|V|^2)$. 
This computation adds a significant overhead, and as a result, the time saved by performing fewer steps is counterbalanced by the increased complexity of the step operations. Consequently, \algoB ends up being slower than \algoA, and \algoD slower than \algoC.
As an example, to perform the same amount of steps, for the small datasets (\textsc{ecoli}, \textsc{iaf1260b}, \textsc{ijo1366}), we report a time increase for \algoB from $5.86$x to $13.49$x; whereas for the mid-size datasets (\textsc{cit-sw}, \textsc{math}, \textsc{dblp-9}, \textsc{enron}), it is from $23.59$x to $187.2$x. 
\section{Extended Experimental Evaluation of \algo}\label{ax:additional}

In this section, we delve into the experimental evaluation of our framework, which considered various metrics either specifically designed or extended to accommodate the unique characteristics of directed hypergraphs. Our primary objective is to discern which samplers produce hypergraphs that closely mimic the structural patterns of the original hypergraph.
Our second objective is to shed light on the factors influencing the observed characteristics and, in particular, understand whether the properties of the original hypergraph preserved by the random hypergraphs generated by the samplers are among such factors.
To this aim, we compare the values of the metrics calculated in the original hypergraph against those derived from $33$ random hypergraphs generated by each sampler.

\subsection{Higher-order Reciprocity}

Reciprocity~\cite{newman2002email} is a statistic originally defined for directed graphs, and later on extended to directed hypergraphs~\cite{kim2022reciprocity}. 
It quantifies how mutually nodes are linked and thus allows a better understanding of the organizing principles of the hypergraph.
The reciprocity of a hyperedge $e$ is measured considering a set $R_e$ of hyperedges sharing vertices with $e$.
Informally, if the head (resp. tail) of $e$ overlaps with the tails (resp. heads) of the hyperedges in $R_e$ to a greater extent, then it is more reciprocal. 
Given the hyperedge $e \doteq (h,t)$, let $p_u(v)$ denote the transition probability from a node $u \in h$ to each node $v$, i.e., the probability of a random walker transiting from $u$ to $v$ when she moves to a uniform random tail-node of a uniform random arc among the hyperedges in $R_e$.
In addition, let $p_u^*$ be the optimal transition probability, i.e., the probability when $R_e = \{(t, h)\}$.
Then, the reciprocity of $e$ in the directed hypergraph $H \doteq (V,E)$ is
\[
\mathsf{r}(e, R_e) \doteq \left(\frac{1}{|R_e|}\right)^\alpha\left(1-\frac{\sum_{u \in h} \mathcal{L}(p_u, p_u^*)}{|h| \mathcal{L}_\mathrm{max}}\right)\,,
\]
where $\alpha \in (0,1]$ is a penalization term~\footnote{In our experiments we used the default value $1$e-$6$.} for large sets $R_e$, $\mathcal{L}(p_u, p_u^*)$ is the Jensen-Shannon Divergence between the distributions $p_u$ and $p_u^*$, $\mathcal{L}_\mathrm{max}$ is the maximum value taken by $\mathcal{L}$, and 
\[
R_e \doteq \arg\max_{R \subseteq E}\mathsf{r}(e, R)\,.
\]
Finally, the \emph{reciprocity} of $H$ is simply the average reciprocity of its hyperedges:
\begin{equation}\label{eq:reciprocity}
\mathsf{r}(H) \doteq \frac{1}{|E|}\sum\limits_{e \in E}\mathsf{r}(e, R_e)\,.
\end{equation}

The goal of this analysis is to demonstrate the ability of \algo to produce random hypergraphs with realistic reciprocity patterns, even though reciprocity is not inherently enforced in their design. Additionally, we aim to showcase their efficiency by generating such hypergraphs in significantly less time compared to \base and \based.

\begin{figure}[th!]
    \centering
    \begin{subfigure}{.49\linewidth}
    \centering
      \includegraphics[width=\linewidth]{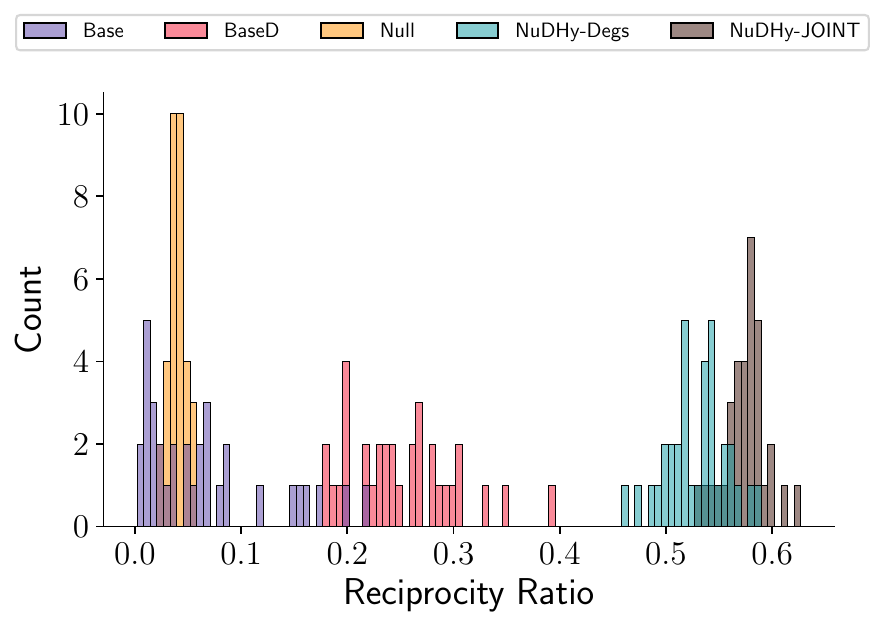}
      \caption{\textsc{ecoli}}
    \end{subfigure}
    \begin{subfigure}{.49\linewidth}
    \centering
      \includegraphics[width=\linewidth]{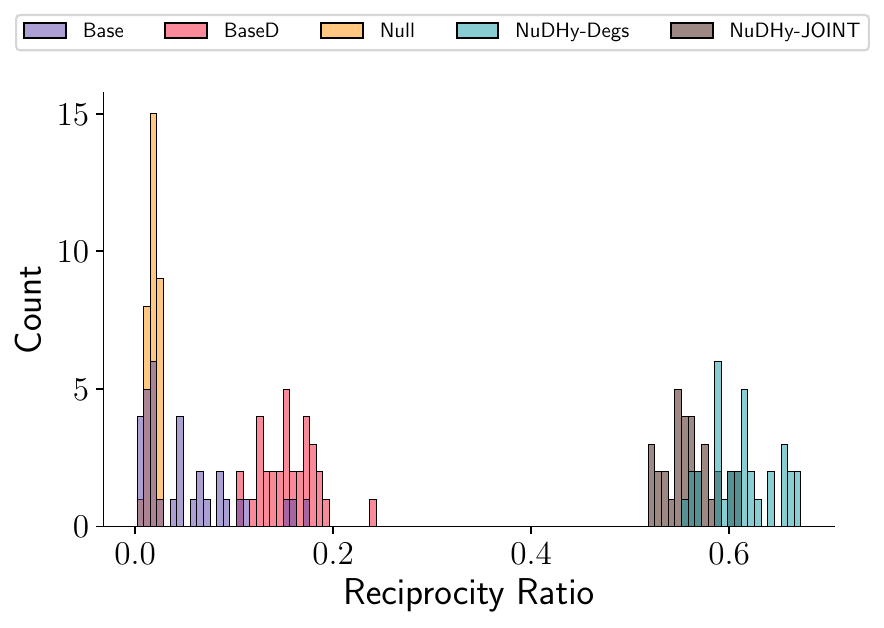}
      \caption{\textsc{iaf1260b}}
    \end{subfigure}
    \caption{Distribution of the ratios between the reciprocity of $33$ samples and the observed reciprocity, for each sampler, for \textsc{ecoli} (left) and \textsc{iaf1260b} (right).}
    \label{fig:reciprocity}
\end{figure}

\begin{figure}[th!]
    \centering
    \includegraphics[width=\linewidth]{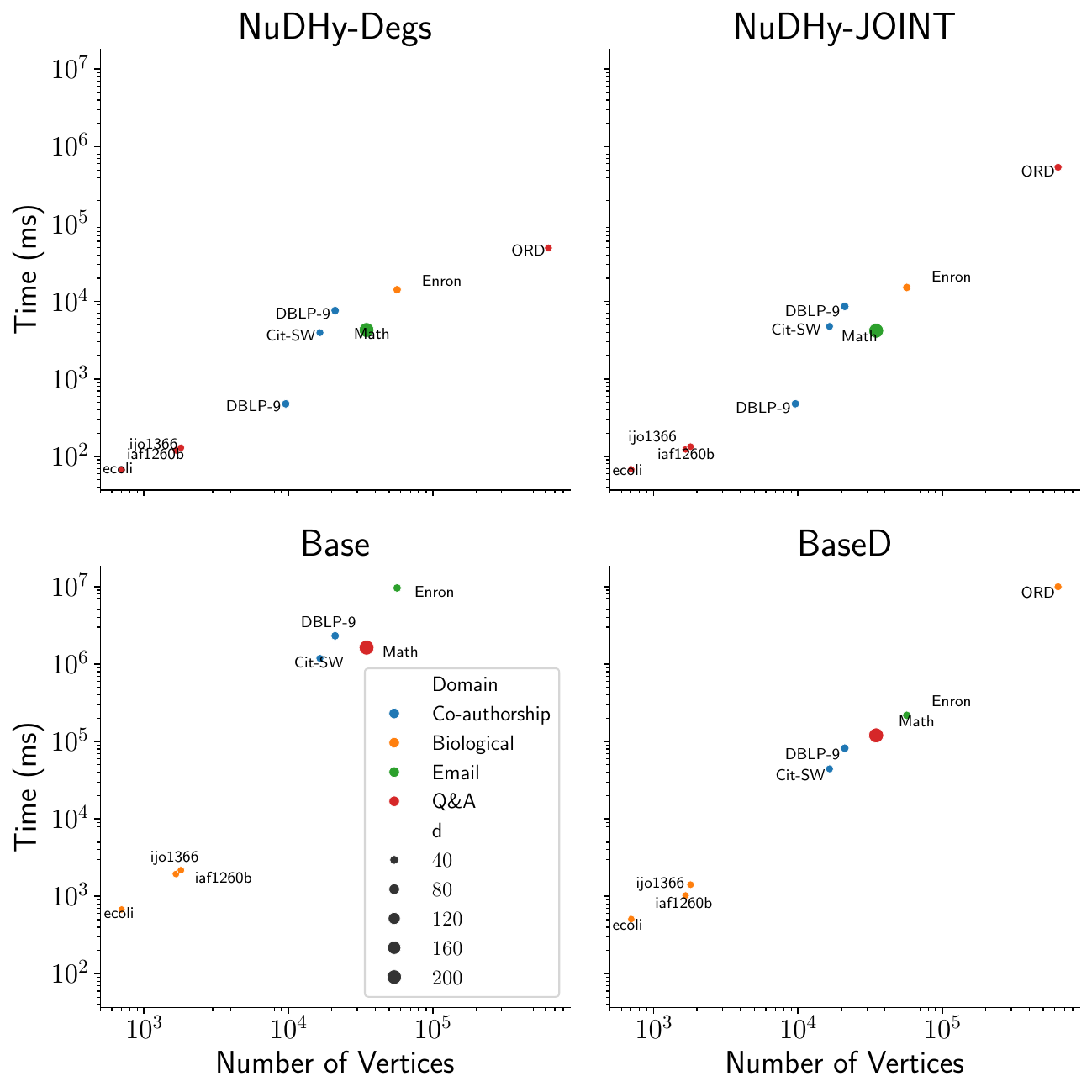}
    \caption{Mean time (ms) required to generate a sample using each of the samplers, for datasets of different size and different domain. Values are in log scale.}
    \label{fig:runtime}
\end{figure}

\Cref{fig:reciprocity} illustrates the distributions of reciprocity values obtained from $33$ samples generated by each sampler, where each value is divided by the reciprocity of the observed hypergraph.
Across all cases, the mean reciprocity values of our samples closely approximate the observed reciprocity, outperforming the mean reciprocity values derived from the samples generated by our competitors.
Specifically, the mean reciprocity among the samples generated by \algoA range from 47\% lower to 37\% lower compared to the observed reciprocity. For \algoC, this range is from 33\% to 18\%. 
In contrast, \base and \based exhibit mean reciprocity values that deviate more significantly, with values from 95\% to 93\% for \base, and from 98\% to 75\% for \based.
By looking at \Cref{fig:runtime}, we can observe that our samplers efficiently complete the generation process in few seconds across most datasets. In contrast, \base and \based require at least one order of magnitude more time to terminate the generation, with \base being unable to complete the generation process within a 24-hour time frame for the largest datasets.

Similar mean reciprocity values among the various samplers are evident only in the case of \textsc{enron}, where \algoA and \base lead to a mean reciprocity 50\% lower than the observed reciprocity, whereas \algoC 35\% lower.
This particular case stands out from the others due to the unique characteristic of all heads having size 1. 
As a result, the fact that \base preserves the \emph{undirected} group degrees (i.e., number of hyperedges that contain the group either in the head or in the tail) leads to samples that approximately maintain the \emph{directed} group degrees (i.e, number of heads and number of tails that contain the group). Consequently, the samples generated by \base exhibit similar characteristics, on average, to those produced by \algoA.

\subsection{Node Centrality}

As there are neither intuitive nor efficient measures of centrality specifically designed for directed hypergraphs, we measure the centrality of a node in a directed hypergraph, by first transforming the hypergraph into a graph, and then running two classic algorithms for node centrality in graphs: PageRank~\cite{tran2019pagerank} and HITS~\cite{kleinberg1999authoritative}.
PageRank determines the importance of a node based on its link structure, so that a node is important if it is linked to by important nodes.
HITS identifies two types of important nodes: \emph{hubs} and \emph{authorities}.
Hubs are nodes connected to many relevant nodes; while authorities are nodes with many in-going edges from hubs.
We run the PageRank algorithm on the directed weighted graph generated by creating a node for each node in the directed hypergraph and an edge from each node in a head of a hyperedge to each node in the tail of such hyperedge.
The edge weight is the number of hyperedges in which the source node is in the head and the destination node is in the tail.
On the other hand, we run the HITS algorithm on the directed bipartite graph representation of the directed hypergraph.

\begin{table}[!th]
  \small
    \centering
    \caption{Mean and standard deviation (in parentheses) of the Normalized Discounted Cumulative Gain (nDCG) of the rankings of the nodes obtained according to the Authority (Auth), Hub, and PageRank (Page) scores in $33$ samples, for each sampler.}
    \label{tab:centrality}
    \begin{tabular}{lcccccc}
    \toprule
  & Score & {\base} & {\based} & {\nullm} & {\algoA} & {\algoC} \\
  \midrule
  \multirow{3}{*}{\rotatebox{90}{\scriptsize{\textsc{ecoli}}}} & 
  Auth & 0.325 (0.081) & 0.333 (0.074) & 0.235 (0.030) & 0.588 (0.311) & 0.707 (0.370)\\
  & Hub & 0.326 (0202) & 0.354 (0.185) & 0.211 (0.018) & 0.633 (0.357) & 0.618 (0.357)\\
  & Page & 0.731 (0.053) & 0.748 (0.041) & 0.573 (0.021) & 0.928 (0.012) & 0.933 (0.009)\\
  \midrule
  \multirow{3}{*}{\rotatebox{90}{\scriptsize{\textsc{iaf1260b}}}} & 
  Auth & 0.346 (0.006) & 0.380 (0.047) & 0.347 (0.009)& 0.558 (0.157)& 0.588 (0.170)\\
  & Hub & 0.212 (0.011) & 0.290 (0.054)  & 0.213 (0.023)& 0.910 (0.225)& 0.866 (0.282)\\
  & Page & 0.475 (0.006)& 0.525 (0.024) & 0.482 (0.009) & 0.957 (0.005)& 0.960 (0.003)\\
  \midrule
  \multirow{3}{*}{\rotatebox{90}{\scriptsize{\textsc{ijo1366}}}} & 
  Auth & 0.526 (0.012) & 0.523 (0.008) & 0.527 (0.013) & 0.549 (0.023) & 0.539 (0.022)\\
  & Hub & 0.385 (0.012) & 0.374 (0.012) & 0.395 (0.018) & 0.380 (0.007) & 0.380 (0.012)\\
  & Page & 0.465 (0.008) & 0.462 (0.007) & 0.472 (0.011) & 0.959 (0.004) & 0.960 (0.002) \\
  \midrule
  \multirow{3}{*}{\rotatebox{90}{\scriptsize{\textsc{cit-sw}}}} &
  Auth & 0.695 (0.003) & 0.708 (0.010) & 0.694 (0.010) & 0.735 (0.017) & 0.739 (0.019) \\
  & Hub & 0.139 (0.035) & 0.197 (0.073) & 0.124 (0.004) & 0.890 (0.246) & 0.363 (0.010)\\
  & Page & 0.835 (0.002) & 0.849 (0.001) & 0.826 (0.002) & 0.972 (0.001) & 0.979 (0.001)\\
  \midrule
  \multirow{3}{*}{\rotatebox{90}{\scriptsize{\textsc{math}}}} &
  Auth & 0.220 (0.015) & 0.208 (0.004) & 0.206 (0.005) & 0.983 (0.001) & 0.987 (0.001)\\
  & Hub & 0.213 (0.036) & 0.195 (0.008) & 0.194 (0.007) & 0.451 (0.213) & 0.410 (0.178)\\
  & Page  & 0.609 (0.080) & 0.572 (0.002) & 0.569 (0.002) & 0.974 (0.003) & 0.977 (0.001)\\
  \midrule
  \multirow{3}{*}{\rotatebox{90}{\scriptsize{\textsc{dblp-9}}}} & 
  Auth  & 0.338 (0.066) & 0.345 (0.027) & 0.311 (0.004) & 0.567 (0.188) & 0.538 (0.158)\\
  & Hub & 0.211 (0.018) & 0.278 (0.074) & 0.196 (0.004) & 0.582 (0.122) & 0.920 (0.233)\\
  & Page & 0.738 (0.005) & 0.770 (0.004) & 0.696 (0.003) & 0.959 (0.001) & 0.967 (0.001)\\
  \midrule
  \multirow{3}{*}{\rotatebox{90}{\scriptsize{\textsc{enron}}}} & 
  Auth & 0.285 (0.015) & 0.397 (0.012) & 0.263 (0.004) & 0.798 (0.001) & 0.982 (0.000)\\
  & Hub & 0.180 (0.017) & 0.183 (0.029) & 0.156 (0.006) & 0.485 (0.214) & 0.498 (0.197)\\
  & Page & 0.792 (0.002) & 0.820 (0.002) & 0.779 (0.001) & 0.934 (0.002) & 0.979 (0.001)\\
  \midrule
  \multirow{3}{*}{\rotatebox{90}{\scriptsize{\textsc{ord}}}} & 
  Auth & $-$ & 0.803 (0.003) & 0.802 (0.000) & 0.813 (0.008) & 0.815 (0.009)\\
  & Hub & $-$ & 0.502 (0.061) & 0.178 (0.002) & 0.998 (0.000) & 0.992 (0.000)\\
  & Page & $-$  & 0.905 (0.003) & 0.903 (0.000) & 0.983 (0.000) & 0.982 (0.000)\\
  \bottomrule
  \end{tabular}
\end{table}

In this experiment we investigate whether node centrality is solely determined by node degrees or if natural patterns of node connections give rise to hubs and authorities. To assess this, we conducted the HITS and PageRank algorithms on each sample (as explained above), calculating the average scores for each node across all samples. We then evaluated the rankings generated by each score in terms of normalized Discounted Cumulative Gain (nDCG). 
nDCG~\cite{jarvelin2002cumulated} is a metric used in information retrieval to evaluate the quality of rankings. It quantifies how well a ranking aligns with the relevance of its items, under the assumption that highly relevant items should appear earlier in the list. nDCG computes a gain for each item based on its relevance and discounts this gain as items appear lower in the ranking. The gains are accumulated to form a cumulative gain, which is then normalized by the ideal cumulative gain, representing a perfect ranking.
In our experiments, we use the ranking from the observed hypergraph as measure of node relevance. 

The results for each dataset and sampler are presented in \Cref{tab:centrality}.
We note that the PageRank score is influenced by the number of incoming links to a node. While \algoA maintains the frequency of a node appearing in heads or tails, it does not preserve the frequency of appearance in heads/tails of specific sizes. Consequently, it does not fully preserve node degrees in the graph projection used for PageRank. 
Nevertheless, we observe high nDCG scores, indicating that relative node importance is preserved. 
\algoC, which maintains the frequency of a node of a certain degree appearing in heads/tails of specific sizes, achieves slightly higher scores than \algoA, due to its preservation of node degrees in the graph projection.
In the case of \base and \based, which preserve head and tail sizes only on average, there is greater variability in the PageRank scores, resulting in lower nDCG scores. This is particularly noticeable for \base, which preserves node degrees on average, whereas \based preserves group degrees.

For Authority and Hub scores, we generally observe moderate to high nDCG scores. The exceptions occur in datasets where there are either no authorities or no hubs (i.e., scores are nearly 0 for all nodes). In such cases, clear node rankings are absent, rendering the nDCG less meaningful. For instance, in \textsc{iaf1260b}, \textsc{ijo1366}, and \textsc{ecoli}, more than half of the nodes have an authority score of 0, while other nodes have scores close to 0. In \textsc{math}, hub scores are almost always 0.

The fact that we do not observe perfect nDCG scores for rankings based on Hub and Authority scores suggests that having a high number of outgoing links does not necessarily make a node a hub, nor does having a large number of incoming links make a node an authority. Instead, it underscores the significance of the actual connections between nodes in determining hub and authority status.

\subsection{Hyper-coreness Distribution}

The $k$-core decomposition is a powerful method for analyzing the structure of (hyper)graphs by dividing them into increasingly connected components.
This hierarchical decomposition provides a fingerprint of the (hyper)graph's organization, which is essential for various applications such as modeling spreading processes and quantifying influence~\cite{garcia2017understanding}.
In fact, the \emph{coreness} of a node, defined as the highest $k$ for which the node belong to the $k$-core, can measure its centrality and determine the potential impact of initiating a spreading process from that node. 
In the case of undirected hypergraphs, the core decomposition problem aims to identify a hierarchy of $(k,m)$-hyper-cores~\cite{mancastroppa2023hyper}. These hyper-cores are maximal sub-hypergraphs where each node is part of at least $k$ hyperedges with size at least $m$ within the hyper-core itself.
Then, the \emph{m-shell index} $c_m(v)$ of a node $v$ is the value $k$ at which $v$ belongs to the $(k,m)$-hyper-core but not to the $(k+1,m)$-hyper-core.
Finally, the concept of \emph{hyper-coreness}, denoted as $\mathsf{HC}$, provides a comprehensive measure of a node's centrality by considering multiple $m$-shell indices:
\[
\mathsf{HC}(v) = \sum\limits_{m=2}^M \omega(m) c_m(v) / k_m^{\mathrm{max}}\,,
\]
where $k_m^{\mathrm{max}}$ is the maximum $k$ such that there exists at least one node with $m$-shell index $k$, and $\omega(m)$ is a function to account for the significance of the different sizes of higher-order interactions~\footnote{In our experiments we set $\omega(m) = 1$ and $k_m^{\mathrm{max}} = 1$ for each $m$, to be able to compare the node rankings based on $\mathsf{HC}$ values computed in samples that may have different $\mathsf{HC}$ distributions.}.

We extend this decomposition to directed hypergraph, by defining the $(k,m)$-\textsc{h}-hyper-core $C^\textsc{h}_{k,m}$ (resp. the $(k,m)$-\textsc{t}-hyper-core $C^\textsc{t}_{k,m}$) as the maximal sub-hypergraph where each node belongs to at least $k$ heads (resp. $k$ tails) of hyperedges of size $m$, within the \textsc{h}-hyper-core (resp. \textsc{t}-hyper-core) itself.
Then, we define the $m$-\textsc{h}-shell index $c_m^\textsc{h}(v)$ and the $m$-\textsc{t}-shell index $c_m^\textsc{t}(v)$ of a node $v$ as~\cite{giatsidis2013d} 
\begin{align}
c_m^\textsc{h}(v) &= \max_{k}\left\{k \suchthat v \in C^{\textsc{h}}_{k,m}\right\}\\
c_m^\textsc{t}(v) &= \max_{k}\left\{k \suchthat v \in C^{\textsc{t}}_{k,m}\right\}
\end{align}
and the \textsc{h}-hyper-coreness and \textsc{t}-hyper-coreness of $v$ as the weighted sum of the shell indices:
\begin{align}\label{eq:hyper-coreness}
\mathsf{HC}^\textsc{h}(v) &= \sum\limits_{m=2}^M \omega(m) c_m^\textsc{h}(v) / k_m^{\textsc{h}}\\
\mathsf{HC}^\textsc{t}(v) &= \sum\limits_{m=2}^M \omega(m) c_m^\textsc{t}(v) / k_m^{\textsc{t}}
\end{align}
where $k_m^{\textsc{h}}$ (resp. $k_m^{\textsc{t}}$) is the max value $k$ for which $C^{\textsc{h}}_{k,m} \neq \varnothing$ (resp. $C^{\textsc{t}}_{k,m} \neq \varnothing$). 

\begin{figure}[!ht]
    \centering
    \begin{subfigure}{\linewidth}
      \centering
      \includegraphics[width=\linewidth]{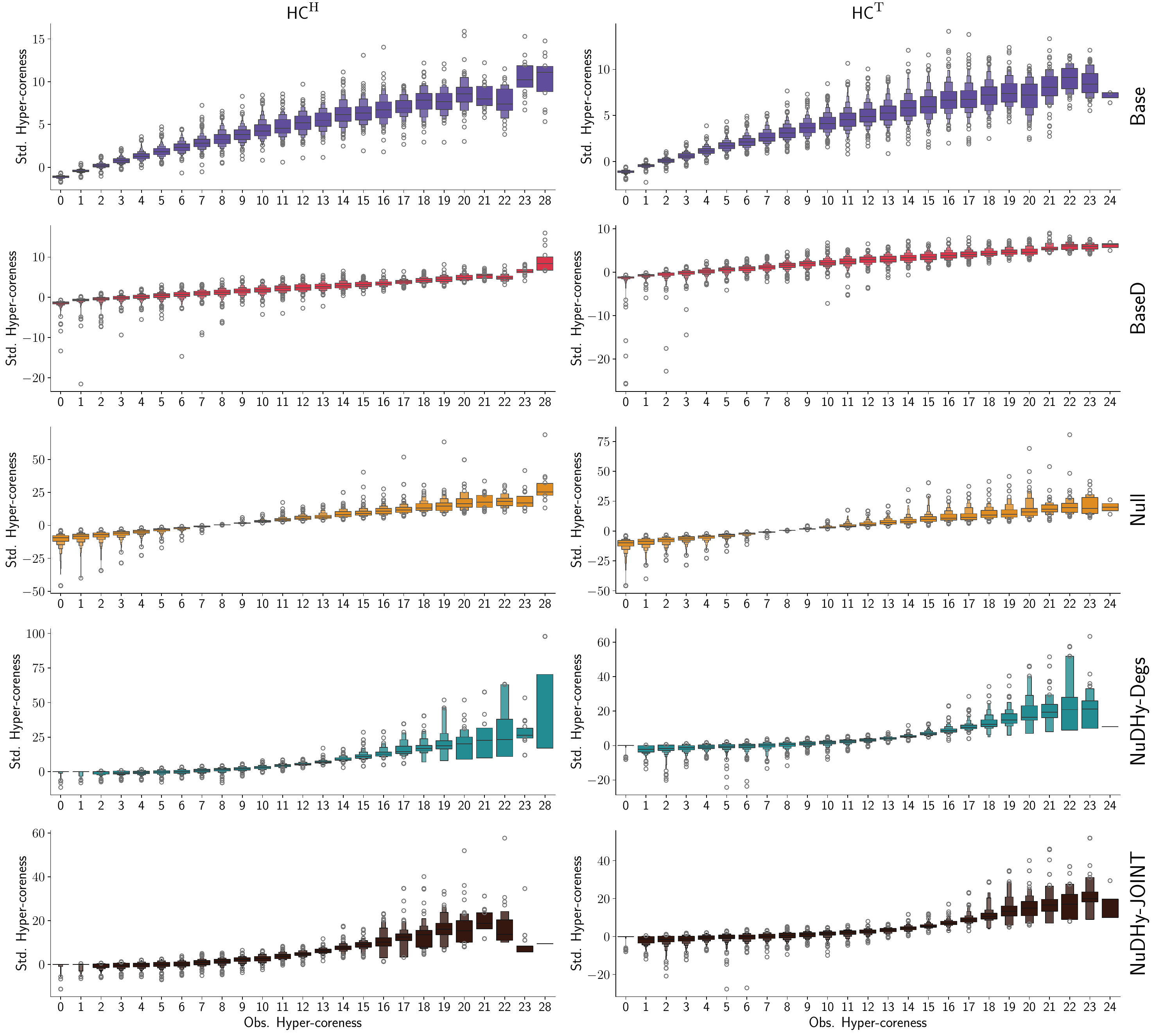}
    \end{subfigure}
    \caption{\textsc{dblp-9}: Distribution of the standardized $\mathsf{HC}^\textsc{h}$ (left) and $\mathsf{HC}^\textsc{t}$ scores, for each value measured in the observed hypergraph, for each sampler.}
    \label{fig:hyper_dblp9}
\end{figure}

\begin{figure}[!ht]
    \centering
    \begin{subfigure}{\linewidth}
      \centering
      \includegraphics[width=\linewidth]{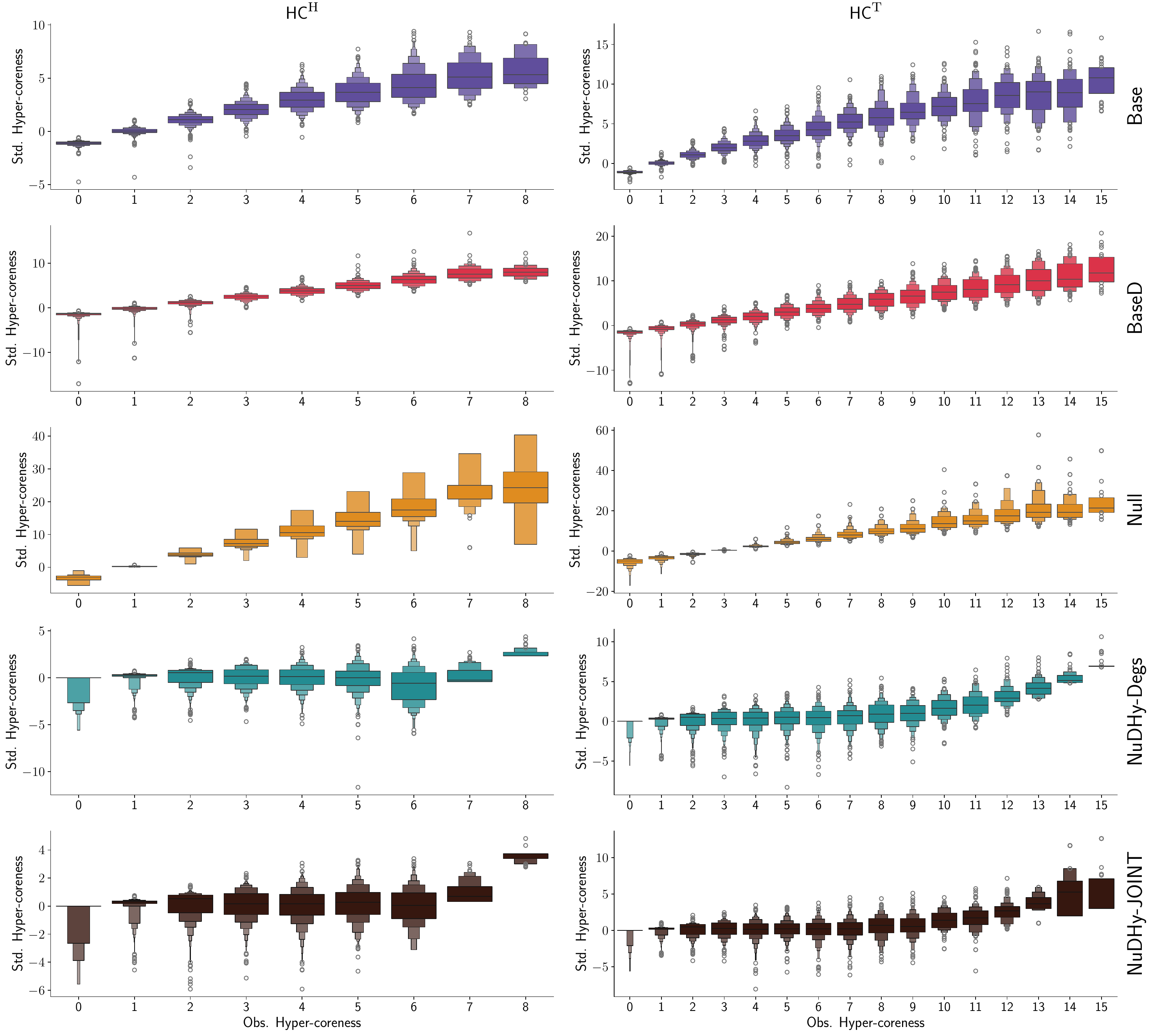}
    \end{subfigure}
    \caption{\textsc{math}: Distribution of the standardized $\mathsf{HC}^\textsc{h}$ (left) and $\mathsf{HC}^\textsc{t}$ scores, for each value measured in the observed hypergraph, for each sampler.}
    \label{fig:hyper_math}
\end{figure}

\Cref{fig:hyper_dblp9} and \Cref{fig:hyper_math} illustrate the distribution of standardized $\mathsf{HC}^\textsc{h}$ (left) and $\mathsf{HC}^\textsc{t}$ (right) scores, for each sampler. To obtain these scores, we took the observed score of a node, subtract its mean score in the samples, and divide by its standard deviation. 
In general, we find that scores tend to deviate further from $0$ for nodes with stronger connections. This outcome aligns with our expectations because nodes with low hyper-coreness are typically associated with low degrees, and by shuffling the connections of low-degree nodes, it is not possible to significantly boost their hyper-coreness. Conversely, connecting nodes with high degrees to nodes with low degrees can diminish the hyper-coreness of the former. 
Given that \based preserves, on average, the appearances of node groups, it also tends to maintain the neighborhood of each node. Consequently, a well-connected node in the observed hypergraph typically corresponds to a densely-connected node in a sample. This results in similar average hyper-coreness values and standardized scores that are close to $0$.
On the other hand, \nullm does not preserve the degrees of the nodes, leading to scenarios in which nodes that are sparsely-connected in the observed hypergraph become densely-connected in the samples. This can result in negative standardized scores.
Finally, by considering the samples generated by \algoA, we can identify relevant nodes with hyper-coreness higher than random. In \textsc{dblp-9}, these nodes can correspond to authors who cite/are cited by many authors that, in turn, cite/are cited by numerous other authors. This suggests a tendency for papers to reference similar groups of co-cited authors. This pattern is less pronounced when \algoC is employed, as it maintains the degrees of the neighbors of each node.

\begin{figure*}[!ht]
    \centering
    \begin{subfigure}{\linewidth}
      \centering
      \caption{\textsc{dblp-9}}
      \includegraphics[width=.9\linewidth]{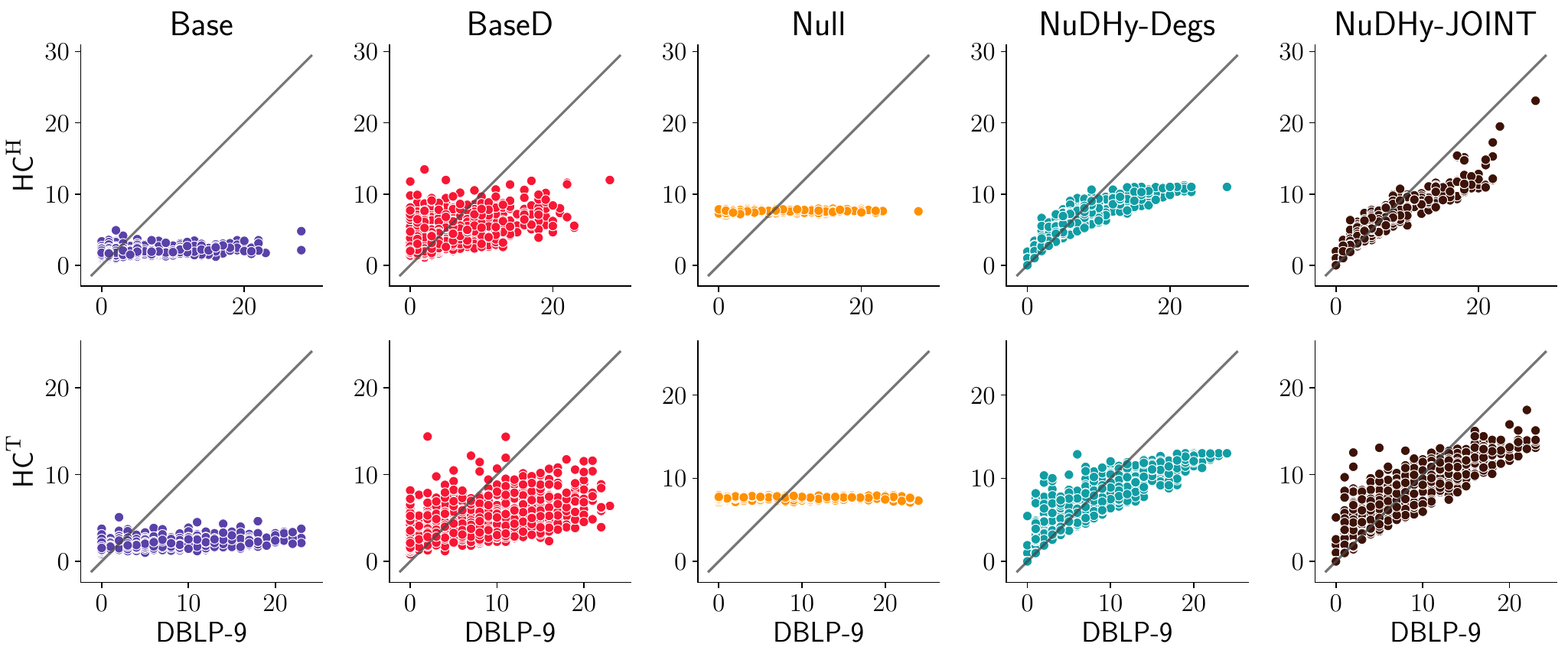}
    \end{subfigure}
    \begin{subfigure}{\linewidth}
      \centering
      \caption{\textsc{math}}
      \includegraphics[width=.9\linewidth]{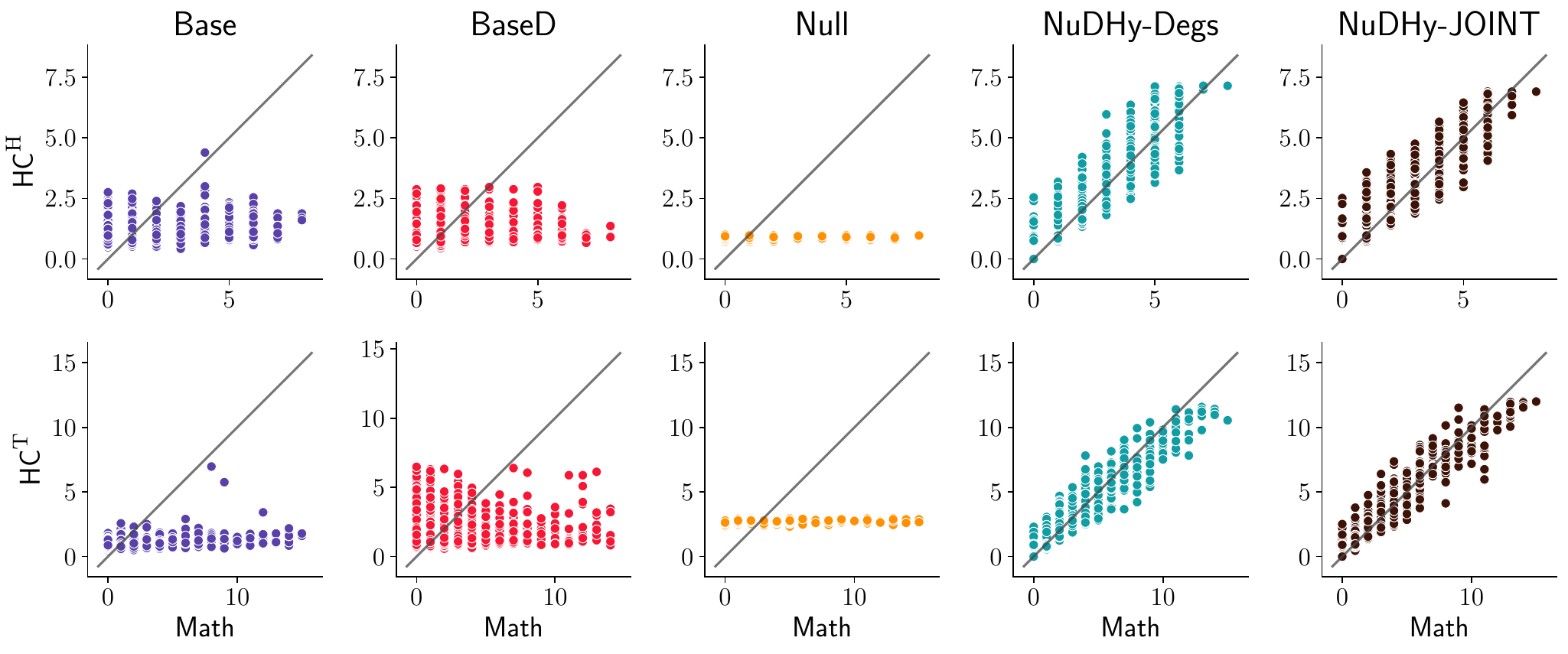}
    \end{subfigure}
    \caption{Scatterplot of $\mathsf{HC}^\textsc{h}$ (top) and $\mathsf{HC}^\textsc{t}$ (bottom) scores measured in the observed hypergraph (x-axis) and averages in $33$ samples (y-axis), for each sampler.}
    \label{fig:scatter_hyper}
\end{figure*}

\Cref{fig:scatter_hyper} compares the observed $\mathsf{HC}^\textsc{h}$ (top) and $\mathsf{HC}^\textsc{t}$ (bottom) scores with the mean scores in the samples, for each sampler. We observe a distinct trend for \algoA and \algoC.
The distribution of mean scores for \algoC more closely follows the distribution of observed scores, while the mean scores for \algoA tend to be higher for nodes with low observed scores and lower for nodes with high observed scores. 
This suggests that \algoA disrupts higher-order relations more than \algoC does. 
Since \nullm populates the hyperedges with nodes chosen uniformly at random, it leads to similar hyper-coreness values for all nodes. 
Finally, \base and \based may generate a different number of hyperedges and, on average, maintain the degree distributions but not the degree sequences. Consequently, nodes with high degrees in the observed hypergraph may exhibit low degrees in the samples, and vice versa. This explains the consistent wide range of mean hyper-coreness values for each observed hyper-coreness value.

\begin{table}[!th]  
    \centering
    \caption{Mean and standard deviation (in parentheses) of the Normalized Discounted Cumulative Gain (nDCG) of the rankings of the vertices obtained according to the $\textsc{h}$-hyper-coreness and the $\textsc{t}$-hyper-coreness in $33$ samples, for each dataset and each sampler. To ensure a fair comparison, we did not normalize the values by $k_m^\textsc{h}$ and $k_m^\textsc{t}$ as in \Cref{eq:hyper-coreness}.}
    \label{tab:hyper}
  \resizebox{\linewidth}{!}{
      \begin{tabular}{lcccccccccc}
      \toprule
    & \multicolumn{2}{c}{\base} & \multicolumn{2}{c}{\based} & \multicolumn{2}{c}{\nullm} & \multicolumn{2}{c}{\algoA} & \multicolumn{2}{c}{\algoC}\\
    \midrule
    & $\mathsf{HC}^\textsc{h}$ & $\mathsf{HC}^\textsc{t}$ & $\mathsf{HC}^\textsc{h}$ & $\mathsf{HC}^\textsc{t}$ & $\mathsf{HC}^\textsc{h}$ & $\mathsf{HC}^\textsc{t}$ & $\mathsf{HC}^\textsc{h}$ & $\mathsf{HC}^\textsc{t}$ & $\mathsf{HC}^\textsc{h}$ & $\mathsf{HC}^\textsc{t}$\\
    \midrule
    \multirow{2}{*}{\textsc{ecoli}} & 0.844 & 0.786 & 0.873 & 0.815 & 0.809 & 0.743 & 0.969 & 0.944 & 0.970 & 0.949\\
    & (0.014) & (0.023) & (0.012) & (0.018) & (0.010) & (0.008) & (0.006) & (0.013) & (0.005) & (0.018)\\
    \cline{2-11}
    \multirow{2}{*}{\textsc{iaf1260b}} & 0.821 & 0.867 & 0.842 & 0.877 & 0.829 & 0.870 & 0.974 & 0.978 & 0.973 & 0.980 \\
    & (0.006) & (0.004) & (0.005) & (0.005) & (0.005) & (0.003) & (0.003) & (0.003) & (0.002) & (0.002)\\
    \cline{2-11}
    \multirow{2}{*}{\textsc{ijo1366}} &  0.837 & 0.872 & 0.851 & 0.869 & 0.833 & 0.872 & 0.976 & 0.980 & 0.976 & 0.982\\
    & (0.005) & (0.004) & (0.002) & (0.003) & (0.005) & (0.003) & (0.002) & (0.002) & (0.003) & (0.002)\\
    \cline{2-11}
    \multirow{2}{*}{\textsc{cit-sw}} & 0.857 & 0.887 & 0.892 & 0.907 & 0.841 & 0.875 & 0.984 & 0.984 & 0.987 & 0.986\\
    & (0.003) & (0.002) & (0.002) & (0.001) & (0.002) & (0.001) & (0.001) & (0.001) & (0.001) & (0.001)\\
    \cline{2-11}
    \multirow{2}{*}{\textsc{math}} & 0.872 & 0.798 & 0.863 & 0.764 & 0.862 & 0.763 & 0.986 & 0.987 & 0.986 & 0.988\\
    & (0.001) & (0.002) & (0.001) & (0.001) & (0.001) & (0.002) & (0.001) & (0.001) & (0.001) & (0.001)\\
    \cline{2-11}
    \multirow{2}{*}{\textsc{dblp-9}}  & 0.857 & 0.848 & 0.898 & 0.892 & 0.838 & 0.832 & 0.984 & 0.984 & 0.993 & 0.986\\
    & (0.002) & (0.002) & (0.002) & (0.001) & (0.001) & (0.002) & (0.001) & (0.001) & (0.000) & (0.000)\\
    \cline{2-11}
    \multirow{2}{*}{\textsc{enron}} & 0.773 & 0.822 & 0.789 & 0.899 & 0.732 & 0.796 & 0.973 & 0.987 & 0.971 & 0.988\\
    & (0.004) & (0.002) & (0.005) & (0.001) & (0.003) & (0.002) & (0.001) & (0.001) & (0.001) & (0.001)\\
    \cline{2-11}
    \multirow{2}{*}{\textsc{ord}} & $-$ & $-$ & 0.915 & 0.952 & 0.886 & 0.951 & 0.992 & 0.986 &  0.994 & 0.988\\
    & $-$ & $-$ & (0.000) & (0.000) & (0.001) & (0.000) & (0.000) & (0.000) & (0.000) & (0.000)\\
    \bottomrule
    \end{tabular}
  }
\end{table}

While variations exist in the individual hyper-coreness values, \Cref{tab:hyper} reveals that the Normalized Discounted Cumulative Gain (nDCG) for ranking the vertices based on mean scores compared to the rankings derived from the observed scores is consistently high across all samplers. 
In particular, both \algoA and \algoC exhibit nearly perfect scores, signifying their ability to uphold the positions of the top-ranked nodes. 
In contrast, \nullm generally yields the lowest gains in nDCG, primarily due to the preservation of a smaller number of properties from the observed hypergraph.

\subsection{Structural Entropy}

We capture the probabilistic structure of a hypergraph $H$ at different granularity levels, by measuring the probability that a set of nodes of a certain size that is in the head (resp. tail) of some hyperedge in $H$ is also found in the head (resp. tail) of a hyperedge in a randomly sampled hypergraph from our null models $\nameA$ and $\nameC$.
The following approach is adapted from~\cite{young2021hypergraph}.
For each $j \in \{1,2\}$, this probability can be expressed as follows:
\[
\mathbb{P}(S|\nullset_j) = \frac{1}{n}\sum_{i=1}^n X_S(H_i)\,,
\]
where each $H_i$ is a hypergraph sampled from $\nullset_j$, and $X_S(H_i)$ takes value $1$ if the set of nodes $S$ appears in the head (resp. tail) of a hyperedge in $H_i$, and $0$ otherwise.
To quantify the entropy of these probabilities, denoted as $\hat{p}$, we use the following expression:
\begin{equation}\label{eq:entropy}
S(\hat{p}) = -\hat{p}\log(\hat{p}) - (1-\hat{p})\log(1-\hat{p})\,.
\end{equation}
This measure quantifies the variability of the hyperedges across the sampled hypergraphs, as it grows as $\hat{p}$ moves away from the extreme values $0$ (the node group appears in no sample) and $1$ (the node group appears in each sample).

\begin{figure}[!ht]
    \centering
    \begin{subfigure}{\linewidth}
      \centering
      \includegraphics[width=\linewidth]{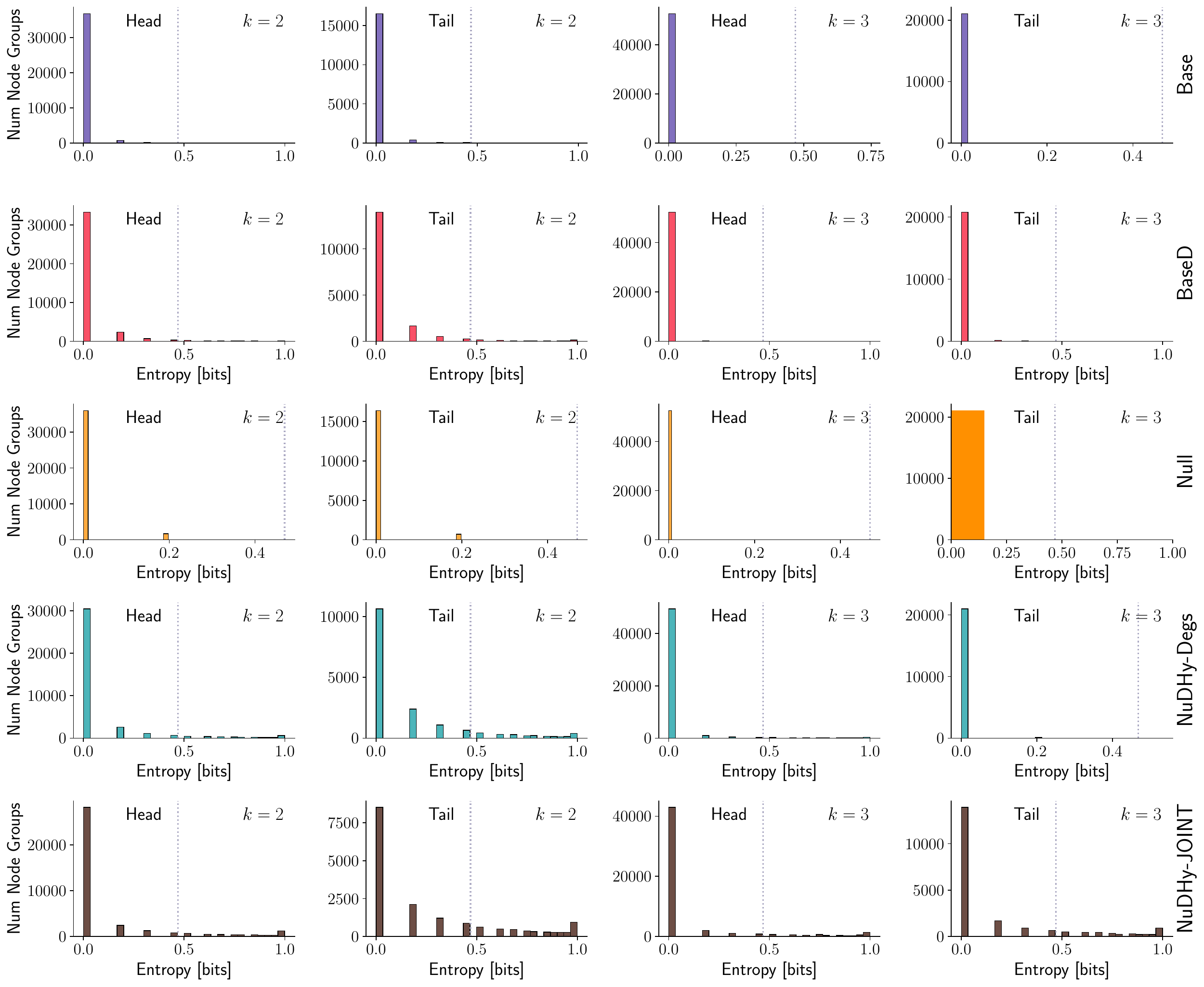}
    \end{subfigure}
    \caption{\textsc{dblp-9}: entropy of the head and tail node group probabilities, for groups of size $k=2$ and $k=3$, for each sampler.}
    \label{fig:entropy_dblp}
\end{figure}

To assess the degree of randomness in real-world hypergraph interaction processes, we investigate the likelihood that groups of nodes, engaging in interactions within the observed hypergraph (either in heads or tails), also participate in similar interactions within samples generated by the various samplers.
\Cref{fig:entropy_dblp} shows the distribution of entropy values for these probabilities (\Cref{eq:entropy}), for groups of sizes $k=2$ and $k=3$ in the heads (only for \textsc{dblp-9}) and in the tails. 
We define a \emph{certainty} threshold at $0.1$ (i.e, 10\% of the samples) and classify as uncertain all the groups with entropy greater than $S(0.1) \approxeq 0.469$ (green dotted line in the plots). 
As we can see, the majority of groups exhibit a high degree of certainty (i.e., they are either almost always present or absent in the samples), especially in the tails. 
This lower uncertainty may be attributed to the fact that, in most samples and for various group sizes, the probability of observing a specific group in the tail is close to zero, resulting in low entropy values. This trend is particularly prominent for the samples generated by \base and \based, because they struggle to accurately preserve the appearance of larger-sized groups.

\Cref{fig:probs_ord} further highlights that associations involving larger group sizes are disrupted by the samplers.
This figure shows the empirical cumulative distribution (ECDF) of the head node group probabilities, for groups of size $k=2$ (left charts) and $k=3$ (right charts), for \textsc{ecoli} and \textsc{ord}.
\nullm does not preserve any group of size $3$ and preserves the lowest amount of groups of size $2$.
\base follows a similar behavior, achieving slightly higher node group probabilities in the heads.
Among the competitors, \based is the most performing, as it preserves, on average, the number of appearances of each node group in the samples.
Finally, \algoC is the only sampler able to generate random hypergraphs preserving the appearance of groups of size $3$ with high probability.
The fact that \algoA struggles to preserve such appearances indicates the presence of a preference mechanism underlying the network, rather than these associations being mere consequences of node degrees.

\begin{figure}[!th]
    \centering
    \begin{subfigure}{.51\linewidth}
      \centering
      \includegraphics[width=\linewidth]{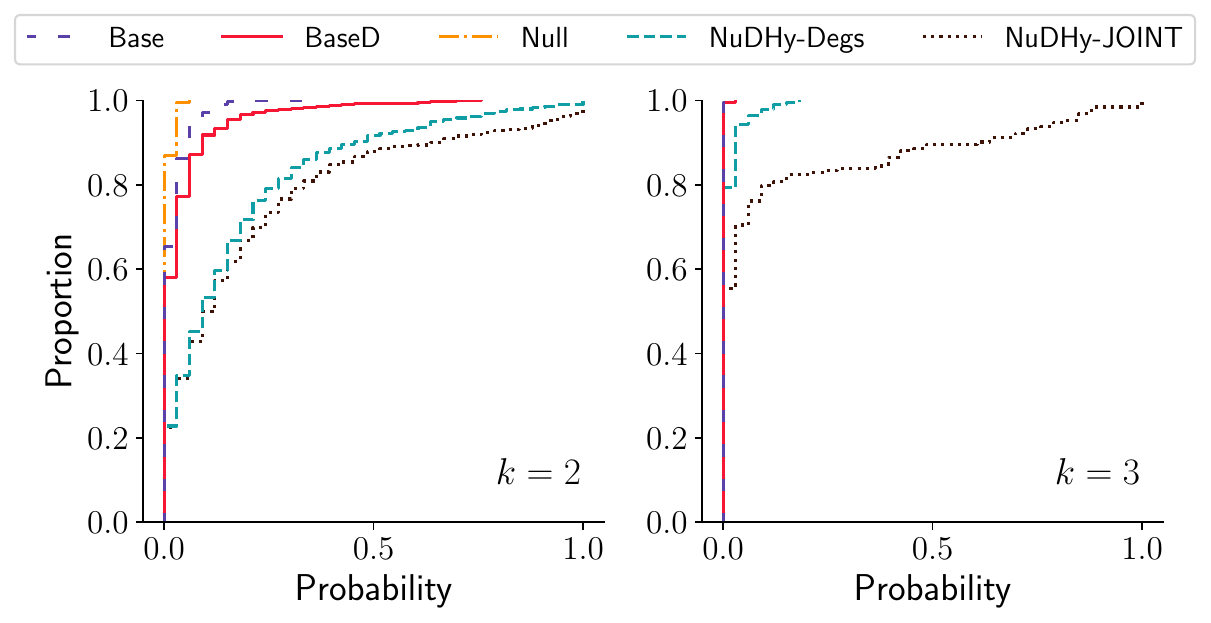}
    \end{subfigure}
    \hfill
    \begin{subfigure}{.48\linewidth}
      \centering
      \includegraphics[width=\linewidth]{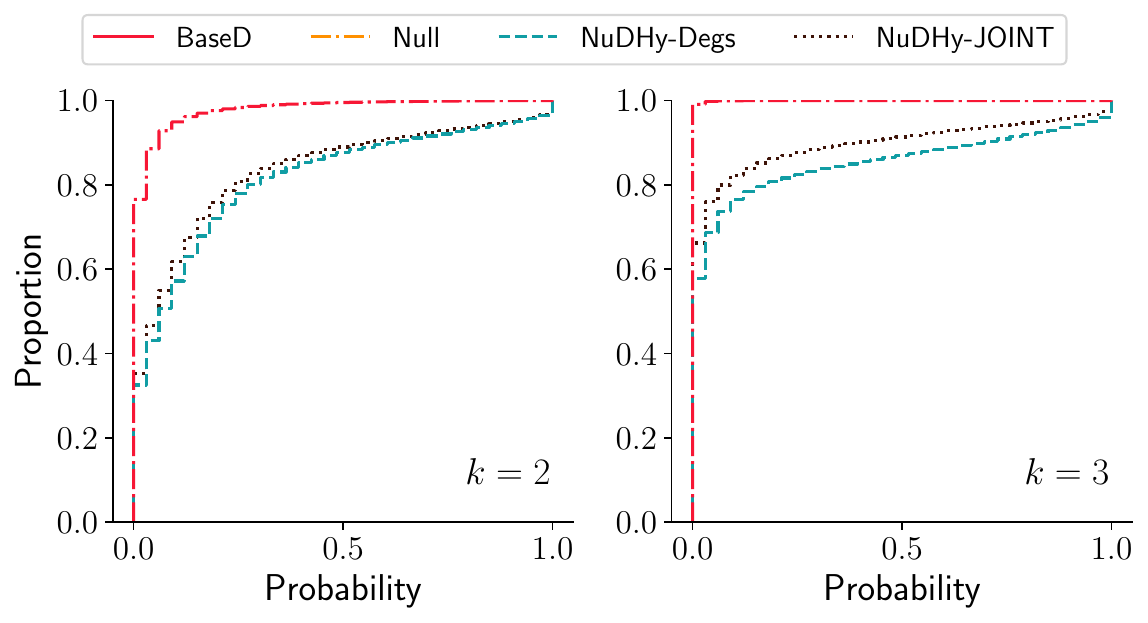}
    \end{subfigure}
    \caption{ECDF of the heads node group probabilities, for groups of size $k=2$ (left chart) and $k=3$ (right chart), for each sampler, for \textsc{ecoli} (left figure) and \textsc{ord} (right figure).}
    \label{fig:probs_ord}
\end{figure}

\subsection{Multi-order Spectrum}

Let $\hat{H} = (V, U)$ be the undirected hypergraph obtained by merging the head and tail of each directed hyperedge in the directed hypergraph $H \doteq (V, E)$. 
The \emph{generalized Laplacian of order $d$} of $\hat{H}$ is $L^{(d)} \doteq d K^{(d)} - A^{(d)}$, where $K^{(d)}$ is the degree matrix of order $d$, i.e., a diagonal matrix where each entry $K^{(d)}[v,v]$ indicates the number of hyperedges in $U$ of dimension $d$ including $v$; and $A^{(d)}$ is a symmetric matrix indicating, for each $(u,v)$, the number of hyperedges in $U$ of size $d$ containing both $u$ and $v$.
The \emph{multi-order Laplacian} of $\hat{H}$~\cite{lucas2020multiorder} is defined as the weighted sum of the Laplacian matrices of order $d$, i.e.,
\[
L^{(\mathrm{mul})} \doteq \sum_{d=2}^D \frac{\omega(d)}{\langle K^{(d)} \rangle}L^{(d)}\,,
\]
where $\omega(d)$ is a function to account for the significance of the different orders, $\langle K^{(d)} \rangle$ indicates the average degree of order $d$, and $D \in [1,|V|]$.
The \emph{spectrum} of $L^{(\mathrm{mul})}$ is the list of its eigenvalues $\Lambda = [\Lambda_1, \ldots, \Lambda_n]$, sorted in descending order.
We note that in this analysis, we do not consider the Laplacian matrix of the directed hypergraph $H$ as, in general, it is not symmetric, and thus, the eigenvalues may not be all real numbers.

The spectrum plays an important role in (hyper)network analysis, as it reveals information about the connectivity and robustness of the corresponding hyper(graph).
Similarity between two spectra implies similarity in structural properties and connectivity of the corresponding hypergraphs.

Given two hypergraphs $H_1$ and $H_2$, we extracted their top-$k$ smallest eigenvalues $\Lambda^{1,k}$ and $\Lambda^{2,k}$ and measured the \emph{spectral distance} between $H_1$ and $H_2$ as follows~\cite{jovanovic2012spectral,gu2015spectral}:
\[
\mathsf{sd}(H_1, H_2) \doteq \frac{1}{k}||\Lambda^{1,k} - \Lambda^{2,k}||_2\,.
\]

Low distance values indicate that the two hypergraphs are structurally similar, i.e., they are characterized by similar structural patters, local connectivity, and community structure.

We limited our analysis to the top-$6$ eigenvalues due to the computational complexity of the eigen-decomposition process, and we set $D$ equal to the minimum between the $8$ and the maximum dimension of a hyperedge in the original hypergraph.

\begin{figure}[!th]
    \centering
    \begin{subfigure}{\linewidth}
      \centering
      \includegraphics[width=\linewidth]{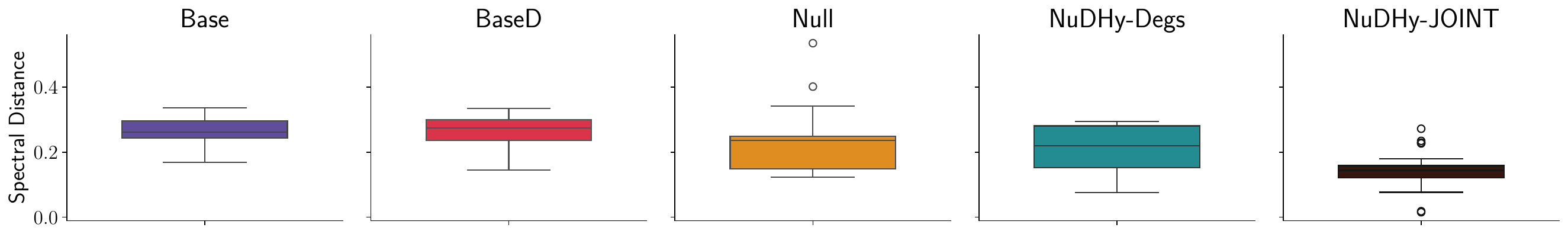}
      \caption{\textsc{ecoli}}
    \end{subfigure}
    \begin{subfigure}{\linewidth}
      \centering
      \includegraphics[width=\linewidth]{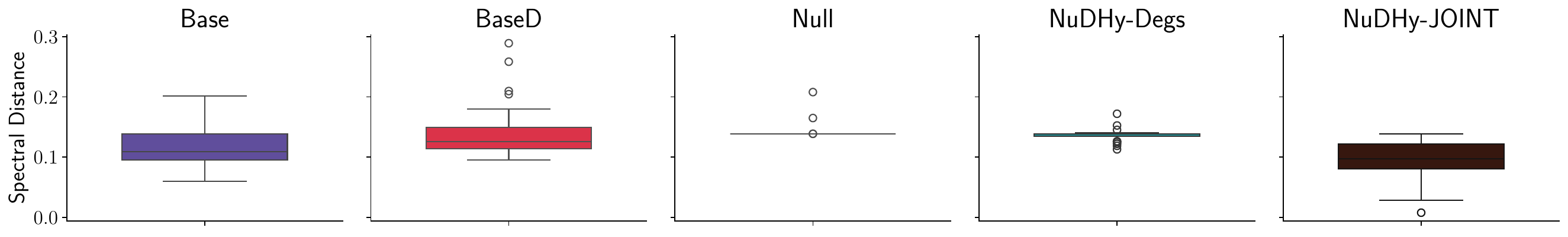}
      \caption{\textsc{ijo1366}}
    \end{subfigure}
    \begin{subfigure}{\linewidth}
      \centering
      \includegraphics[width=\linewidth]{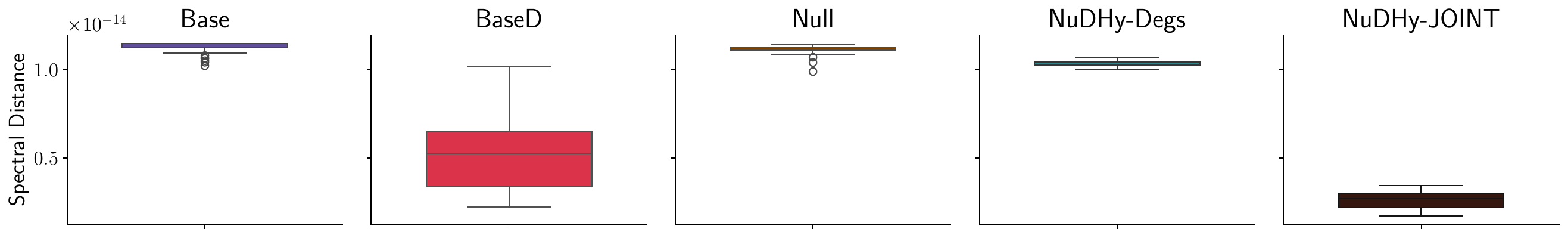}
      \caption{\textsc{enron}}
    \end{subfigure}
    \begin{subfigure}{\linewidth}
      \centering
      \includegraphics[width=\linewidth]{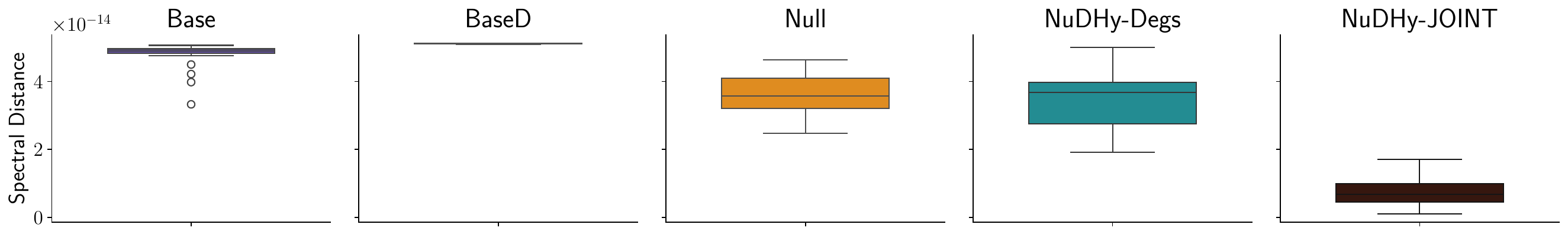}
      \caption{\textsc{math}}
    \end{subfigure}
    \caption{Distribution of the spectral distances between the spectrum of the observed hypergraph and that of $33$ samples, for each sampler, for \textsc{ecoli}, \textsc{ijo1366}, \textsc{enron}, and \textsc{math}.}
    \label{fig:lapl_spectral}
\end{figure}

\Cref{fig:lapl_spectral} shows the distribution of spectral distances across $33$ samples for each sampler, for four datasets.
We observe that \algoC consistently exhibit stable performance across all datasets, resulting in smaller distances between the eigenvalue vectors of the multi-order Laplacian, compared to those observed for \base, \based, and \nullm.
On the other hand, \algoA often achieves distances comparable to those measured for the samples generated by \nullm, indicating that the degree distribution alone does not preserve the community structure of the original hypergraph.

Interestingly, \based, which maintains node degrees on average, produces samples exhibiting spectral distance from the original hypergraph more akin to that observed in samples generated by \base (which preserves group degrees), rather than those generated by \algoA, despite \algoA exactly preserving the same properties. This difference may be attributed to the constraint imposed by the design of \base and \based, limiting the size of each hyperedge to $10$.

\subsection{Party Homophily in the US Congress}\label{ax:homophily}

\Cref{fig:homophily_s_h} shows the extent to which Republicans' and Democrats' observed homophily exceeds what would be expected if co-sponsoring relations formed randomly among legislators.
Consistently with prior research~\cite{grossmann2015ideological}, we observe that both parties exhibit an inclination toward associating with similar party members in co-sponsorship relations.
Then, we note that the inverse relationship between the curves of Republicans and Democrats, akin to the affinity plots, remains discernible, a pattern that remains unnoticed when solely examining the values of $\mathsf{m}$ measured in the observed hypergraphs.

\begin{figure*}[h!]
    \begin{subfigure}{.487\linewidth}
      \centering
      \includegraphics[width=\linewidth]{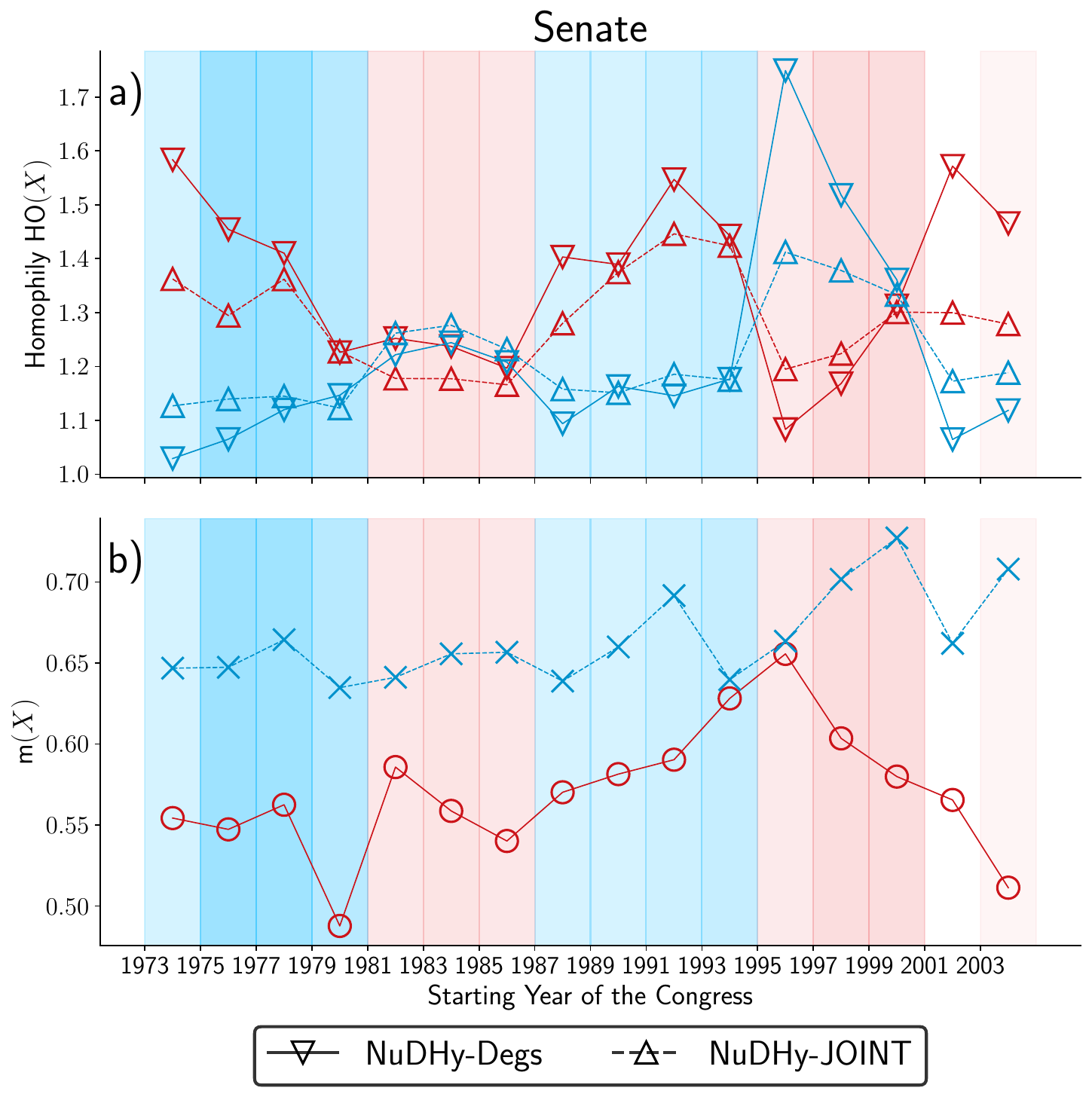}
    \end{subfigure}
    \begin{subfigure}{.495\linewidth}
      \centering
      \includegraphics[width=\linewidth]{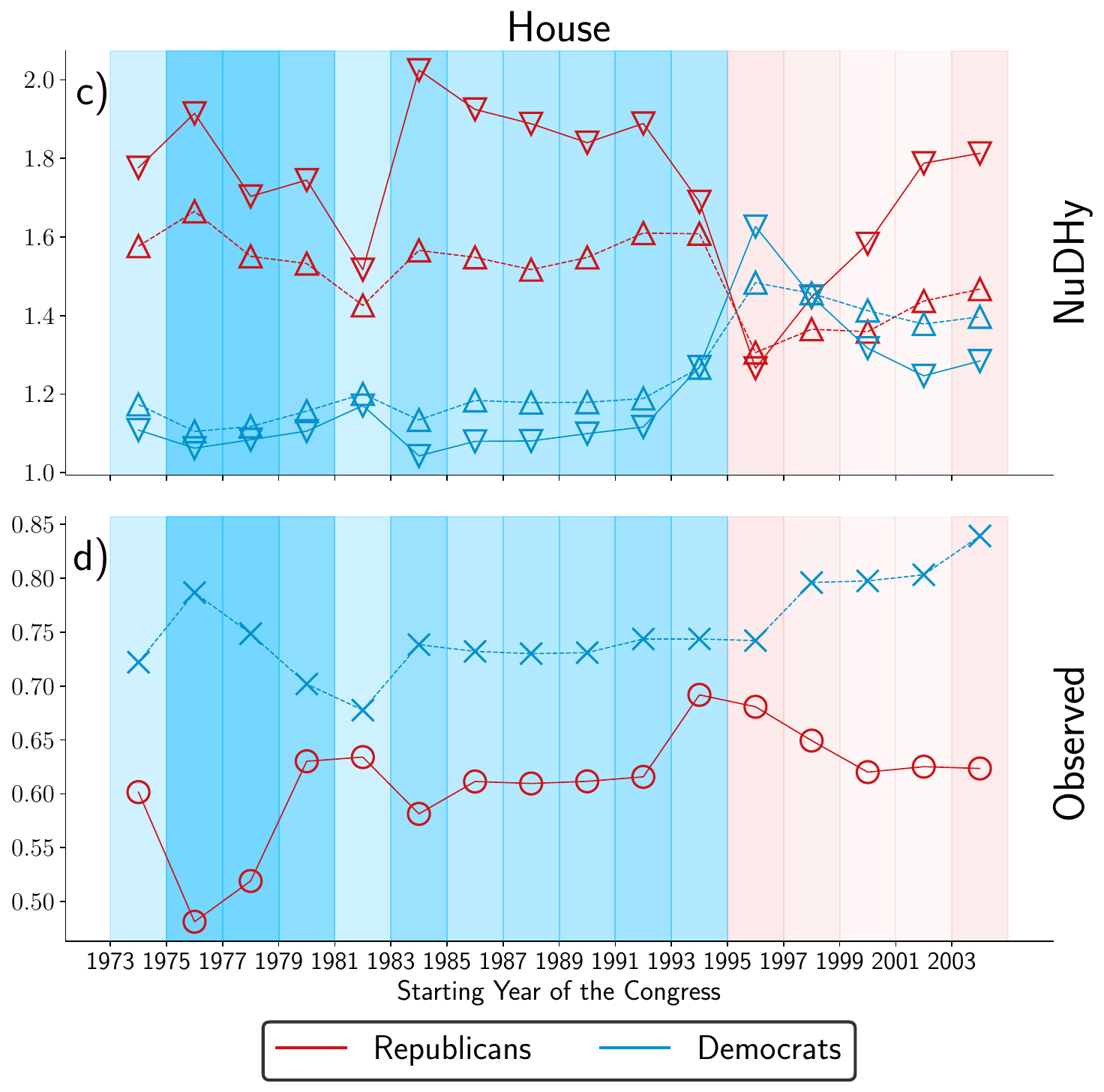}
    \end{subfigure}
    \caption{\textbf{Homophily in the US Congress co-sponsored bills.}
    We show results for \Cref{eq:homo} using $33$ samples generated by \algoA and \algoC for the US Senate (\textsc{S-bills)}, panel (a)) and House (\textsc{H-bills)}, panel (c)). 
    For comparison, we show the values of $\mathsf{m}(X)$ again for the US Senate (panel (b)) and House (panel (d)). 
    The colors indicate \emph{Democrats} (blue) and \emph{Republicans} (red).}
    \label{fig:homophily_s_h}
\end{figure*}

\subsection{Econometric Competitiveness: Analysis for Other Trade Networks}\label{ax:eco_other}

\Cref{tbl:ranks_genepy_eci_2} reports the mean Spearman's correlation and Kendall's Tau of the rankings of the countries obtained according to the GENEPY/ECI scores in $33$ samples generated by each sampler with respect to the rankings obtained from the observed GENEPY/ECI scores, for \textsc{hs1995}, \textsc{hs2009},and \textsc{hs2020}. 
The performances of the samplers are consistent through the various years, with \algoA and \algoC achieving the most correlated rankings with respect to the GENEPY scores. Regarding ECI, only \algoC leads to rankings that resemble that obtained from the observed scores.

\begin{table}[!ht]
  \caption{Mean Spearman's correlation and Kendall's Tau (KT) of the rankings of the countries obtained according to the ECI/Fitness/GENEPY scores in $33$ samples with respect to the rankings obtained from the observed scores. Standard deviations are reported in parentheses.}
  \label{tbl:ranks_genepy_eci_2}
  \resizebox{\columnwidth}{!}{
  \begin{tabular}{llcccccc}
  \textbf{Score} & \textbf{Metric} & \textbf{Year} & \base & \based & \nullm & \algoA & \algoC \\
  \midrule
  \multirow{6}{*}{ECI} & \multirow{3}{*}{Spearman} 
    & 1995 & -0.183 (0.143)  & -0.062 (0.118) & 0.011 (0.071) & 0.031 (0.112) & 0.947 (0.004)\\
  & & 2009 & -0.251 (0.143) & -0.110 (0.096) & -0.008 (0.079) & 0.009 (0.094) & 0.953 (0.002)\\
  & & 2020 & -0.223 (0.161) & -0.070 (0.135) & 0.004 (0.102) & 0.009 (0.121) & 0.978 (0.001)\\
  \cmidrule{3-8}
  & \multirow{3}{*}{KT}
    & 1995 & -0.119 (0.094) & -0.042 (0.079) & 0.008 (0.047) & 0.022 (0.076) & 0.817 (0.005)\\
  & & 2009 & -0.167 (0.097) & -0.074 (0.064) & -0.005 (0.054) & -0.007 (0.065) & 0.817 (0.006)\\
  & & 2020 & -0.152 (0.109) & -0.047 (0.090) & 0.003 (0.068) & 0.005 (0.082) & 0.878 (0.004)\\
  \midrule
  \multirow{6}{*}{Fitness} & \multirow{3}{*}{Spearman} 
    & 1995 & 0.162 (0.093) & 0.314 (0.057) & -0.001 (0.082) & 0.981 (0.001) & 0.997 (0.000)\\
  & & 2009 & 0.117 (0.086) & 0.272 (0.051) & 0.036 (0.076) & 0.979 (0.001) & 0.999 (0.000)\\
  & & 2020 & 0.137 (0.082) & 0.360 (0.054) & 0.020 (0.097) & 0.977 (0.001) & 0.998 (0.000)\\
  \cmidrule{3-8}
  & \multirow{3}{*}{KT}
    & 1995 & 0.110 (0.062) & 0.215 (0.042) & 0.000 (0.054) & 0.882 (0.004) & 0.962 (0.002)\\
  & & 2009 & 0.078 (0.059) & 0.182 (0.036) & 0.025 (0.050) & 0.884 (0.003) & 0.963 (0.001)\\
  & & 2020 & 0.091 (0.056) & 0.245 (0.037) & 0.014 (0.066) & 0.881 (0.003) & 0.964 (0.001) \\
  \midrule
  \multirow{6}{*}{GENEPY} & \multirow{3}{*}{Spearman} 
    & 1995 & 0.110 (0.101) & 0.312 (0.057) & 0.002 (0.100) & 0.953 (0.005) & 0.995 (0.000)\\
  & & 2009 & 0.083 (0.083) & 0.264 (0.050) & 0.034 (0.088) & 0.931 (0.004) & 0.994 (0.000)\\
  & & 2020 & 0.099 (0.078) & 0.348 (0.054) & 0.007 (0.097) & 0.941 (0.004) & 0.994 (0.000)\\
  \cmidrule{3-8}
  & \multirow{3}{*}{KT} 
    & 1995 & 0.074 (0.069) & 0.215 (0.041) & 0.002 (0.067) & 0.819 (0.009) & 0.944 (0.002)\\
  & & 2009 & 0.057 (0.056) & 0.178 (0.036) & 0.023 (0.060) & 0.790 (0.007) & 0.941 (0.002)\\
  & & 2020 & 0.065 (0.053) & 0.237 (0.038) & 0.004 (0.064) & 0.808 (0.007) & 0.942 (0.002)\\
  \bottomrule
  \end{tabular}
  }
\end{table}

\Cref{fig:avg_1995,fig:avg_2009,fig:avg_2020} illustrate the distribution of the rankings based on ECI/Fitness/GENEPY values across $33$ samples for \algoA (left) and \algoC (right) in comparison to the observed rankings for \textsc{hs1995}, \textsc{hs2009}, and \textsc{hs2020}.
In all cases, for \algoC, the distribution of rankings based on GENEPY closely mirrors the observed ranking, with noticeable deviations for the countries positioned in the middle of the observed ranking.
Conversely, for \algoA, deviations are more pronounced in the first positions of the rankings, but the overall trend aligns with the observed ranking.

Examining the rankings according to ECI reveals a different pattern. \algoC consistently produces distributions that closely align with the observed rankings. In contrast, \algoA generates diverse ranks for countries, resulting in mean rankings significantly different from the observed ones.

These findings suggest that the joint degree distribution significantly affects the relative GENEPY scores, whereas other underlying patterns in the data contribute to the relative ECI scores.
The degree distributions, which are weaker constraints than the joint distribution, fail to capture either the relative GENEPY or the relative ECI score distributions.

\begin{figure*}[!th]
    \centering
    \includegraphics[width=\linewidth]{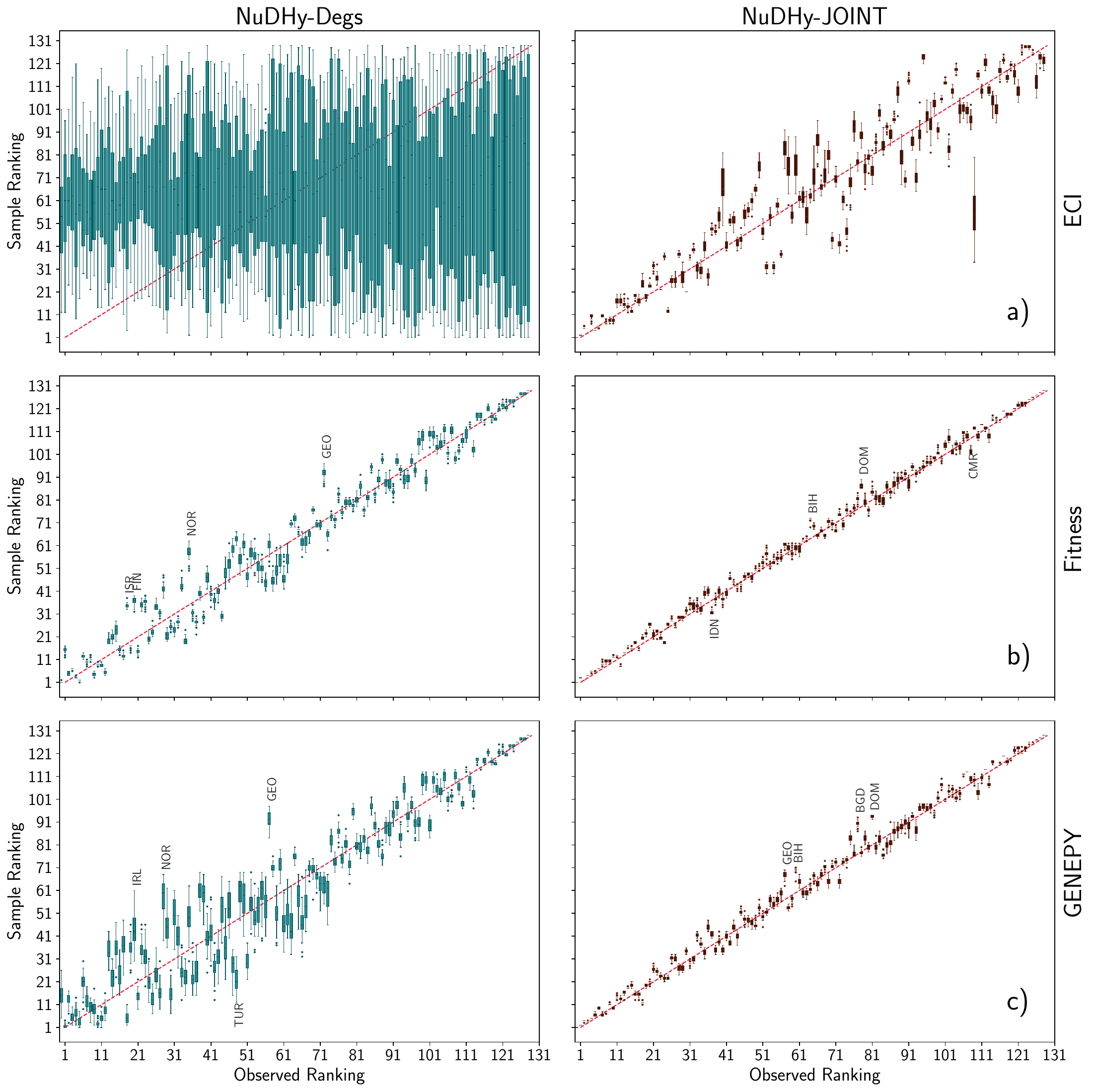}
    \caption{\textsc{hs1995}: Distributions of rankings obtained according to the ECI/Fitness/GENEPY values in $33$ samples for \algoA (left) and \algoC (right) compared to the observed rankings.
    The top 4 countries whose observed rank diverges the most from the sample mean ranking are annotated in a subset of the plots.}
    \label{fig:avg_1995}
\end{figure*}

\begin{figure*}[!th]
    \centering
    \includegraphics[width=\linewidth]{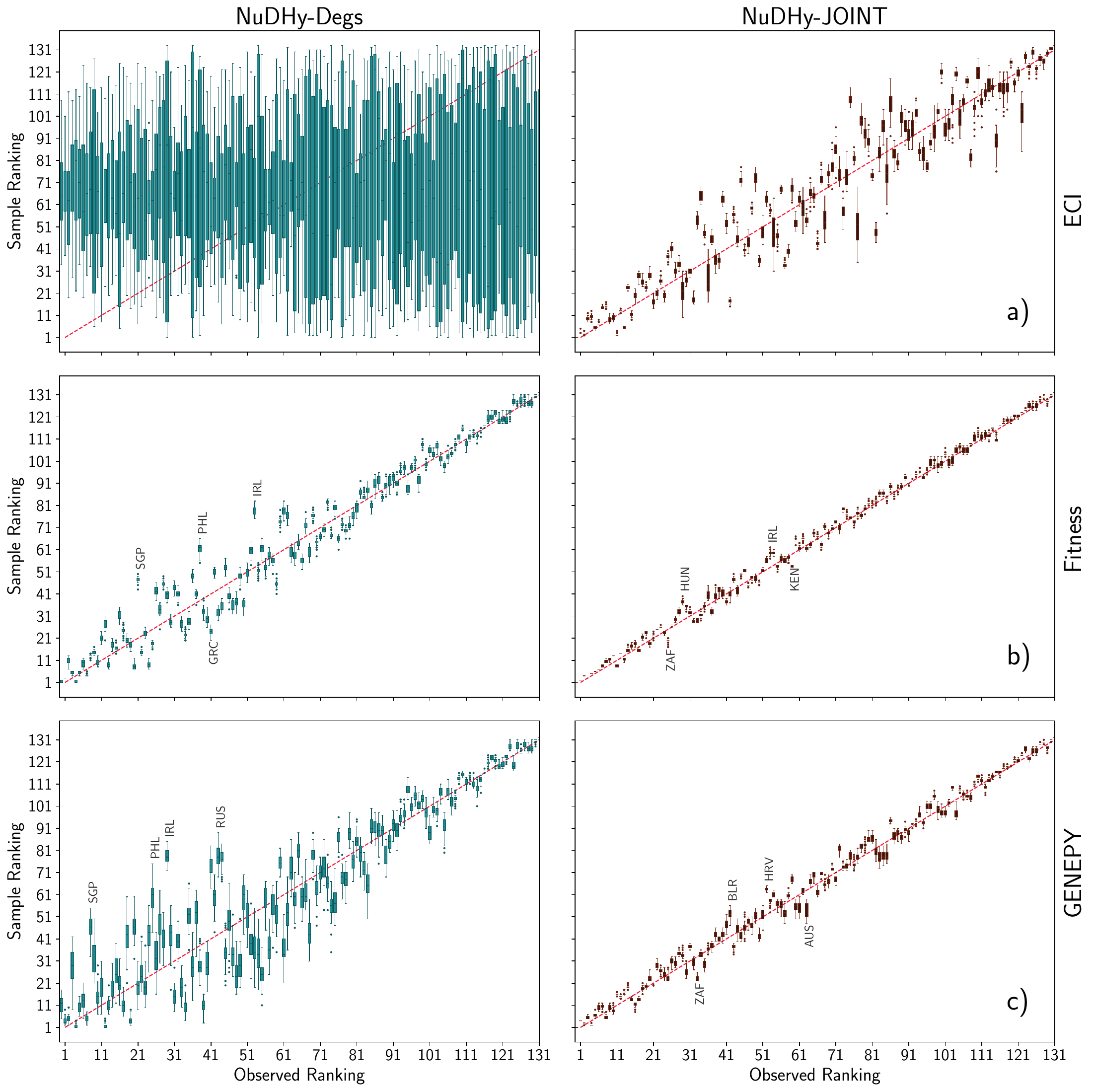}
    \caption{\textsc{hs2009}: Distributions of rankings obtained according to the ECI/Fitness/GENEPY values in $33$ samples for \algoA (left) and \algoC (right) compared to the observed rankings.
    The top 4 countries whose observed rank diverges the most from the sample mean ranking are annotated in a subset of the plots.}
    \label{fig:avg_2009}
\end{figure*}

\begin{figure*}[!th]
    \centering
    \includegraphics[width=\linewidth]{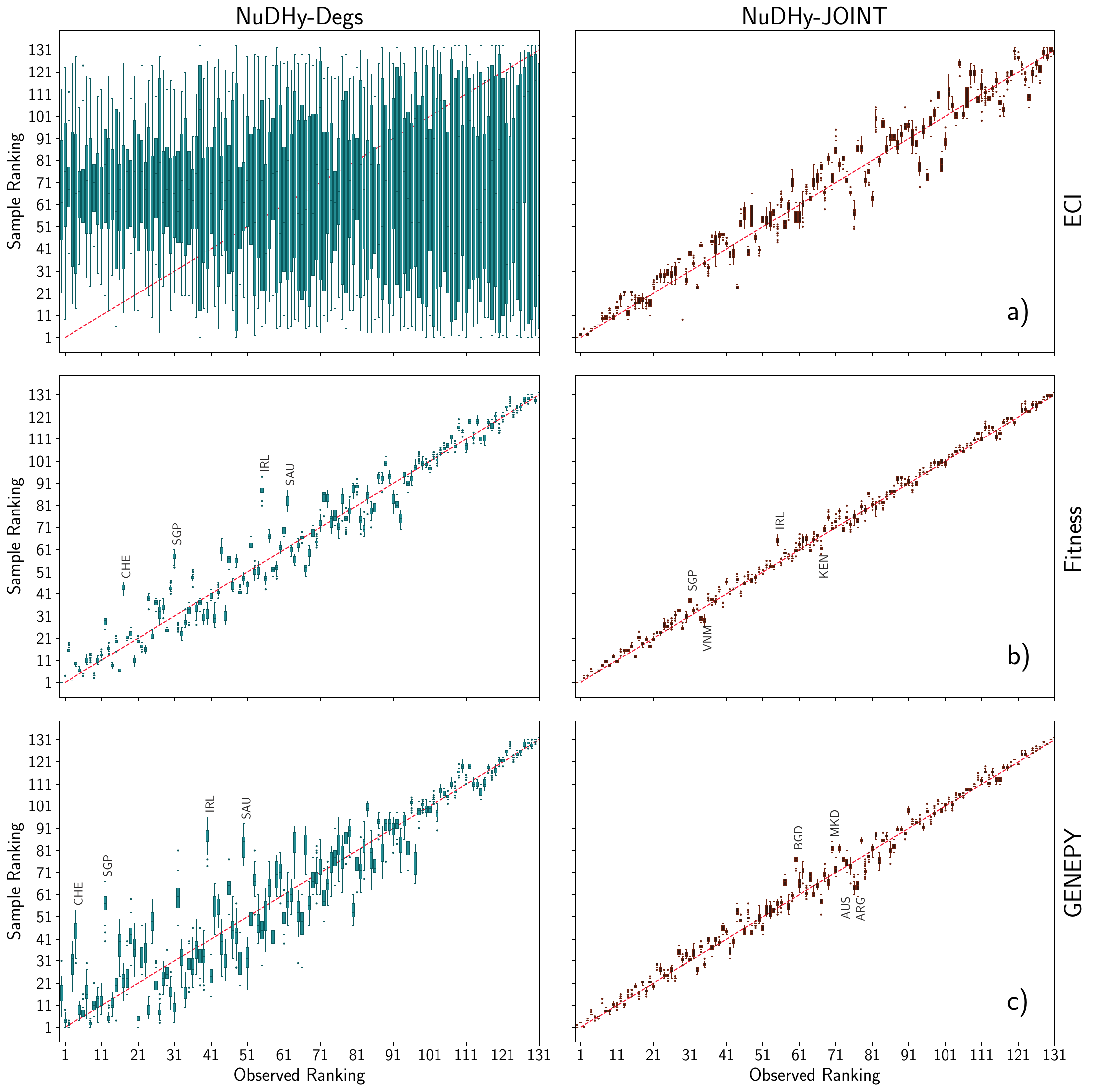}
    \caption{\textsc{hs2020}: Distributions of rankings obtained according to the ECI/Fitness/GENEPY values in $33$ samples for \algoA (left) and \algoC (right) compared to the observed rankings.
    The top 4 countries whose observed rank diverges the most from the sample mean ranking are annotated in a subset of the plots.}
    \label{fig:avg_2020}
\end{figure*}

\end{document}